\documentclass[11pt,a4paper]{article}
\usepackage[margin=1in]{geometry}
\usepackage[english]{babel}
\usepackage[ruled,linesnumbered,vlined]{algorithm2e}
\usepackage{algpseudocode}
\usepackage{amsmath}
\usepackage{amssymb}
\usepackage{amsthm}
\usepackage{array}
\usepackage{booktabs}
\usepackage{framed}
\usepackage{graphicx}
\usepackage{float}
\usepackage{makecell}
\usepackage{mathtools}
\usepackage{multirow}
\usepackage{subcaption}
\usepackage{tcolorbox}
\usepackage[table]{xcolor}
\usepackage{tikz}
\usetikzlibrary{arrows.meta,positioning}
\usepackage[colorlinks=true, allcolors=blue]{hyperref}
\usepackage{cleveref}
\usetikzlibrary{arrows.meta,patterns,positioning}

\usepackage{pgfplots}
\usepackage{pgfplotstable}
\pgfplotsset{
    compat=1.17,
    name nodes near coords/.style={
        every node near coord/.append style={
            name=#1-\coordindex,
            alias=#1-last,
        },
    },
    name nodes near coords/.default=coordnode,
}

\definecolor{shadecolor}{gray}{0.9}
\DeclarePairedDelimiter\floor{\lfloor}{\rfloor}

\newtheorem{theorem}{Theorem}[section]
\newtheorem{definition}[theorem]{Definition}

\newtheorem{lemma}[theorem]{Lemma}

\newtheorem{example}[theorem]{Example}

\newcommand{\E}{{\mathbb{E}}}

\newcommand{\bX}{\mathbf{X}}

\newcommand{\bc}{\mathbf{c}}

\newcommand{\costbound}[1]{\Gamma_{#1}}
\newcommand{\excess}[1]{E_{#1}}

\newenvironment{proofof}[1]{{\vspace*{5pt} \noindent\bf Proof of #1:  }}{\hfill\rule{2mm}{2mm}\vspace*{5pt}}

\title{Truthful-in-Expectation MMS Allocations for Chores}
\author{Zehan Lin \and Biaoshuai Tao \and Xiaowei Wu \and Yuhao Zhang}
\date{\today}

\begin{document}

\maketitle

\begin{abstract}
We study truthful-in-expectation (TIE) mechanisms for allocating indivisible chores alongside ex-post maximin share (MMS) guarantees.
For goods, Bu and Tao (FOCS 2025) established a $(1/n)$-approximation for TIE mechanisms, and this was substantially improved by Babaioff, Feige, and Manaker Morag (FOCS 2026), who established an $\Omega(1/\log n)$ approximation, where $n$ is the number of agents. 
The corresponding problem for chores has received less attention.
The best-known result is due to Aziz, Li, and Wu (MAPR 2024), who gave a TIE mechanism with an $O(\sqrt{\log n})$ MMS approximation guarantee that holds only in expectation.
They also established a $6/5$ lower bound for TIE mechanisms for two agents. 
In this paper, we present the first TIE mechanism for chores that achieves a constant ex-post MMS approximation guarantee.
Specifically, our mechanism guarantees an ex-post ratio of $1.97$ for any number of agents $n$, which improves to $4/3$ for $n=2$ and $3/2$ for $n=3$.
On the hardness side, we tighten the two-agent lower bound to $4/3$, showing that our mechanism is optimal for $n=2$. More generally, we establish a lower bound of $13/12$ on the ex-post MMS approximation ratio achievable by TIE mechanisms for every $n\ge3$.
\end{abstract}


\newpage

\section{Introduction}
Fair division, originating from the seminal work of Steinhaus~\cite{journals/econometrica/Steinhaus48}, is a fundamental problem in economics and computer science.
The central goal of the problem is to fairly allocate a set $M$ of $m$ items among a set $N$ of $n$ agents with heterogeneous preferences.
These items may be goods, which yield non-negative utilities (e.g., resources), or chores, which impose costs (e.g., tasks).
In this paper, we study the allocation of indivisible chores with additive cost functions.


Fairness notions are broadly divided into two main categories: envy-based notions, which evaluate an agent's bundle relative to those of others, and share-based notions, which evaluate an agent's bundle against a benchmark determined by her own valuation and the number of agents.
Among envy-based notions, envy-freeness (EF)~\cite{books/yale/Foley66} is the classic standard, which requires every agent to weakly prefer her own bundle to that of any other agent. 
However, EF can easily fail in the presence of indivisible chores. 
For instance, a single chore assigned to one of two agents immediately induces envy.
This limitation has led to the widely studied relaxation of envy-freeness up to one item (EF1)~\cite{conf/sigecom/LiptonMMS04}, which guarantees that any envy can be eliminated after removing at most one chore from the envious agent's bundle.
Regarding share-based notions, a fundamental requirement is proportionality (PROP), which requires that each agent receives at most $1/n$ of her total cost for all chores. 
Since indivisibility similarly makes PROP impossible to guarantee, Budish~\cite{journals/jpe/Budish11} introduced the maximin share (MMS). 
For chores, an agent's MMS is defined as the cost she can secure by partitioning the chores into $n$ bundles and receiving her least preferred one.
MMS allocations need not exist even under additive functions~\cite{journals/jacm/KurokawaPW18, conf/aaai/AzizRSW17,conf/wine/FeigeST21,misc/arxiv/EzraG26}.
Hence, a substantial body of work focuses on approximate MMS allocations, which bound how much an agent's cost may exceed her MMS guarantee.

In addition to fairness, allocation mechanisms must account for agents' incentives.
In many real-world scenarios, the preferences of agents are typically private information, and an agent may misreport them to obtain a more favorable bundle.
A mechanism is truthful if no agent can benefit from misreporting her preferences, regardless of the reports of the other agents.
Achieving both truthfulness and fairness is challenging.
For two agents with additive valuations over goods, deterministic truthful mechanisms cannot guarantee more than an $O(1/m)$ fraction of each agent's MMS~\cite{conf/sigecom/AmanatidisBCM17}.
For chores, the corresponding characterization yielded a tight deterministic approximation ratio of $2-1/\floor{m/2}$ for two agents and $m\ge2$~\cite{conf/sagt/LiTWWYZ25}.
As this ratio approaches $2$ with increasing $m$, no deterministic truthful mechanism can guarantee a ratio bounded away from $2$ even for two agents.

Randomization offers a way to obtain stronger fairness guarantees while preserving incentives in expectation.
A mechanism is truthful in expectation (TIE) if no agent can improve her expected utility, or reduce her expected cost, by misreporting.
In this work, we therefore seek to combine this incentive requirement with ex-post fairness, so that the fairness guarantee holds for every realized allocation.
For the setting of goods, Babaioff, Ezra, and Feige~\cite{conf/wine/BabaioffEF22} explicitly asked whether a TIE mechanism could guarantee each agent a constant fraction of her MMS ex-post.
Subsequent work obtained TIE mechanisms with ex-post MMS guarantees that depend on the number of agents.
Specifically, Bu and Tao~\cite{conf/focs/BuT25} designed a TIE mechanism that guaranteed each agent at least a $1/n$ fraction of her MMS ex-post.
More recently, Babaioff, Feige, and Manaker Morag~\cite{journals/corr/abs-2604-27211} gave an ordinal TIE mechanism that guaranteed each agent an $\Omega(1/\log n)$ fraction of her MMS ex-post, and obtained a $2/3$-MMS guarantee for two agents.

For chores, Aziz, Li, and Wu~\cite{journals/mp/AzizLW24} studied approximate MMS allocation using ordinal preferences, including the case of strategic agents.
They provided a TIE mechanism whose approximation ratio is $O(\sqrt{\log n})$ in expectation. 
At a high level, their randomized mechanism first distributed chores independently and uniformly at random, and then allowed agents to return chores of high cost for redistribution.
Nevertheless, their approximation guarantee is not constant, and it holds only in expectation, rather than for every realized allocation.
This directly leads to the following open question:
\begin{center}\begin{minipage}{0.98\linewidth}
\noindent\textbf{Main Question.}
\textit{Can we design a TIE mechanism for indivisible chores that guarantees a constant MMS approximation ratio for every realized allocation?}
\end{minipage}\end{center}

\subsection{Our Results}
\label{sec:introduction}
In this paper, we answer the question affirmatively by designing TIE mechanisms with constant ex-post MMS approximation guarantees.
Furthermore, our mechanisms also ensure ex-ante EF, which guarantees each agent's expected cost for her own bundle is no greater than her expected cost for any other agent's bundle.

We first present a simple mechanism with a $(2-1/n)$-MMS guarantee and then show how to improve it via the technique of \emph{nominations}. 
Specifically, the mechanism first asks each agent to report her $k$ largest chores. 
After receiving these nominations, $n-k$ randomly selected agents receive only their non-nominated chores, whereas each of the remaining $k$ agents gets one leftover chore.
For $k < n/2$, this nomination mechanism achieves an ex-post $(2-1/(n-k))$-MMS guarantee (\Cref{thm:top-avoidance}).
For two agents, a separate construction based on nominations achieves $4/3$-MMS (\Cref{thm:two-agent-nomination}), which matches our lower bound and hence achieves the optimal approximation ratio.
These guarantees improve on the $O(\sqrt{\log n})$ approximation in expectation obtained by Aziz, Li, and Wu~\cite{journals/mp/AzizLW24} and provide a constant bound for every realized allocation.

However, the nomination guarantee still approaches $2$ as $n$ grows, and hard instances suggest that nomination mechanisms cannot overcome this limitation. 
To achieve an approximation ratio strictly better than $2$, we design a randomization over allocation rules that balance large and small chores differently. 
Intuitively, an agent with a higher likelihood of being assigned large chores is compensated with fewer small chores, whereas an agent who avoids large chores absorbs a larger share of small ones. 
We coordinate these complementary rules to ensure that every agent is assigned each chore with marginal probability $1/n$ overall, which ensures both TIE and ex-ante EF.
We represent the probability distributions under each rule as fractional allocations, which are subsequently rounded into integral assignments while controlling every agent's realized cost.
This construction guarantees ex-post $1.97$-MMS for every $n\ge19$.
Combining it with the nomination mechanisms for smaller $n$ yields our main positive result.\footnote{A more involved refinement improves the bound to $15/8=1.875$. See \Cref{sec:conclusion} for a more detailed discussion.}

\begin{tcolorbox}[colback=gray!15, colframe=gray!15]
\textbf{Result 1} (Theorems~\ref{thm:top-avoidance} and~\ref{thm:1.97}).\label{Result1}
\emph{For every $n\ge2$, there exists a TIE mechanism for indivisible chores with additive costs that is ex-ante EF and ex-post $1.97$-MMS.}
\end{tcolorbox}


Our result not only shows that better-than-$2$ ex-post MMS approximation is possible for TIE mechanisms, but also demonstrates the power of randomization, as the achieved ratio strictly beats the impossibility bound of $2$ for deterministic truthful mechanisms.

To complement these positive results, we establish lower bounds for TIE mechanisms bounded away from $1$ for every $n$.

\begin{tcolorbox}[colback=gray!15, colframe=gray!15]
\textbf{Result 2} (\Cref{thm:tie-lower-bounds}).\label{Result2}
\emph{For every $n\ge2$, no TIE mechanism for indivisible chores with additive costs can guarantee an ex-post MMS approximation ratio strictly smaller than $13/12$.
For $n=2,3$, the lower bound improves to $4/3$ and $8/7$, respectively.}
\end{tcolorbox}

\Cref{fig:tie-ratios} summarizes these upper and lower bounds through $n=80$, including the switch between the two upper-bound constructions at $n=66$.
The bounds coincide for two agents, while closing the gap for $n\ge3$ remains open.

\begin{figure}[!htbp]
\centering
\begin{tikzpicture}
\begin{axis}[
  width=0.75\linewidth, height=6.5cm,
  xmin=1, xmax=81, ymin=1.03, ymax=2.10,
  xlabel={Number of agents $n$},
  ylabel={Ex-post MMS ratio},
  xtick={2,20,40,60,66,80},
  ytick={1.1,1.3,1.5,1.7,1.9},
  tick label style={font=\small},
  label style={font=\small},
  grid=major, grid style={gray!18},
  axis line style={gray!60},
  legend style={at={(0.97,0.48)},anchor=east,draw=none,fill=white,font=\small},
  legend cell align=left
]
\addplot[blue!70!black,thick,mark=*,mark size=1.2pt] coordinates {
(2,1.3333333333) (3,1.5000000000) (4,1.6666666667) (5,1.6666666667) (6,1.7500000000) (7,1.7500000000) (8,1.8000000000)
(9,1.8000000000) (10,1.8333333333) (11,1.8333333333) (12,1.8571428571) (13,1.8571428571) (14,1.8750000000) (15,1.8750000000)
(16,1.8888888889) (17,1.8888888889) (18,1.9000000000) (19,1.9000000000) (20,1.9090909091) (21,1.9090909091) (22,1.9166666667)
(23,1.9166666667) (24,1.9230769231) (25,1.9230769231) (26,1.9285714286) (27,1.9285714286) (28,1.9333333333) (29,1.9333333333)
(30,1.9375000000) (31,1.9375000000) (32,1.9411764706) (33,1.9411764706) (34,1.9444444444) (35,1.9444444444) (36,1.9473684211)
(37,1.9473684211) (38,1.9500000000) (39,1.9500000000) (40,1.9523809524) (41,1.9523809524) (42,1.9545454545) (43,1.9545454545)
(44,1.9565217391) (45,1.9565217391) (46,1.9583333333) (47,1.9583333333) (48,1.9600000000) (49,1.9600000000) (50,1.9615384615)
(51,1.9615384615) (52,1.9629629630) (53,1.9629629630) (54,1.9642857143) (55,1.9642857143) (56,1.9655172414) (57,1.9655172414)
(58,1.9666666667) (59,1.9666666667) (60,1.9677419355) (61,1.9677419355) (62,1.9687500000) (63,1.9687500000) (64,1.9696969697)
(65,1.9696969697) (66,1.9700000000) (67,1.9700000000) (68,1.9700000000) (69,1.9700000000) (70,1.9700000000) (71,1.9700000000)
(72,1.9700000000) (73,1.9700000000) (74,1.9700000000) (75,1.9700000000) (76,1.9700000000) (77,1.9700000000) (78,1.9700000000)
(79,1.9700000000) (80,1.9700000000)
};
\addlegendentry{Upper bound}
\addplot[orange!85!black,thick,dashed,mark=square*,mark size=1.1pt] coordinates {
(2,1.3333333333) (3,1.1428571429) (4,1.0909090909) (5,1.0833333333) (6,1.0833333333) (7,1.0833333333) (8,1.0833333333)
(9,1.0833333333) (10,1.0833333333) (11,1.0833333333) (12,1.0833333333) (13,1.0833333333) (14,1.0833333333) (15,1.0833333333)
(16,1.0833333333) (17,1.0833333333) (18,1.0833333333) (19,1.0833333333) (20,1.0833333333) (21,1.0833333333) (22,1.0833333333)
(23,1.0833333333) (24,1.0833333333) (25,1.0833333333) (26,1.0833333333) (27,1.0833333333) (28,1.0833333333) (29,1.0833333333)
(30,1.0833333333) (31,1.0833333333) (32,1.0833333333) (33,1.0833333333) (34,1.0833333333) (35,1.0833333333) (36,1.0833333333)
(37,1.0833333333) (38,1.0833333333) (39,1.0833333333) (40,1.0833333333) (41,1.0833333333) (42,1.0833333333) (43,1.0833333333)
(44,1.0833333333) (45,1.0833333333) (46,1.0833333333) (47,1.0833333333) (48,1.0833333333) (49,1.0833333333) (50,1.0833333333)
(51,1.0833333333) (52,1.0833333333) (53,1.0833333333) (54,1.0833333333) (55,1.0833333333) (56,1.0833333333) (57,1.0833333333)
(58,1.0833333333) (59,1.0833333333) (60,1.0833333333) (61,1.0833333333) (62,1.0833333333) (63,1.0833333333) (64,1.0833333333)
(65,1.0833333333) (66,1.0833333333) (67,1.0833333333) (68,1.0833333333) (69,1.0833333333) (70,1.0833333333) (71,1.0833333333)
(72,1.0833333333) (73,1.0833333333) (74,1.0833333333) (75,1.0833333333) (76,1.0833333333) (77,1.0833333333) (78,1.0833333333)
(79,1.0833333333) (80,1.0833333333)
};
\addlegendentry{Lower bound}
\draw[gray!65,densely dashed] (axis cs:66,1.03) -- (axis cs:66,1.97);
\node[anchor=north east,font=\small,text=blue!70!black] at (axis cs:80,2.08) {$1.97$};
\node[anchor=north east,font=\small,align=right] (switch) at (axis cs:64,1.86) {Switch at $n=66$};
\draw[->,gray!75] (switch.north east) -- (axis cs:66,1.97);
\end{axis}
\end{tikzpicture}
\caption{Upper and lower bounds on the ex-post MMS ratio by TIE mechanisms. }
\label{fig:tie-ratios}
\end{figure}

\subsection{A Technical Overview}

We provide an overview of the main techniques used in this work.

\paragraph{Fixing Marginals Before Designing the Allocation.}
Aziz, Li, and Wu~\cite{journals/mp/AzizLW24} observe that assigning every chore independently and uniformly at random gives every agent probability $1/n$ of receiving every chore.
These equal marginals make an agent's expected cost independent of her report and are therefore useful for truthfulness.
However, because the marginals arise from the random assignment, different chores remain substantially independent, and an unlucky realization may assign several costly chores to the same agent.
This also explains why their approximation ratio holds only in expectation.
We reverse this perspective.
Rather than specifying a random allocation first and deriving its marginals, we first prescribe the marginal probability of every agent--chore pair, equivalently a fractional allocation, and then decompose it into integral allocations.
This lets us coordinate the assignments of different chores so that \emph{every} allocation in the support satisfies a useful cost bound while preserving the desired incentive guarantee.
For example, under the uniform fractional allocation $x_{ie}=1/n$, sorting chores by cost and applying the Birkhoff--von Neumann decomposition ensures that each agent receives at most one chore from each consecutive block of $n$ chores, yielding the $(2-1/n)$-MMS mechanism.
The $k$-nomination mechanism uses the same principle with different marginals: nominating a chore reduces its marginal probability by the same amount, so nominating the $k$ largest chores minimizes expected cost.
We then decompose these marginals so that an agent either receives a singleton chore or avoids all nominations and receives at most one chore from each remaining block.
This yields the ex-post $\left(2-1/(n-k)\right)$-MMS guarantee.
Thus, the decomposition turns carefully chosen truthful marginals into a lottery whose entire support is well behaved.

\paragraph{Trading Fractional Cost Against Rounding Error.}
To obtain a ratio bounded away from $2$, we exploit more structure in the decomposition.
The sorted-slot analysis shows that the realized cost of an agent is controlled by two quantities.
The first is her \emph{fractional cost}.
The second is the \emph{rounding error}: roughly speaking, one additional costly chore may be contributed by the first slot, and this term is bounded by the largest chore having positive fraction.
Under the uniform fractional allocation, both quantities are at most the agent's MMS.
Consequently, the approximation ratio can be close to $2$ only when both are simultaneously close to the MMS.
We show that this can happen only for agents with a small number of very costly chores.
We call these agents \emph{hard}.
Our key observation is that, for a hard agent, moving fractional mass between these few very costly chores and the remaining chores helps in \emph{either direction}.
Consider, for example, two hard agents.
In one intermediate fractional allocation, the first agent receives zero fraction of her large chores and a larger fraction of the remaining chores.
Her fractional cost increases, but her rounding error drops sharply because none of her large chores is eligible after rounding.
The second agent takes the complementary role: she receives larger fractions of the large chores of the pair and smaller fractions of the remaining chores.
Her rounding error may remain as large as her MMS, but her fractional cost decreases enough to compensate for it.
By randomizing these two roles, each agent still receives every chore with the desired average marginal probability, while in every intermediate allocation the sum of fractional cost and rounding error is strictly smaller than the original bound.
By applying the Birkhoff--von Neumann decomposition of the intermediate fractional allocation, we can simultaneously preserve the designed marginals and convert the fractional trade-off into an ex-post guarantee for every realized allocation, yielding the $1.97$-MMS mechanism.

\subsection{Related Work}
Given the extensive literature on fair division, we focus on work most closely related to (approximate) MMS allocations and truthful mechanisms.
For a broader overview, we refer the reader to the surveys~\cite{journals/sigecom/AzizLMW22,journals/ai/AmanatidisABFLMVW23,guo2023survey}.

\paragraph{Truthful Mechanisms for Indivisible Items.}
For chores, beyond the randomized guarantee discussed above, Aziz, Li, and Wu~\cite{journals/mp/AzizLW24} obtained a deterministic truthful $O(\log(m/n))$-MMS mechanism using ordinal preferences.
Sun and Chen~\cite{journals/eor/SunC25} designed a TIE mechanism with ex-ante EF, ex-post EF1, and Pareto optimality (PO) under restricted additive costs, where each chore incurs an inherent cost or zero.
For goods, beyond the MMS guarantees discussed above, Bu and Tao~\cite{conf/focs/BuT25} obtained TIE mechanisms guaranteeing ex-post EF1 for two agents and envy-freeness up to the removal of $O(\sqrt{n})$ goods for general $n$; their $1/n$-MMS mechanism also satisfies ex-post PROP1.
Babaioff, Feige, and Manaker Morag~\cite{journals/corr/abs-2604-27211} further obtained an ex-post $\Omega(1/\log\log n)$-MMS guarantee using cardinal information and approximate TIE.
A line of work characterized deterministic truthful mechanisms for allocating indivisible goods under various preference domains and additional requirements, yielding limitations on attainable fairness guarantees~\cite{conf/sigecom/AmanatidisBCM17,conf/atal/HosseiniL19,conf/sigecom/BabaioffM25}.
Under binary additive valuations, the maximum Nash welfare rule with lexicographic tie-breaking is group strategyproof, EF1, and PO~\cite{conf/wine/0002PP020}; Barman and Verma~\cite{conf/aaai/BarmanV22} extended these guarantees to matroid-rank valuations.
For submodular valuations with binary marginal values, Babaioff, Ezra, and Feige~\cite{conf/aaai/BabaioffEF21} obtained a deterministic truthful Lorenz-dominating allocation rule and showed that randomizing priorities additionally ensures ex-ante EF.
Truthful mechanism design has also been studied for house allocation with initial endowments or distributional constraints~\cite{journals/scw/Svensson99,journals/jet/ShendeP23,journals/ai/SuzukiTYYZ23}.

\paragraph{Truthful Mechanisms for Divisible Items.}
Truthful mechanisms have also been studied for divisible goods~\cite{journals/mansci/FreemanWVP24}.
A related model is cake cutting, where a heterogeneous divisible resource is represented by an interval and agents may value its parts differently~\cite{journals/geb/ChenLPP13,conf/wine/AzizY14}.
In this setting, Bu, Song, and Tao~\cite{journals/ai/BuST23} proved that no deterministic mechanism could be both truthful and PROP, even for two agents with piecewise-constant valuations.
Positive results were obtained under more restricted preferences: \cite{journals/geb/ChenLPP13} designed a deterministic truthful and EF mechanism for piecewise-uniform valuations, allowing free disposal.
Randomization also enabled stronger guarantees, as Mossel and Tamuz~\cite{conf/sagt/MosselT10} obtained a TIE mechanism guaranteeing PROP.


\paragraph{(Approximate) MMS Allocations for Chores.}
Under the cardinal model, an MMS allocation for two agents can be obtained via divide-and-choose. 
However, MMS allocations are not guaranteed to exist for three or more agents~\cite{conf/aaai/AzizRSW17,conf/wine/FeigeST21,misc/arxiv/EzraG26}.
The current state-of-the-art lower bound on the approximation ratio is $31/30$, established by Ezra and Garbuz~\cite{misc/arxiv/EzraG26} using a four-agent instance.
These negative results motivated the study of approximate MMS allocations~\cite{conf/aaai/AzizRSW17,journals/mp/AzizLW24,conf/sigecom/FeigeH23,journals/teco/BarmanK20,conf/sigecom/HuangL21}.
The current best approximation ratio for chores is $13/11$~\cite{conf/sigecom/HuangS23}.
When only ordinal information is available, the attainable guarantees are inherently weaker, with the currently known ratios being $2-1/n$~\cite{conf/aaai/AzizRSW17}, $5/3$~\cite{journals/mp/AzizLW24}, and $8/5$~\cite{conf/sigecom/FeigeH23}.

\section{Preliminaries}
\label{sec:preliminary}

We study the problem of fairly allocating a set $M$ of $m$ indivisible chores among the agents $N=[n]$.
We refer to any subset $X\subseteq M$ as a bundle.
Each agent $i\in N$ has a nonnegative additive cost function $c_i:2^M\to\mathbb{R}_{\ge0}$.
We use $\bc=\left(c_1,\ldots,c_n\right)$ for a reported cost profile.
For $X\subseteq M$ and a chore $e\in M$, write $X+e=X\cup\left\{e\right\}$ and $X-e=X\setminus\left\{e\right\}$.
For $M'\subseteq M$, a partition is a collection of pairwise disjoint subsets whose union is $M'$.
An allocation $\bX=\left(X_1,\ldots,X_n\right)$ is an ordered partition of $M$, where agent $i$ receives $X_i$.
For every positive integer $t$, write $[t]=\left\{1,2,\ldots,t\right\}$.
For every real number $x$, define $\left(x\right)^+:=\max\left\{0,x\right\}$.
We interpret the maximum over an empty set as zero.
We fix a public tie-breaking order on the chores once and for all, e.g., by chore index.
For every agent $i$ and $t\in[m]$, let $H_{i,[t]}$ denote the set of the $t$ most costly chores of agent $i$, and let $c_{i,(t)}$ denote the cost of her $t$-th most costly chore under this tie-breaking rule.
Thus, $H_{i,[t]}\subseteq H_{i,[t+1]}$ and $c_{i,(1)}\ge\cdots\ge c_{i,(m)}$.

\begin{definition}[MMS]
Given $M'\subseteq M$, the maximin share of agent $i$ for $M'$ is
\begin{equation*}
\mu_i\left(M'\right):= \min_{\bX\in\Pi_n\left(M'\right)} \max_{j\in N}\left\{c_i\!\left(X_j\right)\right\},
\end{equation*}
where $\Pi_n\left(M'\right)$ denotes the set of all ordered $n$-partitions of $M'$, with empty bundles allowed.
When $M$ is clear from context, write $\mu_i:=\mu_i\left(M\right)$.
\end{definition}

In the strategic setting, the reported cost function of agent $i$ might be different from $c_i$. 
We use $\widehat c_i$ for an alternative report when it is useful to distinguish the two.
We write $X_i\left(\bc\right)$ for agent $i$'s random bundle at profile $\bc$.
All mechanisms allocate every chore and use no payments.

\begin{definition}[Truthfulness in expectation]\label{def:tie}
A randomized mechanism is truthful in expectation (TIE) if, for every agent $i$, every true additive cost $c_i$, every alternative report $\widehat c_i$, and every fixed profile $\bc_{-i}$ of the other reports,
\begin{equation*}
\E\left[c_i\!\left(X_i\left(c_i,\bc_{-i}\right)\right)\right] \le \E\left[c_i\!\left(X_i\left(\widehat c_i,\bc_{-i}\right)\right)\right].
\end{equation*}
In other words, a mechanism is TIE if truthful reporting gives the lowest expected cost. 
\end{definition}

\begin{definition}[Ex-post MMS approximation]\label{def:ex-post-mms}
For $\alpha\ge1$, a mechanism is ex-post $\alpha$-MMS if, at every report profile $\bc$, each allocation $\bX$ in its support satisfies
\begin{equation*}
c_i\!\left(X_i\right)\le\alpha\cdot\mu_i \qquad\text{for every agent }i\in N.
\end{equation*}
\end{definition}

We remark that all MMS approximation ratios introduced in this paper are ex-post ratios unless stated otherwise.
A fractional allocation is a nonnegative matrix $x=\left(x_{ie}\right)_{i\in N,e\in M}$ that satisfies $\sum_{i\in N}x_{ie}=1$ for every chore $e$.
The \emph{uniform fractional allocation} is the fractional allocation with $x_{ie}=1/n$ for every agent $i \in N$ and chore $e \in M$.
A randomized mechanism is \emph{equal-marginal} if, at every report profile, it holds that
\begin{equation*}
\Pr\left[e\in X_i\right]=\frac1n \qquad\text{for every agent }i\in N\text{ and chore }e\in M.
\end{equation*}
Thus, an equal-marginal mechanism implements the uniform fractional allocation at the level of unconditional marginals.
A key property of equal-marginal is that it ensures ex-ante EF: for every pair of agents $i,j\in N$, we have $\E[c_i(X_j)] = \frac{1}{n}\cdot c_i(M) = \E[c_i(X_i)]$.

\begin{lemma}[Basic MMS bounds]\label{lem:basic-mms}
For every agent $i \in N$, we have
\begin{equation*}
c_i\!\left(M\right)\le n\cdot\mu_i, \qquad c_i\!\left(e\right)\le\mu_i \quad\text{for every }e\in M.
\end{equation*}
\end{lemma}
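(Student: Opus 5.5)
Both inequalities follow directly from the definition of $\mu_i$ by fixing a minimizing partition. Let $\bX^*=(X^*_1,\ldots,X^*_n)\in\Pi_n(M)$ attain the minimum in the definition of $\mu_i$. Such a partition exists because $\Pi_n(M)$ is finite. By definition, $c_i(X^*_j)\le\mu_i$ for every $j\in N$.

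For the first inequality, we use additivity and the fact that the bundles of $\bX^*$ partition $M$:
\begin{equation*}
c_i(M)=\sum_{j\in N}c_i\!\left(X^*_j\right)\le n\cdot\mu_i .
\end{equation*}

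For the second inequality, fix any chore $e\in M$. Since $\bX^*$ covers $M$, there is a unique $j$ with $e\in X^*_j$. Costs are nonnegative and additive, so $c_i$ is monotone, and therefore
\begin{equation*}
c_i(e)\le c_i\!\left(X^*_j\right)\le\mu_i .
\end{equation*}

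Empty bundles are allowed in $\Pi_n(M)$, so no case distinction is needed when $m<n$. The only point requiring care is that monotonicity relies on the nonnegativity of the costs. Neither step presents a real obstacle.
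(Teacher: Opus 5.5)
Your proof is correct. The paper states this lemma without proof, treating it as immediate from the definition, and your argument is the standard one it implicitly relies on: fix a minimizing partition, sum its bundle costs to get $c_i(M)\le n\cdot\mu_i$, and use nonnegativity (hence monotonicity) for $c_i(e)\le\mu_i$.
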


Note that if $\mu_i=0$, then \Cref{lem:basic-mms} implies $c_i\!\left(M\right)=0$, in which case we can assign all chores to agent $i$ (which preserves TIE). 
Therefore, throughout this paper, we assume $\mu_i > 0$ for every agent $i\in N$.

\section{\texorpdfstring{$k$}{k}-Nomination Mechanism}
\label{sec:nomination_overview}
\label{sec:nomination}

In this section, we begin by introducing a simple mechanism and establish its $(2-1/n)$-MMS and TIE guarantees (Lemma~\ref{lem:zero-nomination}). 
To achieve a better approximation ratio, we then propose the $k$-nomination mechanism that satisfies a stronger $2-1/(n-k)$-MMS guarantee (Theorem~\ref{thm:top-avoidance}), alongside showing the tightness of this bound (Lemma~\ref{prop:nomination-tight}).
In the simple $(2-1/n)$-MMS mechanism, the mechanism has the \emph{equal marginal} property: each agent receives each chore with probability $1/n$.
The stronger $2-1/(n-k)$-MMS mechanism uses a \emph{nomination} technique that adjusts the marginal probabilities for each agent's $k$ largest chores while guaranteeing TIE.


\subsection{The 0-Nomination Mechanism}
\label{sec:2-1/n}
We first introduce the 0-nomination mechanism, which is TIE and guarantees $(2-1/n)$-MMS ex-post.
As mentioned, the mechanism has the equal marginal property.
This property straightforwardly implies TIE, and is a natural idea for designing TIE mechanisms.
Our result indicates that constant bounds on MMS are already possible by this natural idea.
This is in sharp contrast to the setting with goods, as Babaioff, Feige, and Manaker Morag~\cite{journals/corr/abs-2604-27211} show that mechanisms with this property can achieve at most $1/n$-MMS.

Before collecting reports, we add dummy chores, whose costs are fixed at zero for every agent, so that $m$ is divisible by $n$.
Adding these chores does not change any agent's MMS, and we will remove them from the final allocation.
For each agent $i \in N$, we order the chores from highest to lowest reported cost, breaking ties by a fixed index order.
Assume that $m = b\cdot n$ for some integer $b\geq 1$.
We partition this sequence of chores into consecutive blocks $B_{i,1}, B_{i,2}, \ldots,B_{i,b}$, each of which has size $n$.

Mechanism~\ref{alg:zero-nomination} constructs a bipartite graph $G=(L\cup R,E)$, where $L=M$ and $R$ consists of all agents' blocks.
We remark that blocks belonging to different agents are regarded as distinct vertices, i.e., $|L| = |R| = m$.
We add an edge between each chore and the block containing it for each agent.
Therefore, every chore vertex has degree $n$, and every block vertex has degree $n$, so $G$ is an $n$-regular bipartite graph.
By Hall's theorem~\cite{journals/jlms/Hall35}, $G$ has a perfect matching.
Removing it leaves an $(n-1)$-regular bipartite graph. Repeating this argument decomposes $E$ into $n$ perfect matchings.
The mechanism then samples one of these matchings uniformly at random to determine the allocation.

\SetAlgorithmName{Mechanism}{Mechanism}{List of Mechanisms}
\begin{algorithm}[htp]
    \caption{0-Nomination Mechanism}
    \label{alg:zero-nomination}
    \KwIn{Reported cost functions $\bc$}
    Add public zero-cost dummy chores to $M$ so that $|M|$ is divisible by $n$\;
    $m\gets |M|$, $b\gets m/n$\;
    For each $i\in N$, sort chores by non-increasing reported cost and form blocks $B_{i,1},\ldots,B_{i,b}$ of size $n$\;
    $L\gets M$, $R\gets\{B_{i,r}:i\in N,\ r\in[b]\}$\;
    $E\gets\{(e,B_{i,r}):i\in N,\ r\in[b],\ e\in B_{i,r}\}$\;
    Decompose $E=E_1\cup\cdots\cup E_n$ into perfect matchings\;
    Sample a matching $E'$ uniformly at random from $\{E_1, \dots, E_n\}$\;
    \For{each $i \in N$} {
    $X_i \gets \{e \in M : (e, B_{i,r}) \in E' \text{ for some } r \in [b]\}$\;
    }
    Remove all dummy chores from $\bX$\;
    \KwOut{An allocation $\bX$}
\end{algorithm}

\medskip
We now establish the guarantees of Mechanism~\ref{alg:zero-nomination}.
\begin{lemma}\label{lem:zero-nomination}
    Mechanism~\ref{alg:zero-nomination} is TIE, ex-ante EF, and
    ex-post $(2-1/n)$-MMS.
\end{lemma}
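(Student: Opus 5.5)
We establish the three properties in turn, first the equal-marginal property, and derive TIE and ex-ante EF from it.

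\emph{Equal marginals.} Fix a chore $e$ and an agent $i$. In $G$, the chore vertex $e$ has exactly $n$ incident edges, one to the block of each agent containing $e$. The $n$ perfect matchings $E_1,\dots,E_n$ partition $E$, and each matching covers $e$ exactly once. Hence each of the $n$ edges at $e$ lies in exactly one matching. In particular, the edge $(e,B_{i,r})$ with $e\in B_{i,r}$ lies in exactly one $E_j$. Since $E'$ is uniform over the $n$ matchings, $\Pr[e\in X_i]=1/n$. This holds at every report profile, whatever decomposition is used, and dummy chores do not affect it. TIE then follows: by additivity, $\E[c_i(X_i(\widehat c_i,\bc_{-i}))]=\sum_e c_i(e)/n=c_i(M)/n$ for every report $\widehat c_i$. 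So misreporting never changes agent $i$'s expected true cost, and the inequality in \Cref{def:tie} holds with equality. Ex-ante EF is exactly the remark after the definition of equal-marginal.

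\emph{Ex-post guarantee.} Fix a matching $E'$ in the support and an agent $i$ with true cost $c_i$. We analyze truthful reports, so the blocks are formed from $c_i$. Each block vertex $B_{i,r}$ is matched to exactly one chore, and that chore lies in $B_{i,r}$. Moreover, agent $i$ receives chore $e$ only via the edge $(e,B_{i,r})$ with $e\in B_{i,r}$. Therefore $X_i$ contains exactly one chore from each block $B_{i,1},\dots,B_{i,b}$ (possibly a dummy), and no other chores. Because blocks are sorted by nonincreasing cost, the chore taken from $B_{i,r}$ costs at most $\min_{f\in B_{i,r-1}}c_i(f)\le \frac1n c_i(B_{i,r-1})$ for every $r\ge2$. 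The chore from $B_{i,1}$ costs at most $c_{i,(1)}$. Summing gives
\[
c_i(X_i)\le c_{i,(1)}+\frac1n\sum_{r=1}^{b-1}c_i(B_{i,r})\le c_{i,(1)}+\frac1n\bigl(c_i(M)-c_i(B_{i,b})\bigr).
\]
A cruder bound is $c_{i,(1)}+\frac1n(c_i(M)-c_{i,(1)})$. It uses $c_{i,(1)}\in B_{i,1}$ but not the last block, so I will redo the shift carefully. Charge the chore from $B_{i,r}$ ($r\ge2$) to the $n$ chores of $B_{i,r-1}$. These charged sets are disjoint and never include $B_{i,b}$. To extract the $-\frac1n c_{i,(1)}$ term, I instead bound the sum as $c_{i,(1)}+\frac1n\bigl(c_i(M)-c_{i,(1)}\bigr)$. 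This is valid because $\sum_{r=1}^{b-1}c_i(B_{i,r})\le c_i(M)-c_i(B_{i,b})$, and I need to subtract $c_{i,(1)}$ rather than the last block.

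\emph{Main obstacle.} The main obstacle is this $-1/n$ refinement. The fix is to compare each received chore from $B_{i,r}$ ($r\ge 2$) with the $n-1$ chores of $B_{i,r-1}$ other than its top one, together with the single received chore itself. Equivalently, use a sliding window. The received chore from $B_{i,r}$ costs at most the average of the last $n-1$ chores of $B_{i,r-1}$ and the first chore of $B_{i,r}$. These windows are disjoint across $r$ and avoid chore $c_{i,(1)}$. This yields $c_i(X_i)\le c_{i,(1)}+\frac1n(c_i(M)-c_{i,(1)})$. Finally, apply \Cref{lem:basic-mms}: $c_i(M)\le n\mu_i$ and $c_{i,(1)}\le\mu_i$. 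Then
\[
c_i(X_i)\le \Bigl(1-\frac1n\Bigr)c_{i,(1)}+\frac1n c_i(M)\le\Bigl(1-\frac1n\Bigr)\mu_i+\mu_i=\Bigl(2-\frac1n\Bigr)\mu_i,
\]
and removing dummy chores only lowers cost.
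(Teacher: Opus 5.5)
Your proof is correct and follows the paper's argument. Equal marginals give TIE and ex-ante EF, and your sliding-window charging is exactly the paper's inequality $c_i(M)\ge c_{i,(1)}+n\sum_{r=2}^{b}c_{i,((r-1)n+1)}$, which you then combine with \Cref{lem:basic-mms}. Two cleanups are needed. First, delete the intermediate paragraph: its justification would require $c_i(B_{i,b})\ge c_{i,(1)}$, which fails in general, for instance when the last block consists of zero-cost dummies. Second, define each window using the top chore of $B_{i,r}$, not the received chore, since the latter can lie in the next window and cause overlap.
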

\begin{proof}
    Fix an arbitrary reported profile and consider the graph $G = (L \cup R, E)$ constructed
    by Mechanism~\ref{alg:zero-nomination}. 
    As shown above, $G$ is $n$-regular, and its edges decompose into $n$ perfect matchings.
    Consequently, every sampled matching assigns each chore exactly once and gives each agent exactly one chore from each of her blocks.

    To prove truthfulness, observe that for any agent $i$ and any chore $e$, the edge connecting $e$ to the corresponding block of agent $i$ belongs to exactly one of the $n$ matchings.
    Consequently, under uniform sampling, we have $\Pr[e \in X_i] = 1/n$, which is independent of the reported profile. 
    In particular, regardless of the other agents' reports, for every $j\in N$, we always have
    \begin{equation*}
    \E[c_i(X_j)] = \sum_{e \in M} c_i(e) \cdot \Pr[e \in X_j] = \frac{c_i(M)}{n}.
    \end{equation*}
    Since agent $i$'s report cannot change this expected cost, the mechanism is TIE and ex-ante EF.

    It remains to bound the cost of each agent's realized bundle.
    Fix an arbitrary agent $i \in N$.
    Since agent $i$ receives at most one chore from each block $B_{i,r}$, we have $c_i\!\left(X_i\right)  \le \sum_{r=1}^{b} c_{i,\left((r-1)n+1\right)}$.
    Furthermore, since chores within each block are ordered by non-increasing cost, for all $r<b$, the cost of each chore in block $B_{i,r}$ is at least $c_{i,\left(rn+1\right)}$. It follows that
    \begin{equation*}
c_i(M)=\sum_{r\in[b]} \sum_{e\in B_{i, r}} c_i(e) \ge c_{i,\left(1\right)} + n\cdot \sum_{r=2}^b c_{i,\left((r-1)n+1\right)},
\end{equation*}
    Since $c_{i,\left(1\right)} \le \mu_i$ and $c_i(M) \le n\cdot \mu_i$ (by Lemma~\ref{lem:basic-mms}), rearranging the previous inequality yields $\sum_{r=2}^b c_{i,\left((r-1)n+1\right)} \leq (c_i(M) - c_{i,\left(1\right)})/{n}$. 
    Consequently, we have
    \begin{equation*}
c_i(X_i) \le \sum_{r=1}^b c_{i,\left((r-1)n+1\right)} \le c_{i,\left(1\right)} + \frac{c_i(M) - c_{i,\left(1\right)}}{n} \le \left(2 - \frac{1}{n}\right)\cdot \mu_i.
\end{equation*}
    This completes the proof of this lemma.
\end{proof}

\subsection{The \texorpdfstring{$k$}{k}-Nomination Mechanism}
\label{sec:nomination-many}
Next, we turn our attention to designing mechanisms with better approximation guarantees.
Intuitively, Mechanism~\ref{alg:zero-nomination} gives each agent at most one chore from each sorted block, but an agent who receives her largest chore may still receive a chore from every other block.
To see how this can lead to a ratio of $2-1/n$, consider some agent $i \in N$ and her cost function in Table~\ref{tab:zero-nomination-example}. 
There are $m=n\cdot\left(n-1\right)+1$ chores, with $c_i(e_1)=1$ and $c_i(e)=1/n$ for $e \in M\setminus \{e_1\}$.

\begin{table}[ht]
    \centering
    \renewcommand{\arraystretch}{1.2}
    \setlength{\tabcolsep}{12pt}
    \begin{tabular}{@{}lcccc@{}}
        \toprule
        & $e_1$ & $e_2$ & $\dots$ & $e_m$ \\
        \midrule
        Agent $i$ & $1$ & $1/n$ & $\dots$ & $1/n$ \\
        \bottomrule
    \end{tabular}
    \caption{A hard instance for Mechanism~\ref{alg:zero-nomination}.}
    \label{tab:zero-nomination-example}
\end{table}

In this instance, we have $\mu_i = 1$. However, by Mechanism~\ref{alg:zero-nomination}, under the worst-case realization of the allocation, agent $i$ may receive the largest chore $e_1$ along with one chore from each of the remaining $n-1$ blocks. 
This yields a total cost of $c_i(X_i)=1+(n-1)/n=(2-1/n)\mu_i$.

This tight example demonstrates that constraining each agent to receive at most one chore per block is insufficient by itself to achieve a better approximation guarantee.
Furthermore, this hard instance reveals that to achieve a tighter upper bound, the mechanism must prevent an agent from simultaneously receiving her largest chores and a large portion of other chores. 
To beat the $(2 - 1/n)$ approximation bound, the mechanism must ensure a structural trade-off: an agent who receives several smaller chores should avoid her largest ones, while an agent assigned a large chore should receive fewer remaining chores. 
A natural approach to achieving this balance is to allow agents to first nominate a set of their most costly chores.

We present the nomination mechanism and establish its approximation guarantees. 
For $n \ge 3$, fix an integer $k \ge 1$. 
We add zero-cost dummy chores so that $b = (|M|-k)/(n-k)$ is a positive integer.
For each agent $i\in N$, the nominated set is $H_{i,[k]}$, the set of her $k$ most costly chores. For each chore $e \in M$, we define $l(e)=\bigl|\{i\in N:e\in H_{i,[k]}\}\bigr|$ as the number of agents who nominated $e$.
By definition, we have $\sum_{e\in M}l(e)=nk$.
For each agent $i \in N$, we order the chores $M \setminus H_{i,[k]}$ in non-increasing order with respect to their cost under $c_i$ and split it into consecutive blocks $B_{i,1}, \dots, B_{i,b}$, each containing $n-k$ chores.
An allocation $\bX=\left(X_1,\ldots,X_n\right)$ is called \emph{safe} if, for every agent $i \in N$, either (1) $|X_i| \le 1$ or (2) $X_i\cap H_{i,[k]}=\emptyset$ and $|X_i\cap B_{i,r}|\le 1$ for every $r\in[b]$.

Intuitively, under a safe allocation, each agent either receives at most one chore in total (which trivially ensures MMS), or completely avoids her nominated large chores while receiving at most one chore from each remaining block.
To construct the nomination mechanism, we use the following extension of the matching decomposition used above, which allows vertices on one side to have smaller degrees.
The proof of the lemma is omitted as it follows from a standard application of Hall's Theorem~\cite{journals/jlms/Hall35}. 
The graph can be shown to admit an $R$-saturating matching, and the lemma then follows by removing this matching and applying induction.

\begin{lemma}\label{lemma:bipartite-decomposition}
    Let $G=(L\cup R,E)$ be a bipartite multigraph and $d\ge1$ be an integer.
    If every vertex in $L$ has degree at most $d$ and every vertex in $R$ has degree exactly $d$, then $E$ can be decomposed into $d$ matchings that saturate $R$.
\end{lemma}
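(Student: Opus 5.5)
We argue by induction on $d$, exactly as the remark before the statement suggests; the whole argument reduces to finding one $R$-saturating matching. For $d=1$, every vertex of $R$ has degree exactly $1$ and every vertex of $L$ has degree at most $1$, so $E$ is itself a matching saturating $R$. For $d\ge 2$, suppose we can find a matching $E_d\subseteq E$ that saturates $R$ and also covers every vertex of $L$ whose degree is exactly $d$. Removing $E_d$ lowers the degree of every $R$-vertex by exactly one, to $d-1$. It also leaves every $L$-vertex with degree at most $d-1$: vertices of degree $d$ lose one edge, and the others already had degree at most $d-1$. The induction hypothesis then splits $E\setminus E_d$ into $d-1$ matchings saturating $R$, and together with $E_d$ we get the $d$ matchings. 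Multigraph parallel edges cause no trouble, since a matching uses at most one copy between any two vertices.

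The main obstacle is the extra covering requirement on $L$. Hall's theorem alone gives an $R$-saturating matching: for $S\subseteq R$, the $d|S|$ edges leaving $S$ land in $N(S)$, and each vertex there absorbs at most $d$ of them, so $|N(S)|\ge|S|$. But an arbitrary such matching might miss a degree-$d$ vertex of $L$, which would break the degree bound after removal.

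To handle this, first we note that the set $L_d$ of degree-$d$ vertices of $L$ also satisfies Hall's condition toward $R$. For $T\subseteq L_d$, the $d|T|$ edges from $T$ go into $N(T)\subseteq R$, and each vertex of $R$ has degree exactly $d$, so $|N(T)|\ge|T|$. Hence there is a matching $M_1$ saturating $R$ and a matching $M_2$ saturating $L_d$.

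Second, we combine them using the standard bipartite matching-exchange (Mendelsohn--Dulmage) argument. In $M_1\triangle M_2$, each component is an alternating path or cycle. Cycles are resolved by taking their $M_1$ edges. For each path, we choose the $M_1$ edges or the $M_2$ edges so that every vertex of $R$ covered by $M_1$ and every vertex of $L_d$ covered by $M_2$ stays covered. This is possible because an endpoint of a path can only need coverage from the matching whose edge is incident to it, and both endpoints of a path cannot conflict in a way that requires both matchings.

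An alternative, if one prefers a cleaner route, is to pad the graph to a regular one. Add a disjoint copy of the graph with the roles of $L$ and $R$ swapped, and connect each vertex $v\in L$ of degree $\deg(v)<d$ to its copy by $d-\deg(v)$ parallel edges. The result is a $d$-regular bipartite multigraph, whose edge set decomposes into $d$ perfect matchings by repeated Hall's theorem as in the 0-nomination argument. Restricting each of these perfect matchings to the original edges gives $d$ matchings of $E$, each saturating $R$, since every edge at a vertex of $R$ is an original edge. Either way, the key step is ensuring that the degree-$d$ vertices of $L$ are covered, and the padding construction sidesteps it directly, so we will use it.
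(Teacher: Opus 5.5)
Your proof is correct, but the argument you adopt (padding to a $d$-regular graph) differs from the paper's route. The paper omits the proof and only sketches it: find an $R$-saturating matching by Hall's theorem, remove it, and induct.

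You were right that this sketch does not go through for an arbitrary $R$-saturating matching. Take $d=2$, $R=\{x,y\}$, $L=\{a,b,c\}$ and edges $xa,xb,ya,yc$. Removing the $R$-saturating matching $\{xb,yc\}$ leaves $\{xa,ya\}$, where $a$ still has degree $2$ while the $R$-vertices have degree $1$. So the removed matching must also cover every degree-$d$ vertex of $L$.

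Your padding construction makes every degree exactly $d$, so the regular-case decomposition already used for Mechanism~\ref{alg:zero-nomination} applies. Restricting to $E$ keeps each matching saturating $R$, because every edge at an $R$-vertex is an original edge. This is the standard regularization proof of K\"onig's edge-colouring theorem. It needs no strengthened matching, at the price of doubling the graph; that is still polynomial, so the running-time claim in Theorem~\ref{thm:top-avoidance} is unaffected. The repaired induction keeps the graph unchanged and peels off matchings one at a time, but it needs exactly the extra covering step you identified.

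If you want that inductive route, note that your Mendelsohn--Dulmage paragraph asserts the key claim rather than proving it. The justification is parity:
\begin{itemize}
\item A path endpoint of $M_1\triangle M_2$ in $R$ must carry an $M_1$ edge, and one in $L_d$ must carry an $M_2$ edge.
\item A path with endpoints on opposite sides has odd length, so its two end edges come from the same matching.
\item A path with both endpoints on the same side has even length, so its end edges come from different matchings.
\end{itemize}
Hence at most one endpoint of each path is constrained. You must also keep the edges of $M_1\cap M_2$ in the combined matching.
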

    

\paragraph{$k$-Nomination Mechanism.}
Mechanism~\ref{alg:k-nomination} proceeds as follows.
First, each agent nominates her $k$ largest chores under her reported cost function, and her remaining chores are partitioned into blocks of size $n-k$.
We sample a subset $S \subseteq N$ of $n-k$ agents uniformly at random.
Following a similar construction as in Mechanism~\ref{alg:zero-nomination}, we build a bipartite graph $G=(L \cup R, E)$, where the set of chores constitutes the left side $L=M$, and the residual blocks of agents in $S$ form the right side $R=\{B_{i,r} : i \in S, r \in [b]\}$\footnote{Recall that these blocks exclude the nominated chores for agent $i$.}.
An edge $(e, B_{i,r})$ exists if and only if $e \in B_{i,r}$.
By Lemma~\ref{lemma:bipartite-decomposition}, $E$ can be decomposed into $n-k$ matchings that saturate $R$.
The mechanism selects one such matching uniformly at random and allocates each matched chore to the agent associated with the corresponding block.
Since each agent in $S$ holds $b$ blocks, this matching allocates exactly $b(n-k) = |M|-k$ chores, leaving $k$ chores unallocated and $k$ agents unselected.
Finally, the mechanism assigns these $k$ remaining chores to these $k$ agents via a uniformly random bijection.

\begin{algorithm}[htp]
    \caption{$k$-Nomination Mechanism}
    \label{alg:k-nomination}
    \KwIn{Reported cost functions $\bc$ and an integer $k \geq 1$}
    Add dummy chores to $M$ as described above\;
    $b\gets(|M|-k)/(n-k)$\;
    \For{each $i\in N$}{
        Nominate the chores in $H_{i,[k]}$\;
        Construct the blocks $B_{i,1},\ldots,B_{i,b}$, each of size $n-k$ from the remaining chores\;
    }
    Sample $S\subseteq N$ uniformly with $|S|=n-k$\;
    $L\gets M$, $R\gets\{B_{i,r}:i\in S,\ r\in[b]\}$\;
    $E\gets\{(e,B_{i,r}):i\in S,\ r\in[b],\ e\in B_{i,r}\}$\;
    Decompose $E=E_1\cup\cdots\cup E_{n-k}$ into $R$-saturating matchings; \tcp{By Lemma~\ref{lemma:bipartite-decomposition}}
    Sample $E'$ uniformly from $\{E_1,\ldots,E_{n-k}\}$\;
    $X_i\gets\{e\in M:(e,B_{i,r})\in E'\text{ for some }r\in[b]\}$ for each $i\in S$\;
    Assign $M\setminus\bigcup_{i\in S}X_i$ to $N\setminus S$ by a uniformly random bijection\;
    Remove all dummy chores from $\bX$\;
    \KwOut{An allocation $\bX$}
\end{algorithm}

\medskip

To illustrate the execution of Mechanism~\ref{alg:k-nomination}, consider the following example.
\begin{example}\label{example: mechanism2}
    Let $n=3$, $k=1$, and $M=\{e_1,\ldots,e_5\}$, which yields $b=2$ blocks per agent without requiring dummy chores.
    Suppose agents $1$ and $2$ nominate $e_1$ and $e_2$, respectively, with their residual blocks given by
    $B_{1,1}=\{e_2,e_3\}$, $B_{1,2} =\{e_4,e_5\}$, $B_{2,1}=\{e_1,e_3\}$, $B_{2,2}=\{e_4,e_5\}$.
    Assume that the mechanism selects $S=\{1,2\}$.
    Figure~\ref{fig:nomination-graph} illustrates a bipartite decomposition into two matchings, $E_1$ and $E_2$.
    If $E_1$ is sampled, agents $1$ and $2$ receive $X_1=\{e_2,e_4\}$ and $X_2=\{e_3,e_5\}$, respectively, while agent $3$ receives the remaining chore, which yields $X_3=\{e_1\}$.
    Otherwise, $E_2$ is sampled, the resulting allocation is $X_1=\{e_3,e_5\}$, $X_2=\{e_1,e_4\}$, and $X_3=\{e_2\}$.
    Under this decomposition, each allocation occurs with probability $1/2$.
\end{example}

\begin{figure}[H]
    \centering
    \begin{tikzpicture}[
        scale=0.85,
        transform shape,
        chore/.style={circle,draw=black,line width=0.6pt,fill=blue!18!white,minimum size=8mm,inner sep=0pt},
        block/.style={rectangle,draw=black,line width=0.6pt,fill=red!18!white,minimum width=12mm,minimum height=9mm,inner sep=1pt},
        ordinary edge/.style={draw=black!45,line width=0.55pt},
        matching edge/.style={draw=black,line width=1.3pt}
    ]
        \draw[black!45,dashed,line width=0.5pt] (5.6,-0.8) -- (5.6,5.9);
        \foreach \panel/\shift in {1/0,2/6.3} {
            \begin{scope}[xshift=\shift cm]
                \node[font=\bfseries] at (2.45,5.6) {Matching $E_{\panel}$};
                \node[font=\small] at (0.6,4.9) {Chores $L$};
                \node[font=\small] at (4.3,4.9) {Blocks $R$};
                \foreach \j in {1,...,5} {
                    \node[chore] (e\panel-\j) at (0.6,{5-\j}) {$e_\j$};
                }
                \node[block] (b\panel-11) at (4.3,4) {$B_{1,1}$};
                \node[block] (b\panel-12) at (4.3,2.67) {$B_{1,2}$};
                \node[block] (b\panel-21) at (4.3,1.33) {$B_{2,1}$};
                \node[block] (b\panel-22) at (4.3,0) {$B_{2,2}$};
                \foreach \e/\block in {2/11,3/11,4/12,5/12,1/21,3/21,4/22,5/22} {
                    \draw[ordinary edge] (e\panel-\e) -- (b\panel-\block);
                }
                \ifnum\panel=1
                    \foreach \e/\block in {2/11,4/12,3/21,5/22} {
                        \draw[matching edge] (e\panel-\e) -- (b\panel-\block);
                    }
                    \node[font=\small] at (2.45,-0.75) {Unmatched $e_1$ goes to agent $3$};
                \else
                    \foreach \e/\block in {3/11,5/12,1/21,4/22} {
                        \draw[matching edge] (e\panel-\e) -- (b\panel-\block);
                    }
                    \node[font=\small] at (2.45,-0.75) {Unmatched $e_2$ goes to agent $3$};
                \fi
            \end{scope}
        }
    \end{tikzpicture}
    \caption{Illustration of Example~\ref{example: mechanism2}.}
    \label{fig:nomination-graph}
\end{figure}

Next, we show that the realization of Mechanism~\ref{alg:k-nomination} is safe and determine the marginal probabilities of chore assignment.

\begin{lemma}\label{lem:safe-implementation}
    Every realization of Mechanism~\ref{alg:k-nomination} is safe.
    Furthermore, for any reported cost profile, agent $i \in N$, and chore $e \in M$, the probability that agent $i$ receives $e$ is given by
\begin{equation*}
\Pr[e\in X_i] =\frac{l(e)-1+n \cdot \mathbf{1}(e\notin H_{i,[k]})}{n\cdot\left(n-1\right)}.
\end{equation*}
\end{lemma}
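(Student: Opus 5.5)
Safety and the marginal formula are handled separately.

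\emph{Safety.} Fix a realization: the subset $S$, the matching $E'$, and the bijection.

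An agent $i\in N\setminus S$ receives exactly one leftover chore, so $|X_i|\le 1$. Dummy removal only shrinks bundles, so this survives the final step.

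An agent $i\in S$ receives only chores matched to her own blocks $B_{i,r}$. These blocks partition $M\setminus H_{i,[k]}$, so $X_i\cap H_{i,[k]}=\emptyset$. Since $E'$ is a matching, each block vertex $B_{i,r}$ receives exactly one chore, so $|X_i\cap B_{i,r}|\le 1$.

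The counting also needs to be consistent. $E'$ saturates $R$ by Lemma~\ref{lemma:bipartite-decomposition}, whose hypotheses hold: every block has degree exactly $n-k$, and every chore lies in at most one block per agent of $S$, so it has degree at most $|S|=n-k$. Hence $E'$ matches exactly $b(n-k)=|M|-k$ chores, and exactly $k$ chores remain for the $k$ unselected agents, so the bijection is well defined.

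\emph{Marginals.} The key observation is that chore $e$ has degree exactly $d_S(e)=|\{j\in S: e\notin H_{j,[k]}\}|$ in $G$.

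Fix $i\in S$ with $e\notin H_{i,[k]}$. The edge $(e,B_{i,r})$ lies in exactly one of the $n-k$ matchings, so $\Pr[e\in X_i\mid S]=1/(n-k)$. This is independent of $S$, so it does not depend on the (possibly awkward) structure of the decomposition. If instead $e\in H_{i,[k]}$ and $i\in S$, the probability is $0$.

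Fix $i\notin S$. Chore $e$ is left unmatched in exactly $n-k-d_S(e)$ of the $n-k$ matchings, since it is matched once for each incident edge. So $\Pr[e \text{ unmatched}\mid S]=1-d_S(e)/(n-k)$, and then $i$ gets $e$ with probability $1/k$ under the bijection. Hence
\[
\Pr[e\in X_i\mid S]=\frac{1}{k}\Bigl(1-\frac{d_S(e)}{n-k}\Bigr)=\frac{|\{j\in S: e\in H_{j,[k]}\}|}{k(n-k)}.
\]

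Now average over $S$, using $\Pr[i\in S]=(n-k)/n$.

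The case $i\in S$ contributes $\frac{n-k}{n}\cdot\frac{\mathbf 1(e\notin H_{i,[k]})}{n-k}=\mathbf 1(e\notin H_{i,[k]})/n$.

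For the case $i\notin S$, condition on $i\notin S$. Each $j\ne i$ is then in $S$ with probability $(n-k)/(n-1)$. So the expected number of nominators of $e$ in $S$ is $(l(e)-\mathbf 1(e\in H_{i,[k]}))(n-k)/(n-1)$. Multiplying by $\Pr[i\notin S]=k/n$ and dividing by $k(n-k)$, this case contributes $\frac{l(e)-\mathbf 1(e\in H_{i,[k]})}{n(n-1)}$.

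Summing the two contributions and using $\mathbf 1(e\in H_{i,[k]}) = 1-\mathbf 1(e\notin H_{i,[k]})$ gives
\[
\frac{l(e)-1+\mathbf 1(e\notin H_{i,[k]})}{n(n-1)}+\frac{(n-1)\mathbf 1(e\notin H_{i,[k]})}{n(n-1)}=\frac{l(e)-1+n\,\mathbf 1(e\notin H_{i,[k]})}{n(n-1)},
\]
as claimed.

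The main point requiring care is the unmatched-probability computation for $i\notin S$. It uses the exact degree of $e$ in $G$ together with the fact that the $n-k$ matchings partition $E$. Once that is in place, the remainder is the linearity-of-expectation averaging over $S$ above.
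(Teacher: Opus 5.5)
Your proposal is correct and follows essentially the same route as the paper. Safety comes from the $R$-saturating matching together with the singleton leftovers. The marginal formula comes from splitting on whether $i\in S$, noting that each edge lies in exactly one of the $n-k$ matchings (so $e$ is unmatched with probability equal to the number of its nominators in $S$ divided by $n-k$), and then averaging over $S$ conditioned on $i\notin S$. Your explicit check of the degree hypotheses of Lemma~\ref{lemma:bipartite-decomposition} and your complement form $1-d_S(e)/(n-k)$ are only cosmetic differences.
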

\begin{proof}
    We first establish that every realization yields a safe allocation. Fix an execution of the mechanism with a selected subset $S$ and a sampled matching $E^*$. By construction, $E^*$ saturates all $b(n-k)$ block vertices, allocating $b(n-k)$ distinct chores to the agents in $S$. The remaining $k$ chores are distributed to the $k$ agents outside $S$.
    Consequently, each agent in $S$ receives one chore from each of her residual blocks, while each agent outside $S$ receives exactly one chore. 
    
    We next analyze the marginal probability of chore assignment under Mechanism~\ref{alg:k-nomination}. Fix a reported cost profile, an agent $i \in N$, and a chore $e \in M$. 
    To calculate the probability that $i$ receives $e$, i.e., $\Pr[e \in X_i]$, we consider whether agent $i$ belongs to the sampled subset $S$.
    \begin{itemize}
        \item \textbf{Case 1: $i \in S$.} If agent $i$ did not nominate chore $e$, the unique edge between $e$ and one of $i$'s blocks appears in exactly one of the $n-k$ decomposition matchings. If $i$ nominated $e$, no such edge exists. Since $\Pr[i \in S] = (n-k)/n$, the probability that $i$ receives $e$ via matching decomposition is $\Pr[e \in X_i, \, i \in S] = \frac{1}{n}\cdot \mathbf{1}(e \notin H_{i,[k]})$.
    
        \item \textbf{Case 2: $i \notin S$.} 
        In this case, we first compute the probability that chore $e$ remains unmatched. 
        Let $l'(e)$ denote the number of agents in $N \setminus \{i\}$ who nominate chore $e$. 
        For any realization of $S$ that satisfies $i \notin S$, the degree of vertex $e$ in the bipartite graph is precisely the number of agents in $S$ who did not nominate $e$.
        Hence, the matching leaves $e$ unmatched with conditional probability
        \begin{equation*}
\Pr[e \text{ remains unmatched} \mid S] = \frac{\bigl|\{j \in S : e \in H_{j,[k]}\}\bigr|}{n-k}.
\end{equation*}
        We now take the expectation over all realizations of $S$ conditional on $i \notin S$. Under this condition, each agent $j \neq i$ belongs to $S$ with probability $(n-k)/(n-1)$, which yields
        \begin{equation*}
\begin{aligned}
            \Pr[e \text{ remains unmatched} \mid i \notin S] 
            &= \E[\Pr[e \text{ remains unmatched} \mid S] \mid i \notin S]\\
            &= \frac{1}{n-k} \cdot \, \E\!\left[ \bigl|\{j \in S : e \in H_{j,[k]}\}\bigr| \,\middle|\, i \notin S \right] \\
            &= \frac{1}{n-k} \cdot \frac{l'(e)(n-k)}{n-1} = \frac{l'(e)}{n-1}.
        \end{aligned}
\end{equation*}
        In this event, chore $e$ is allocated to agent $i$ via the random bijection with probability $1/k$. 
        Together with $\Pr[i \notin S] = k/n$, we obtain $\Pr[e \in X_i, \, i \notin S] = \frac{l'(e)}{n\cdot\left(n-1\right)}$.
    \end{itemize}
    Combining both cases and applying $l'(e) = l(e) - \mathbf{1}(e \in H_{i,[k]})$ yields the lemma.
\end{proof}

Given the safety properties and marginal probabilities provided by Lemma~\ref{lem:safe-implementation}, it remains to prove the truthfulness and approximate MMS guarantees of Mechanism~\ref{alg:k-nomination}.

\begin{theorem}\label{thm:top-avoidance}
    For any $n \ge 3$ and $1 \le k \le \lfloor(n-1)/2\rfloor$, Mechanism~\ref{alg:k-nomination} runs in polynomial time and returns an allocation that is TIE, ex-ante EF, and ex-post $\left(2 - 1/(n-k)\right)$-MMS.
\end{theorem}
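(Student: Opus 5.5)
Polynomial time is immediate, so I would prove the remaining three properties in turn: TIE, ex-ante EF, and the ex-post bound.

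\emph{Running time.} The graph has $O(m)$ vertices on each side and $O(nm)$ edges. The decomposition of Lemma~\ref{lemma:bipartite-decomposition} is computed by $n-k$ successive bipartite matching computations. Sampling $S$ and a random bijection are trivial.

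\emph{TIE.} I would use the marginals from Lemma~\ref{lem:safe-implementation}. Write $l(e)=l_{-i}(e)+\mathbf 1(e\in \widehat H_i)$, where $l_{-i}(e)$ counts nominations by other agents and $\widehat H_i$ is the set agent $i$ nominates under her report. Substituting gives
\begin{equation*}
\Pr[e\in X_i]=\frac{l_{-i}(e)-1+n}{n(n-1)}-\frac{n-1}{n(n-1)}\mathbf 1(e\in\widehat H_i)=\frac{l_{-i}(e)+n-1}{n(n-1)}-\frac{1}{n}\mathbf 1(e\in \widehat H_i).
\end{equation*}
The first term is independent of $i$'s report. Hence agent $i$'s expected true cost equals a report-independent constant minus $\frac1n c_i(\widehat H_i)$. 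Since $|\widehat H_i|=k$ for every report, this is minimized by nominating her true $k$ most costly chores, which is exactly what truthful reporting does. The blocks affect only the ex-post structure, not the marginals.

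\emph{Ex-ante EF.} Fix $i\ne j$ and compare $\E[c_i(X_i)]$ with $\E[c_i(X_j)]$ using the same formula. We have $\Pr[e\in X_j]-\Pr[e\in X_i]=\frac{1}{n}\bigl(\mathbf 1(e\in H_{i,[k]})-\mathbf 1(e\in H_{j,[k]})\bigr)$. Weighting by $c_i(e)$ gives $\frac1n\bigl(c_i(H_{i,[k]})-c_i(H_{j,[k]})\bigr)\ge 0$, because $H_{i,[k]}$ consists of $i$'s top $k$ chores. Hence $\E[c_i(X_i)]\le\E[c_i(X_j)]$.

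\emph{Ex-post $(2-1/(n-k))$-MMS.} By Lemma~\ref{lem:safe-implementation} every realization is safe. If $|X_i|\le1$, Lemma~\ref{lem:basic-mms} gives $c_i(X_i)\le\mu_i$. Otherwise $X_i$ avoids $H_{i,[k]}$ and has at most one chore per residual block. Let $d_r$ be the largest cost in block $B_{i,r}$. Then
\begin{equation*}
c_i(X_i)\le\sum_r d_r\le d_1+\frac{c_i(M\setminus H_{i,[k]})-d_1}{n-k},
\end{equation*}
mirroring the proof of Lemma~\ref{lem:zero-nomination}. Note $c_i(M\setminus H_{i,[k]})\le n\mu_i-c_i(H_{i,[k]})\le n\mu_i-k d_1$, since every nominated chore costs at least $d_1$. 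This gives
\begin{equation*}
c_i(X_i)\le \frac{n\mu_i+(n-2k-1)d_1}{n-k}.
\end{equation*}
Because $k\le (n-1)/2$, the coefficient $n-2k-1$ is nonnegative. So we bound $d_1$ from above by the MMS; that is, we need $d_1\le \mu_i\cdot$ something.

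\emph{Main obstacle.} The crude bound $d_1\le\mu_i$ yields only $(2n-2k-1)/(n-k)\cdot\mu_i=(2-1/(n-k))\mu_i$. Checking this again: $n+(n-2k-1)=2n-2k-1$, and $(2n-2k-1)/(n-k)=2-1/(n-k)$, which is exactly the target. So the crude bound suffices, and the hypothesis $k\le\lfloor (n-1)/2\rfloor$ is used precisely to make the coefficient of $d_1$ nonnegative so that $d_1\le\mu_i$ may be substituted. The care needed is in the block inequality: every chore of $B_{i,r}$ costs at least $d_{r+1}$, and the dummy chores have cost zero, so the telescoping bound holds.
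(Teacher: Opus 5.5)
Your proposal is correct and follows essentially the same route as the paper: TIE and ex-ante EF come from the marginals of Lemma~\ref{lem:safe-implementation}, and the ex-post bound comes from safety plus the block inequality $c_i(M)\ge c_i(H_{i,[k]})+d_1+(n-k)\sum_{r\ge2}d_r$, combined with $d_1\le\mu_i$ and $n-2k-1\ge0$. One harmless slip in the EF step: the first term of your formula is not the same for $i$ and $j$, since $l_{-i}(e)-l_{-j}(e)=\mathbf{1}(e\in H_{j,[k]})-\mathbf{1}(e\in H_{i,[k]})$, so the correct coefficient of $c_i(H_{i,[k]})-c_i(H_{j,[k]})$ is $\frac{1}{n-1}$ rather than $\frac{1}{n}$, which does not change the sign or the conclusion.
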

\begin{proof}
    We first establish that Mechanism~\ref{alg:k-nomination} is TIE. Fix an agent $i \in N$, her true cost function $c_i$, and the reports of all other agents. 
    By Lemma~\ref{lem:safe-implementation}, the expected cost incurred by agent $i$ is
    \begin{align*}
        \E[c_i(X_i)]
        &= \sum_{e \in M} c_i(e) \left( \frac{\mathbf{1}(e \notin H_{i,[k]})}{n} + \frac{l'(e)}{n\cdot\left(n-1\right)} \right) \\
        &= \frac{c_i(M)}{n} - \frac{1}{n}\cdot c_i\!\left(H_{i,[k]}\right) + \frac{1}{n\cdot\left(n-1\right)} \sum_{e \in M} l'(e) c_i(e).
    \end{align*}
    In the above expansion, only the term $-\frac{1}{n}c_i\!\left(H_{i,[k]}\right)$ depends on agent $i$'s report. 
    Consequently, $\E[c_i(X_i)]$ is minimized by nominating her $k$ most costly chores with respect to $c_i$, which corresponds to truthful reporting.
    Furthermore, under truthful reporting, for every $j\in N$,
    \begin{equation*}
    \begin{aligned}
        \E[c_i(X_j)]-\E[c_i(X_i)]
        =\sum_{e\in M}c_i(e)
          \bigl(\Pr[e\in X_j]-\Pr[e\in X_i]\bigr)
        =\frac{c_i(H_{i,[k]})-c_i(H_{j,[k]})}{n-1}
        \geq 0,
    \end{aligned}
    \end{equation*}
    where the second equality is by Lemma~\ref{lem:safe-implementation} and the inequality holds because $H_{i,[k]}$ contains the $k$
    most costly chores according to $c_i$ whereas $H_{j,[k]}$ is another
    set of $k$ chores. Hence,
    $\E[c_i(X_i)]\leq \E[c_i(X_j)]$ for every $j\in N$, and the mechanism
    is ex-ante EF.

    Next, we prove the ex-post MMS approximation guarantee. 
    Fix an agent $i \in N$ and a realized allocation $\bX$. 
    If $|X_i| \le 1$, then $c_i(X_i) \le \mu_i$ holds immediately. 
    Now suppose that $|X_i| \ge 2$. 
    The safety condition guarantees that agent $i$ receives no nominated chore and at most one chore from each block $B_{i,r}$. 
    For each $r\in[b]$, let $y_r=c_{i,\left(k+(r-1)(n-k)+1\right)}$ be the largest cost of chores in $B_{i,r}$. Then $c_i\!\left(X_i\right)\le\sum_{r=1}^b y_r$.
    By construction, each nominated chore costs at least $y_1$. 
    Therefore, we have
    \begin{equation*}
        c_i(M) \ge c_i\!\left(H_{i,[k]}\right) + y_1 + (n-k) \sum_{r=2}^b y_r \ge (k+1)y_1 + (n-k) \sum_{r=2}^b y_r.
    \end{equation*}
    Consequently, it follows that $\sum_{r=2}^b y_r \le (c_i(M) - (k+1)\cdot y_1)/(n-k)$. 
    Combined with the bounds $c_i(M) \le n\cdot \mu_i$, $y_1 \le \mu_i$, and $n - 2k - 1 \ge 0$, we obtain
    \begin{equation}
c_i(X_i) \le y_1 + \frac{c_i(M) - (k+1) \cdot y_1}{n-k} = \frac{c_i(M) + (n-2k-1)\cdot y_1}{n-k} \le \left(2 - \frac{1}{n-k}\right) \mu_i. \label{eq:ratio-k-nomination}
\end{equation}

    Finally, we analyze the computational complexity of Mechanism~\ref{alg:k-nomination}. 
    Because the mechanism introduces at most $n$ dummy chores, sorting the chores and constructing the regularized block graph both run in polynomial time. The $n-k$ matchings can be efficiently computed via successive maximum bipartite matching algorithms. Furthermore, sampling $S$, the matching, and the random bijection relies on uniform integer sampling, which runs in expected polynomial time using standard rejection sampling.
\end{proof}

Notice that the approximation ratio decreases in $k$.
Given the upper bound of $k \le \lfloor(n-1)/2\rfloor$, the minimum ratio $2-\frac{1}{\lceil(n+1)/2\rceil}$ is achieved when $k = \lfloor(n-1)/2\rfloor$.
However, the construction above applies only when $n \ge 3$ as we have $k=0$ when $n=2$.
For two agents, we modify the mechanism such that each agent nominates one chore (i.e., $k=1$).
Depending on whether the two nominations coincide, we partition and balance the remaining chores between the two agents. 
In Appendix~\ref{sec:nomination-two}, we prove that under a suitably constructed partition, this mechanism achieves TIE and an ex-post $4/3$-MMS guarantee, matching the lower bound for two agents in Section~\ref{sec:lower_bounds} and showing that the TIE mechanism is optimal.

As a final remark, our $k$-nomination mechanism fits into the proper scoring rule framework of Freeman, Witkowski, Vaughan, and Pennock~\cite{journals/mansci/FreemanWVP24}.
Specifically, Freeman et al. provide an equivalence between the truthful fair division mechanisms of divisible resources and the wagering mechanisms.
A class of truthful mechanisms is then identified.
The probabilities in Lemma~\ref{lem:safe-implementation} yield a fractional division rule that belongs to this class.
Moreover, this class of truthful mechanisms characterized by Freeman et al. also guarantees envy-freeness.

\subsection{Limitations of the Nomination Mechanism}

In this subsection, we discuss the limitations of the $k$-nomination mechanism. 
We first clarify the requirement $k \le \lfloor(n-1)/2\rfloor$ in Theorem~\ref{thm:top-avoidance}. 
While the mechanism remains well-defined and TIE for any $1 \le k < n$, this restriction on $k$ is essential to secure the claimed approximation guarantee. 
Specifically, for the last inequality of~\Cref{eq:ratio-k-nomination} to hold, we need the coefficient of $y_1$ to be non-negative, which requires $k \le \lfloor(n-1)/2\rfloor$.
This performance loss is inherent to the mechanism: an increase in $k$ leaves fewer agents to share the chores that remain after the assignment of the $k$ singleton bundles. 
To illustrate this, consider an instance with $m = k + b(n-k)$ chores, where each chore incurs a cost of $1$ for all agents, for which we have $\mu_i = \lceil m/n \rceil$ for every agent $i$. 
Following Mechanism~\ref{alg:k-nomination}, each agent forms exactly $b$ blocks of size $n - k$. 
In every realization, each of the $k$ agents outside $S$ receives a single chore, whereas each of the $n - k$ agents in $S$ receives one chore from each of her $b$ blocks. 
Consequently, every agent $i \in S$ incurs a total cost of $b$. 
As $b$ increases, her approximation ratio satisfies
\begin{equation*}
\frac{c_i(X_i)}{\mu_i} = \frac{b}{\lceil (k + b(n-k))/n \rceil} \xrightarrow[b \to \infty]{} \frac{n}{n-k},
\end{equation*}
which is strictly larger than $2$ when $k>n/2$, and approaches $n$ when $k \to n-1$.

Within the valid range $k \le \lfloor(n-1)/2\rfloor$, we next establish that the approximation ratio of the $k$-nomination mechanism is tight and cannot be further improved.

\begin{lemma} \label{prop:nomination-tight}
    Fix $n \ge 3$ and $1 \le k \le \lfloor (n-1)/2 \rfloor$. For every $\varepsilon > 0$, there exists an instance where every realization of the $k$-nomination mechanism admits some agent $i \in N$ with
    \begin{equation*}
c_i(X_i) \ge \left(2 - \frac{1}{n-k} - \varepsilon\right)\mu_i.
\end{equation*}
\end{lemma}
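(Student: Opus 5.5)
The plan is to build a single instance in which all agents have the same cost function. Then every agent nominates the same $k$ chores and forms the same blocks, so the matching structure of Mechanism~\ref{alg:k-nomination} is forced and the bad event occurs in every realization. I aim to make inequality~\eqref{eq:ratio-k-nomination} asymptotically tight: this needs $c_i(M)\approx n\mu_i$, $y_1\approx\mu_i$ and every nominated chore of cost about $y_1$.

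The construction has a parameter $b\ge 2$ (taken large), with $m=k+b(n-k)$ chores, so no dummy chores are added. Every agent assigns cost $1$ to $k+1$ ``large'' chores $e_1,\dots,e_{k+1}$. She assigns cost $\delta:=(n-k-1)/T$ to each of the remaining $T:=b(n-k)-1$ ``tiny'' chores. The public tie-breaking order should rank $e_1,\dots,e_{k+1}$ first.

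First I bound the MMS, for which $1\le\mu_i\le 1+\delta$ suffices. The lower bound follows from Lemma~\ref{lem:basic-mms}, since some chore costs $1$. For the upper bound, place $e_1,\dots,e_{k+1}$ in singleton bundles. Then split the tiny chores, of total cost $n-k-1$, greedily among the remaining $n-k-1$ bundles so that each bundle has cost at most $1+\delta$.

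Next I trace the mechanism. Every agent nominates $H_{i,[k]}=\{e_1,\dots,e_k\}$. Her first block is $B_{i,1}=\{e_{k+1}\}\cup\{\text{$n-k-1$ tiny chores}\}$, and each later block $B_{i,r}$ ($r\ge2$) consists of $n-k$ tiny chores. All these blocks are identical across agents. Fix any realization with its set $S$ and matching $E'$. Each of the $n-k$ agents in $S$ is matched to one distinct chore of her copy of $B_{\cdot,1}$. These copies all equal the same $(n-k)$-element set, so this set is covered exactly, and in particular some agent $i\in S$ receives $e_{k+1}$. By safety (Lemma~\ref{lem:safe-implementation}), the same agent also receives exactly one tiny chore from each $B_{i,r}$ with $r\ge2$. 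Hence
\begin{equation*}
c_i(X_i)=1+(b-1)\delta=1+\frac{(b-1)(n-k-1)}{b(n-k)-1}.
\end{equation*}

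Finally, dividing by $\mu_i\le 1+\delta$ and letting $b\to\infty$ (so $\delta\to0$), the ratio tends to $1+\frac{n-k-1}{n-k}=2-\frac{1}{n-k}$. Choosing $b$ large enough for the given $\varepsilon$ gives the claim. The step needing most care is the saturation argument forcing $e_{k+1}$ onto some agent of $S$ in every realization. It relies on all blocks $B_{j,1}$, $j\in S$, being the same set of size exactly $n-k$, which is why identical cost functions and a fixed tie-breaking order are used.
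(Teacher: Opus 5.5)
Your proposal is correct and is essentially the paper's construction: identical costs, $k+1$ unit-cost chores, and small chores of total cost $n-k-1$. In fact the paper's instance is exactly yours with $b=1+t(n-k-1)$, which gives $\mu_i=1$ exactly, while you allow general $b$ and use $\mu_i\le 1+\delta$. Your saturation argument (the $n-k$ identical first blocks must cover the un-nominated large chore) justifies a step the paper only asserts; note only that receiving \emph{exactly} one chore per later block follows from the $R$-saturating matching, since the definition of safety by itself gives only \emph{at most} one.
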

\begin{proof}
    Fix an integer $t \ge 1$. 
    Consider an instance in which all agents share an identical cost function, and the total cost of all chores is $c_i(M) = n$ for each agent $i \in N$. 
    Let there be $k+1$ large chores, each with cost $1$, and $(n-k-1)(1+t(n-k))$ small chores, each with cost $1/(1+t(n-k))$.
    Consequently, we have $\mu_i = 1$ for all $i \in N$.
    Under the nomination mechanism, each agent nominates $k$ of the $k+1$ large chores. 
    For every agent, the sequence of non-nominated chores thus comprises a single unnominated large chore followed by all small chores. 
    Therefore, some agent $i \in S$ receives a large chore together with one small chore from each of her remaining $b-1$ blocks. 
    Consequently, the total cost incurred by agent $i$ is
    \begin{equation*}
c_i(X_i) = 1 + \frac{t(n-k-1)}{1+t(n-k)} = 2 - \frac{1}{n-k} - \frac{n-k-1}{(n-k)(1+t(n-k))} \xrightarrow[t \to \infty]{} 2 - \frac{1}{n-k},
\end{equation*}
    Therefore, when $t$ is chosen to be sufficiently large, we have $c_i(X_i) \ge 2 - 1/(n-k) - \varepsilon$ for any given $\varepsilon > 0$.
\end{proof}

\section{\texorpdfstring{$1.97$}{}-MMS Mechanism}
\label{sec:2_minus_eps_overview}
Although the nomination mechanism in \Cref{sec:nomination_overview} improves the approximation ratio, its guarantee approaches $2$ as $n\to\infty$.
Therefore, in this section, we introduce a new approach whose MMS approximation guarantee is bounded away from $2$ by a fixed constant for all sufficiently large $n$ (Theorem~\ref{thm:1.97}).

\subsection{Sorted Slot Rounding with Unequal Fractions}
\label{subsec:sorted-slots}

Our mechanism will use a subroutine that adjusts the uniform fractional allocation to produce several intermediate fractional allocations.
We first explain how to round a general fractional allocation into a distribution over integral allocations via sorted slots.

The mechanism in \Cref{sec:2-1/n} can be viewed as rounding the \emph{uniform fractional allocation}, in which each agent is assigned a fraction of $1/n$ of every chore.
For each agent, order the chores by nonincreasing cost, using a fixed public order for tie-breaking, and partition them into slots of $n$ chores.
Each full slot has total fraction $1$, and the mechanism rounds the fractional allocation while giving each agent at most one chore through each slot.

In the following, we show that this idea can be extended to unequal fractions.
Given a fractional allocation, in which the nonnegative fractions of each chore sum to one across agents, group each agent's positive chore fractions in cost order into slots of total fraction $1$, except possibly the last slot.
If a chore fraction crosses a slot boundary, split it between the two slots.
This fractional allocation can be rounded into a distribution over complete integral allocations that preserves the assigned fractions in expectation and gives each agent at most one chore through each slot (see Figure~\ref{fig:sorted-slot-example} for the illustration).

We need this extension because our plan is to construct several intermediate fractional allocations whose average is the uniform fractional allocation.
More precisely, let $R$ be the number of intermediate fractional allocations in the construction under consideration. For each $s\in[R]$, write $x^{(s)}=(x^{(s)}_{ie})_{i\in N,e\in M}$ for the $s$-th allocation, where $x^{(s)}_{ie}$ is the fraction of chore $e$ assigned to agent $i$ in that allocation.
To preserve the equal-marginal property, we require
\begin{equation*}
\frac1R \cdot \sum_{s=1}^R x^{(s)}_{ie}=\frac1n,\qquad \forall i\in N,\ e\in M.
\end{equation*}
We choose $s$ uniformly from $[R]$ and round only $x^{(s)}$. Thus $x^{(s)}_{ie}$ is also the conditional probability that agent $i$ receives chore $e$, given the choice of $s$.
We will design them so that every integral allocation produced by rounding any of them satisfies the desired $1.97$-MMS guarantee.
Choosing one intermediate fractional allocation uniformly at random and then rounding it preserves this guarantee and gives every agent each chore with probability $1/n$.

\begin{example}
Suppose $m=70$ and $n=10$, and let $e_1,\ldots,e_{70}$ be ordered by nonincreasing cost for an arbitrary agent $i$.
The uniform fractional allocation gives the slot partition
\begin{equation*}
\{e_1,\ldots,e_{10}\},\{e_{11},\ldots,e_{20}\},\ldots,\{e_{61},\ldots,e_{70}\}.
\end{equation*}
If agent $i$ is instead assigned a fraction of $0.2$ of each of the first ten chores and a fraction of $1/12$ of each remaining chore, then the partition is
\begin{equation*}
\{e_1,\ldots,e_5\},\{e_6,\ldots,e_{10}\},\{e_{11},\ldots,e_{22}\},\{e_{23},\ldots,e_{34}\},\ldots,\{e_{59},\ldots,e_{70}\}.
\end{equation*}
In both cases, every slot has total fraction $1$, so rounding gives agent $i$ at most one chore through each slot.
Figure~\ref{fig:sorted-slot-example} illustrates these two slot partitions.
\end{example}
\begin{figure}[htbp]
\centering
\begin{tikzpicture}[x=1.9cm,y=1cm,font=\small]
  \node[anchor=west] at (0,1.35) {(a) Uniform fractions: $x_{ie}=0.1$ for all $e\in M$};
  \fill[orange!25] (0,0) rectangle (1,0.55);
  \fill[blue!8] (1,0) rectangle (7,0.55);
  \foreach \j in {1,...,69} {
    \draw[black!25,thin] ({\j/10},0) -- ({\j/10},0.55);
  }
  \draw[thick] (0,0) rectangle (7,0.55);
  \foreach \s in {1,...,6} {\draw[thick] (\s,0) -- (\s,0.55);}
  \foreach \s/\a/\b in {1/1/10,2/11/20,3/21/30,4/31/40,5/41/50,6/51/60,7/61/70} {
    \node at ({\s-0.5},0.86) {$S_{\s}$};
    \node[font=\scriptsize] at ({\s-0.5},-0.25) {$e_{\a},\ldots,e_{\b}$};
  }

  \node[anchor=west] at (0,-1.05) {(b) Unequal fractions: $x_{ie}=0.2$ for $e\in\{e_1,\ldots,e_{10}\}$; $x_{ie}=1/12$ otherwise};
  \fill[orange!25] (0,-2.4) rectangle (2,-1.85);
  \fill[blue!8] (2,-2.4) rectangle (7,-1.85);
  \foreach \j in {1,...,9} {
    \draw[black!25,thin] ({\j/5},-2.4) -- ({\j/5},-1.85);
  }
  \foreach \j in {1,...,59} {
    \draw[black!25,thin] ({2+\j/12},-2.4) -- ({2+\j/12},-1.85);
  }
  \draw[thick] (0,-2.4) rectangle (7,-1.85);
  \foreach \s in {1,...,6} {\draw[thick] (\s,-2.4) -- (\s,-1.85);}
  \foreach \s/\a/\b in {1/1/5,2/6/10,3/11/22,4/23/34,5/35/46,6/47/58,7/59/70} {
    \node at ({\s-0.5},-1.54) {$S_{\s}$};
    \node[font=\scriptsize] at ({\s-0.5},-2.65) {$e_{\a},\ldots,e_{\b}$};
  }
  \draw[-{Stealth},semithick] (0,-3.18) -- (7,-3.18)
    node[midway,below] {Chores ordered by nonincreasing cost for agent $i$};
\end{tikzpicture}
\caption{Sorted slots for the example with $m=70$ and $n=10$.}
\label{fig:sorted-slot-example}
\end{figure}

Thus, changing the fractions changes the slot structure while retaining the guarantee of at most one chore through each slot after rounding.
To bound an agent's cost, we bound the first slot's contribution by the largest cost among chores of positive fraction, and each later slot's contribution by the fractional cost of the preceding full slot.
The following lemma formalizes this guarantee and a useful refinement.


\begin{lemma}[Sorted-slot rounding]\label{lem:sorted-slot-rounding}
Let $\mathbf{x}=(x_{ie})_{i\in N,e\in M}$ be a feasible fractional allocation.
There exists a decomposition of $\mathbf{x}$ into a distribution over integral allocations such that every allocation $\bX$ in the support satisfies
\begin{equation*}
c_i(X_i)\le \max_{e:x_{ie}>0}\left\{c_i(e)\right\} +\underbrace{\sum_{e\in M}x_{ie}\cdot c_i(e)}_{\text{fractional cost}}, \qquad \forall i\in N.
\end{equation*}
Moreover, if $|M|>n$ and $\sum_{e\in H_{i,[n]}}x_{ie}=1$, the upper bound can be strengthened to
\begin{equation*}
c_i(X_i)\le c_{i,(1)}+c_{i,(n+1)} +\sum_{e\in M\setminus H_{i,[n]}}x_{ie}\cdot c_i(e).
\end{equation*}
\end{lemma}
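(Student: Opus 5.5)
The plan is to build the decomposition by the classical Shmoys--Tardos style bipartite construction, generalizing Mechanism~\ref{alg:zero-nomination}. Fix a feasible $\mathbf{x}$. For each agent $i$, list the chores with $x_{ie}>0$ in the public nonincreasing-cost order for $c_i$. Cut this list into consecutive slots $S_{i,1},S_{i,2},\ldots$ of total fraction exactly $1$, except possibly the last. A chore straddling a boundary is split into two pieces, one in each slot.

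Build a bipartite graph with chores on the left and all slots $S_{i,r}$ on the right. Chore $e$ is joined to slot $S_{i,r}$ with edge weight equal to the portion of $x_{ie}$ lying in that slot. Every chore then has total incident weight $1$, and every slot has weight at most $1$ (exactly $1$ except possibly the last slot of each agent). This fractional point lies in the bipartite matching polytope with the chore side saturated. Adding a dummy vertex absorbing the deficit of non-full slots, or appealing directly to integrality of the bipartite $b$-matching polytope, lets us write it as a convex combination of matchings that saturate every chore and use each slot at most once. Sampling from this combination preserves $\Pr[e\in X_i]=x_{ie}$. Every allocation in the support gives agent $i$ at most one chore through each of her slots, and only chores with $x_{ie}>0$. (Lemma~\ref{lemma:bipartite-decomposition} handles the integral regular case; here I would instead invoke Birkhoff--von Neumann on the weighted graph after padding to a doubly stochastic matrix.)

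For the cost bound, let $y_r$ be the largest cost of a chore in $S_{i,r}$. Then $c_i(X_i)\le\sum_r y_r$. Here $y_1\le\max_{e:x_{ie}>0}c_i(e)$. For $r\ge2$, every chore in the full slot $S_{i,r-1}$ costs at least $y_r$ by the sorted order, and that slot has fraction $1$. Hence $y_r\le\sum_{e}(\text{fraction of }e\text{ in }S_{i,r-1})\,c_i(e)$. Summing over $r\ge2$ gives at most the total fractional cost, which yields the first bound.

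For the refinement, assume $\sum_{e\in H_{i,[n]}}x_{ie}=1$ and $|M|>n$. Then the positive-fraction chores of $H_{i,[n]}$ form exactly the first slot $S_{i,1}$, with no split crossing its boundary. So the contribution of slot $1$ is at most $c_{i,(1)}$. Slot $S_{i,2}$ consists only of chores outside $H_{i,[n]}$, so $y_2\le c_{i,(n+1)}$. For $r\ge3$, charge $y_r$ to the fractional cost of $S_{i,r-1}$ as before; these slots lie entirely in $M\setminus H_{i,[n]}$. This gives the second bound.

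The main obstacle is the existence of the decomposition with partial last slots and split chores: slots have unequal degrees and the weights are fractional. I would handle it cleanly via integrality of the bipartite matching polytope, padding non-full slots with dummy chores to obtain a doubly stochastic matrix.
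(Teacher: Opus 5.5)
Your proposal is correct and follows essentially the same route as the paper. Both arguments pad the chore--slot graph with dummy chores into a doubly stochastic matrix and apply Birkhoff--von Neumann, then charge each later slot to the fractional cost of the preceding full slot, using that the positive-fraction chores of $H_{i,[n]}$ exactly fill the first slot for the strengthened bound.
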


The proof of Lemma~\ref{lem:sorted-slot-rounding} is given in Appendix~\ref{subsec:rounding}.
In the following, we refer to the two upper bounds as the general bound and the strengthened bound (which further requires $\sum_{e\in H_{i,[n]}}x_{ie}=1$).
Note that for the uniform fractional allocation, the general bound becomes
\begin{equation*}
c_i(X_i)\le c_{i,(1)}+{c_i(M)}/n~.
\end{equation*}

\subsection{Easy and Hard Agents}
\label{subsec:intuitions}
In this subsection, we first examine the guarantee obtained by rounding the uniform
fractional allocation using \Cref{lem:sorted-slot-rounding}.
This allows us to identify the agents whose fractional shares
need to be adjusted to achieve the desired approximation guarantee.
In the remainder of this section, unless specified otherwise, we fix an arbitrary agent $i \in N$.
We further define
\begin{equation*}
L_i=\max\left\{{c_i(M)}/n,\ c_{i,(1)},\ 2c_{i,(n+1)}\right\}.
\end{equation*}
We remark that each term is a lower bound on $\mu_i$.
Thus, we have $0 < L_i\le\mu_i$.
By normalizing each cost function $c_i$ by $L_i$, we assume without loss of generality that $L_i = 1$ for all $i \in N$.
Equivalently, we have
\begin{equation*}
c_i\!\left(M\right)\le n,\qquad c_{i,(1)}\le1,\qquad c_{i,(n+1)}\le0.5.
\end{equation*}

For the uniform fractional allocation $x_{ie}=1/n$, the fractions on $H_{i,[n]}$ sum to $1$.
Thus, the strengthened bound in \Cref{lem:sorted-slot-rounding} applies.
Using $x_{ie}=1/n$ and denoting the resulting bound by $\costbound{i}$, we obtain
\begin{equation*}
c_i(X_i)
 \le \underbrace{c_{i,(1)}}_{\text{largest cost}}
    +\underbrace{c_{i,(n+1)}}_{\substack{\text{largest cost}\\\text{outside }H_{i,[n]}}}
    +\underbrace{\frac{1}{n}\cdot c_i(M\setminus H_{i,[n]})}_{\substack{\text{fractional cost}\\\text{outside }H_{i,[n]}}}
 =:\costbound{i}.
\end{equation*}
Since $L_i = 1 \le \mu_i$, the bound $c_i(X_i) \le \costbound{i}$ directly implies an MMS approximation ratio of $\costbound{i}$.
To write $\costbound{i}$ as a function of the total cost, we further define
\begin{equation*}
\begin{aligned}
  \excess{i}:=c_i\!\left(H_{i,[n]}\right)-n\cdot c_{i,(n+1)}
      =\sum_{e\in M}\left(c_i\!\left(e\right)-c_{i,(n+1)}\right)^+,
 \text{ and }
  \costbound{i}=c_{i,(1)}+\frac{c_i(M)-\excess{i}}{n},
 \end{aligned}
\end{equation*}

\Cref{fig:areas} illustrates these quantities on the sorted cost profile.
Specifically, the red area is $\excess{i}$, and removing it leaves the blue area $c_i(M)-\excess{i}=n\cdot c_{i,(n+1)}+c_i(M\setminus H_{i,[n]})$.
Dividing the blue area by $n$ and adding $c_{i,(1)}$ gives $\costbound{i}$.

\begin{figure}[htbp]
\centering
\begin{tikzpicture}
[x=0.87cm,y=4.05cm,>=Latex,
every node/.style={font=\small}]
 \def\base{0.44}
 \foreach \x/\h in {1/0.98,2.15/0.86,3.30/0.76,4.45/0.67,
                    5.60/0.59,6.75/0.51,7.90/0.44,9.05/0.37,
                    10.20/0.29,11.35/0.21,12.50/0.13} {
   \pgfmathsetmacro{\lowerheight}{min(\h,\base)}
   \fill[blue!25] (\x,0) rectangle ({\x+1},\lowerheight);
   \ifdim\h pt>\base pt
     \fill[red!33] (\x,\base) rectangle ({\x+1},\h);
   \fi
   \draw[black!70,line width=0.35pt] (\x,0) rectangle ({\x+1},\h);
 }
 \draw[line width=0.9pt] (1,0) rectangle (2,0.98);
 \draw[line width=0.9pt] (7.90,0) rectangle (8.90,0.44);
 \draw[->] (0.65,0) -- (14.1,0);
 \draw[densely dotted,line width=0.8pt] (0.62,\base) -- (14.15,\base);
 \draw[densely dotted,black!55] (0.47,0.98) -- (1,0.98);
 \draw[<->,line width=0.65pt] (0.52,0) --
     node[left=3pt] {$c_{i,(1)}$} (0.52,0.98);
 \draw[<->,line width=0.65pt] (14.0,0) --
     node[right=3pt] {$c_{i,(n+1)}$} (14.0,\base);
 \fill (8.4,\base) circle[radius=1.35pt];
 \node[above right,align=left] at (9.5,0.6)
    {$(n+1)$-th chore:\\height $c_{i,(n+1)}$};
 \draw[->,black!65] (9.6,0.61) -- (8.45,0.455);
 \node[text=red!75!black,align=center] at (8.5,1)
    {Excess above $c_{i,(n+1)}$\\red area $=\excess{i}$};
 \draw[->,red!75!black] (7.0,0.91) -- (5.50,0.69);
 \foreach \x/\lab in {1.5/1,2.65/2,3.8/3,4.95/\cdots,
                       6.1/n-1,7.25/n,8.4/n+1,
                       9.55/n+2,10.7/\cdots,11.85/m-1,13/m} {
   \node[below=3pt] at (\x,0) {$\lab$};
 }
 \draw[black!65] (1,-0.13) -- (1,-0.18) -- (7.75,-0.18) -- (7.75,-0.13);
 \node[below=4pt] at (4.375,-0.18) {$H_{i,[n]}$: $n$ most costly chores};
 \draw[black!65] (7.90,-0.13) -- (7.90,-0.18) -- (13.50,-0.18) -- (13.50,-0.13);
 \node[below=4pt] at (10.70,-0.18) {remaining chores};
\end{tikzpicture}
\caption{Area interpretation of $\costbound{i}$. Each bar has height equal to the chore's cost. The red area above $c_{i,(n+1)}$ is $\excess{i}$, and the blue area is $n\cdot c_{i,(n+1)}+c_i(M\setminus H_{i,[n]})=c_i(M)-\excess{i}$.}
\label{fig:areas}
\end{figure}

Since $c_{i,(1)}\le1$ and $(c_i(M)-\excess{i})/n\le c_i(M)/n\le1$, we have $\costbound{i}\le2$.
Intuitively, for a fixed total cost and largest chore cost, a larger excess $\excess{i}$ yields a tighter bound $\costbound{i}$.
Conversely, for $\costbound{i}$ to approach $2$, $c_{i,(1)}$ must be close to $1$ and $\excess{i}/n$ must be small.
This structure naturally motivates us to partition the agents into two groups based on $\costbound{i}$.

\paragraph{Easy and Hard Agents.}
Throughout the analysis, we fix $\delta = 0.05$.
We classify an agent $i$ as \emph{easy} if $\costbound{i} \le 2 - \delta = 1.95$, and \emph{hard} otherwise.
Let $Q = \{i : \costbound{i} > 1.95\}$ denote the set of hard agents.
Thus, every easy agent exhibits a slack of $0.02$ relative to the target threshold of $1.97$.
If all agents are easy, simply rounding the uniform fractional allocation guarantees an ex-post $1.95$-MMS bound.
Hence, our main challenge is to establish an approximation ratio below $2$ for hard agents while maintaining equal marginals.

\medskip

To address the difficulty, we redistribute their fractions differently across several \emph{intermediate fractional allocations}, whose average under the sampling distribution remains the uniform allocation.
We then sample an intermediate allocation and round it using \Cref{lem:sorted-slot-rounding}.
Since rounding preserves the fractions in expectation, the final mechanism assigns each chore to each agent with probability $1/n$ at every report profile, which ensures TIE.
Therefore, our goal is to design these fractions so that every integral allocation produced by rounding any intermediate allocation satisfies a $1.97$-MMS guarantee for every agent.

In fact, the cost bound in \Cref{lem:sorted-slot-rounding} guides these adjustments: it consists of the largest cost among chores of positive fraction and the total fractional cost.
For each hard agent, we consider two complementary adjustments:
\begin{itemize}
    \item Increase her fractions of large chores and decrease those of smaller chores enough to reduce her total fractional cost, e.g., see Figure~\ref{fig:sorted-slot-example} (b).
    \item Set her fractions of these large chores to zero and increase those of smaller chores, which reduces the largest cost among chores of positive fraction.
\end{itemize}
We will use the structural properties of hard agents to coordinate these adjustments across agents, keeping each intermediate allocation feasible and their average equal to the original uniform fractional allocation.
The following example illustrates the two adjustments.

\begin{example}\label{example:intermediate-allocations}
Set $n=100$, and consider an agent $i$ whose cost function consists of one large chore of cost $1$ and $9{,}900$ small chores, each of cost $0.01$.
Her total cost is $c_i(M) = 100$, and her MMS is $\mu_i = 1$.
Under the uniform fractional allocation, agent $i$ receives a fractional cost of $1$.
Since $\excess{i} = 0.99$, her corresponding upper bound is $\costbound{i} = 1 + (100 - 0.99)/100 = 1.9901$.
Here, we observe that $c_i(H_{i,[25]}) = 1.24$ and $c_i(M \setminus H_{i,[25]}) = 98.76$.
Consider the two candidate fractional options for agent $i$ presented in \Cref{tab:two-intermediate}, each chosen with probability $0.5$.
\end{example}
\begin{table}[ht!]
    \centering
    \begin{tabular}{lcc}
\toprule
 & First intermediate allocation & Second intermediate allocation\\
\midrule
Each chore in $H_{i,[25]}$ & $4/200$ & $0$\\
Each chore outside $H_{i,[25]}$ & $1/200$ & $3/200$\\
\bottomrule
\end{tabular}
    \caption{Agent $i$'s fractions in two equally likely intermediate allocations.}
    \label{tab:two-intermediate}
\end{table}

In each row, the two fractions average to $1/100$, recovering the uniform fractional allocation for agent $i$.
In the first allocation, doubling the fractions in $H_{i,[25]}$ adds little fractional cost, while halving those outside $H_{i,[25]}$ removes much more.
The largest chore of positive fraction still has cost $1$, but the fractional cost decreases to $0.5186$.
By \Cref{lem:sorted-slot-rounding}, every integral allocation produced by rounding satisfies
\begin{equation*}
c_i(X_i)\le \underbrace{1}_{\substack{\text{largest cost among}\\\text{chores of positive fraction}}} +\underbrace{0.02\times 1.24+\frac{98.76}{200}}_{\text{fractional cost}} =1.5186~.
\end{equation*}
In the second allocation, all fractions in $H_{i,[25]}$ are zero, so the largest chore of positive fraction has cost only $0.01$.
This reduction compensates for the higher fractional cost of $1.4814$, giving
\begin{equation*}
c_i(X_i)\le \underbrace{0.01}_{\substack{\text{largest cost among}\\\text{chores of positive fraction}}} +\underbrace{\frac3{200}\times 98.76}_{\text{fractional cost}} =1.4914~.
\end{equation*}
Both bounds improve on $\costbound{i}=1.9901$, while the average fraction of each chore remains $0.01$.

\paragraph{Rebalancing for General Hard Agents.}
The example illustrates two complementary ways to obtain a rounding bound below $2$.
The structural properties of hard agents allow us to choose suitable sets of most costly chores and suitable fractions for the general construction.
For a hard agent, both terms $c_{i,(1)}$ and $(c_i(M)-\excess{i})/n$ must be close to $1$.
Moreover, the second term can be close to $1$ only if $\excess{i}/n$ is close to $0$.
That is, the red area in \Cref{fig:areas} must be small relative to $n$.
We make these bounds explicit:
For a hard agent $i$, we have $2-\costbound{i}<\delta$, which is
\begin{equation*}
(1-c_{i,(1)})+\left(1-\frac{c_i(M)}n\right)+\frac{\excess{i}}n<\delta.
\end{equation*}
Since all terms on the left-hand side are nonnegative, we have
\begin{equation}\label{eqn:hard-Ei-average-pi}
      \excess{i}<n\cdot (c_{i,(1)}-1+\delta)\le n\cdot\delta,
      \qquad
      \frac{c_i(M)}n>1-\delta+\frac{\excess{i}}n,
      \qquad c_{i,(1)}>1-\delta.
\end{equation}

For a hard agent $i$, the largest chore has cost close to $1$,
but only a limited number of chores can have similarly high costs,
since their total excess above $c_{i,(n+1)}$ is less than $n\delta$.
To make this precise, let $q=\lfloor n/4\rfloor$.
Since each of the first $q+1$ chores has cost at least $c_{i,(q+1)}$, we have
\begin{equation*}
(q+1)\cdot\left(c_{i,(q+1)}-c_{i,(n+1)}\right)
\le \excess{i}<n\cdot\delta.
\end{equation*}
Since $c_{i,(n+1)}\le0.5$ and $q+1>n/4$, we obtain
\begin{equation*}
c_{i,(q+1)}
< c_{i,(n+1)}+\frac{n\cdot\delta}{q+1}
<0.5+4\cdot\delta=0.7.
\end{equation*}
Thus, assigning zero fractions to the first $q$ chores reduces
the largest cost among chores of positive fraction from close
to $1$ to below $0.7$.

These observations explain the two adjustments in
Example~\ref{example:intermediate-allocations}.
In the first intermediate allocation, increasing the fractions on $H_{i,[25]}$ adds little cost because these chores have small total cost. 
Decreasing the fractions on the remaining chores more than offsets this increase, which reduces the fractional cost while leaving the largest eligible cost unchanged.
In the second intermediate allocation, setting the fractions on $H_{i,[25]}$ to zero reduces the largest eligible cost to $c_{i,(26)}$. 
This allows a higher fractional cost while still improving the rounding guarantee.
Thus, the two adjustments improve different terms of the bound in \Cref{lem:sorted-slot-rounding}.

\paragraph{Coordinating Adjustments Across Agents.}
The adjustments above improve an individual hard agent's
rounding guarantee, but increasing her fraction of a chore
requires decreasing someone else's.
To ensure the equal-marginal property, our construction therefore depends on the number of hard agents.
When there are at least two, we arrange them in pairs and,
if needed, one triple, whose members compensate for one another's
changes. We develop the construction for two hard agents (Section~\ref{subsec:hard-pair}) and extend it to more hard agents (Section~\ref{subsec:hard-triple}).
When there is only one hard agent, the easy agents must absorb her adjustments. Section~\ref{subsec:hard-singleton} uses their slack to do so while preserving their $1.97$-MMS guarantee.
Finally, Section~\ref{subsec:proof} combines these cases into the
complete mechanism and proves truthfulness.

\subsection{Two Hard Agents}
\label{subsec:hard-pair}
\label{subsubsec:pair}

Throughout the remaining analysis, let $q:=\left\lfloor n/4\right\rfloor$.
Therefore, $H_{i,[q]}$ contains the top-$\left\lfloor n/4\right\rfloor$ most costly chores of agent $i$.
For convenience, we slightly abuse the notation and use $H_i$ to denote $H_{i,[q]}$.
Consider two hard agents $i,j$.
Since $\left|H_{i}\cup H_{j}\right|\le2q$ and $c_{i,(n+1)}\le0.5$, \Cref{eqn:hard-Ei-average-pi} gives
\begin{equation}\label{eq:union-cost}
c_i\!\left(H_{i}\cup H_{j}\right)\le c_{i,(n+1)}\cdot\left|H_{i}\cup H_{j}\right|+\excess{i}<q+0.05\cdot n\qquad \forall i,j\in Q.
\end{equation}
We remark that this bound does not require the two agents to rank the chores similarly.

\paragraph{The Two Fractional Rules.}
For this pair, we index the two alternatives by $s \in \{1,2\}$, where $s=1$ selects agent $i$ and $s=2$ selects agent $j$, each with probability $1/2$.
The resulting allocation $x^{(1)}$ (corresponding to $s=1$) is defined as the following fractions:
\begin{center}
\begin{tabular}{lcc}
\toprule
Chore & Selected member $i$ & Other member $j$\\
\midrule
$e\in H_{i}\cup H_{j}$ & $2/n$ & $0$\\[2pt]
$e\notin H_{i}\cup H_{j}$ & $1/(2n)$ & $3/(2n)$\\
\bottomrule
\end{tabular}
\end{center}
When agent $j$ is selected ($s=2$), the assignment $x^{(2)}$ is obtained by swapping the columns.
In either intermediate allocation, the pair's total fraction of every chore is $2/n$, and each member's average fraction is $1/n$.
Therefore, if these are the only two hard agents, every other agent keeps fraction $1/n$ of every chore.

By \Cref{eq:union-cost}, the selected member's fractional cost is
\begin{equation*}
\frac{2}{n}\cdot c_i\!\left(H_{i}\cup H_{j}\right)+\frac{1}{2n}\cdot c_i\!\left(M\setminus\left(H_{i}\cup H_{j}\right)\right)<0.5+1.5\cdot\left(\frac{q}{n}+0.05\right).
\end{equation*}
Since every chore has cost at most $1$ under the normalization, \Cref{lem:sorted-slot-rounding} yields
\begin{equation}\label{eq:pair-selected}
c_i\!\left(X_i\right)<1+0.5+1.5\cdot\left(\frac{q}{n}+0.05\right) \leq 1.5 + 1.5\times 0.3 = 1.95.
\end{equation}
On the other hand, suppose hard agent $i$ has fraction zero on $H_i$ and fraction at most $3/(2n)$ on every other chore.
The chores in $H_{i,[q+1]}$ satisfy
\begin{equation*}
c_{i,(1)}-c_{i,(n+1)}+q\cdot\left(c_{i,(q+1)}-c_{i,(n+1)}\right)\le\excess{i}<n\cdot\left(c_{i,(1)}-0.95\right).
\end{equation*}
Rearranging terms and applying $c_{i,(1)}\le1$ and $c_{i,(n+1)}\le0.5$ yields $c_{i,(q+1)}<0.5+\frac{0.05n-0.5}{q}$.
Combining this with $c_i\!\left(H_{i}\right)\ge q\cdot c_{i,(q+1)}$, \Cref{lem:sorted-slot-rounding} implies that
\begin{equation}
\label{eq:nonselected-bound}
c_i\!\left(X_i\right)
\le c_{i,(q+1)}+\frac{3}{2n}\cdot\left(c_i\!\left(M\right)-c_i\!\left(H_{i}\right)\right)
\le1.5+\left(1-\frac{3q}{2n}\right)\cdot c_{i,(q+1)} < 1.95.
\end{equation}
where the last inequality holds for all $n\ge 4$ and $q = \lfloor n/4\rfloor$.
Thus both members of the pair have cost strictly below $1.95$ in every rounded allocation.
When these are the only hard agents, every other agent keeps her uniform fractions and has cost at most $1.95$.

\subsection{More Than Two Hard Agents}
\label{subsec:hard-triple}
\label{subsubsec:triple}
We now extend the pairing construction from \Cref{subsec:hard-pair} to general settings with more than two hard agents.
When the number of hard agents is even, we partition them into pairs and directly apply the construction in \Cref{subsec:hard-pair}.
If the number is odd and greater than $1$, we form exactly one triple and pair all remaining hard agents.

Below, we detail the construction for a triple $\{i, j, k\}$, using the same parameters $q = \lfloor n/4 \rfloor$ and $\delta = 0.05$.
We index the three candidate configurations by $s \in \{1, 2, 3\}$, where $s = 1, 2, 3$ corresponds to selecting agent $i, j,$ and $k$, respectively, with equal probability $1/3$.
Let $x^{(s)}$ denote the intermediate allocation associated with choice $s$.
After redistributing the triple's total fraction $3/n$ for each chore, the final fractions of agents $i,j,k$ are summarized below.

\begin{center}
\begin{tabular}{lccc}
\toprule
Chore & Agent $i$ & Agent $j$ & Agent $k$\\
\midrule
$e\in \left(H_{i}\cup H_{j}\right)\cap\left(H_{i}\cup H_{k}\right)$
& $3/n$ & $0$ & $0$\\[2pt]

$e\in \left(H_{i}\cup H_{j}\right)\setminus\left(H_{i}\cup H_{k}\right)$
& $3/(2n)$ & $0$ & $3/(2n)$\\[2pt]

$e\in \left(H_{i}\cup H_{k}\right)\setminus\left(H_{i}\cup H_{j}\right)$
& $3/(2n)$ & $3/(2n)$ & $0$\\[2pt]

$e\notin \left(H_{i}\cup H_{j}\cup H_{k}\right)$
& $0$ & $3/(2n)$ & $3/(2n)$\\
\bottomrule
\end{tabular}
\end{center}

When agent $i$ is selected, the fractions in the table can be written as follows.
Writing $\mathbf{1}(P)$ for the indicator of a condition $P$ (equal to $1$ when $P$ holds and $0$ otherwise), we have
\begin{equation*}
\begin{aligned}
x_{ie}^{(1)}&=\frac{3}{2n}\cdot\left(\mathbf{1}\left(e\in H_{i}\cup H_{j}\right)+\mathbf{1}\left(e\in H_{i}\cup H_{k}\right)\right),\\
x_{je}^{(1)}&=\frac{3}{2n}\cdot\mathbf{1}\left(e\notin H_{i}\cup H_{j}\right),\qquad
x_{ke}^{(1)}=\frac{3}{2n}\cdot\mathbf{1}\left(e\notin H_{i}\cup H_{k}\right).
\end{aligned}
\end{equation*}
The entries of $x^{(2)}$ and $x^{(3)}$ are obtained by exchanging roles.
In every intermediate allocation, the triple contributes a total fraction of $3/n$ to every chore.

To verify that the average fractions are uniform, fix an agent $i$ in the triple and a chore $e\in M$.
We consider all three choices of the selected agent.
When $i$ is selected, her fraction $x_{ie}^{(1)}$ is given above.
When $j$ or $k$ is selected, respectively, her fraction is
\begin{equation*}
x_{ie}^{(2)}=\frac{3}{2n}\mathbf{1}(e\notin H_i\cup H_j),
\qquad
x_{ie}^{(3)}=\frac{3}{2n}\mathbf{1}(e\notin H_i\cup H_k).
\end{equation*}
Since each choice has probability $1/3$, her average fraction is
\begin{align*}
\frac{x_{ie}^{(1)}+x_{ie}^{(2)}+x_{ie}^{(3)}}{3}
&=\frac{1}{2n}\left(
\mathbf{1}(e\in H_i\cup H_j)+\mathbf{1}(e\notin H_i\cup H_j)
+\mathbf{1}(e\in H_i\cup H_k)+\mathbf{1}(e\notin H_i\cup H_k)
\right)\\
&=\frac{1}{2n}(1+1)=\frac1n.
\end{align*}
The last equality holds for every chore: whether or not it belongs to either union, the corresponding two indicators sum to one.
Thus every agent in the triple receives an average fraction of $1/n$ of every chore.
We next bound the cost of each member after rounding. For the selected member $i$, \Cref{eq:union-cost} gives fractional cost
\begin{equation*}
\frac{3}{2n}\cdot\left(c_i\!\left(H_{i}\cup H_{j}\right)+c_i\!\left(H_{i}\cup H_{k}\right)\right)<3\cdot\left(\frac{q}{n}+0.05\right).
\end{equation*}
Adding at most one additional chore yields
\begin{equation}\label{eq:triple-selected}
c_i\!\left(X_i\right)<1+3\cdot\left(\frac{q}{n}+0.05\right)\le 1.9.
\end{equation}
The nonselected agent $j$ (resp. $k$) has zero fraction on $H_{j}$ (resp. $H_{k}$) and a fraction at most $3/(2n)$ on every other chore.
Hence \Cref{eq:nonselected-bound} applies without change, and every nonselected member has cost strictly below $1.95$.

\paragraph{Combining Pairs and Triples.}
For $|Q|\ge2$, we partition the hard agents into pairs using a fixed public order, with one triple if $|Q|$ is odd.
We apply the constructions above within each group and keep every easy agent's fractions at $1/n$.
For the pairs, select either the first member of every pair or the second member of every pair, each with probability $1/2$.
If a triple is present, independently select one of its members with probability $1/3$ each.
Recall that $R$ denotes the number of intermediate fractional allocations. Thus, $R=2$ for pairs only, $R=3$ for a single triple, and $R=6$ when both are present, with all $R$ allocations equally likely.
Each group preserves its total fraction of every chore, so the combined allocations are feasible.
Their average is the uniform allocation, and rounding preserves these fractions in expectation, giving an equal-marginal mechanism.
The preceding cost bounds ensure that every realized allocation gives each agent a normalized cost of at most $1.95$.
\subsection{Exactly One Hard Agent}
\label{subsec:hard-singleton}
It remains to consider the most challenging case of $|Q|=1$.
Let agent $1$ be the unique hard agent.
Every other agent $j$ is easy and therefore satisfies $\costbound{j}\le1.95$, leaving a slack of $0.02$ before the target $1.97$.
Note that this is the \emph{only} part of the proof where we use the assumption $n\ge19$.
The reason is structural: with a unique hard agent, no other hard agent is available to offset the adjustment as in previous constructions.
We therefore spread the adjustment of agent $1$ among all $n-1$ easy agents.
The lower bound $n\ge19$ is used to ensure that the additional cost absorbed by each easy helper is smaller than her available slack of $0.02$.

\paragraph{The Two Intermediate Fractional Allocations.}
Let $h$ be the number of chores whose cost to agent $1$ is strictly larger than $0.94$.
By the fixed tie-breaking order, these chores are exactly $H_{1,[h]}$.
Let $\Delta:=\frac{1}{n\cdot\left(n-1\right)}$.
We construct two intermediate fractional allocations $x^{(1)}$ and $x^{(2)}$, each chosen with probability $0.5$.
Starting from the uniform allocation $x_{ie} = 1/n$ for every agent $i$ and chore $e$, we exchange fractions between agent $1$ and each easy agent $j \neq 1$ according to the following table, where the values of $a_j$ and $b$ will be determined later.
%
%
The choice of $\Delta$ ensures that in allocation $x^{(2)}$, agent $1$ receives $0$ fraction of chores in $H_{1,[h]}$, after exchanging with the other $n-1$ agents.

\begin{center}
\begingroup
\renewcommand{\arraystretch}{1.25}
\begin{tabular}{lcc}
\toprule
 & \multicolumn{2}{c}{Change in agent $1$'s fraction from the exchange with $j$}\\
\cmidrule(lr){2-3}
Chore & Intermediate allocation $1$ & Intermediate allocation $2$\\
\midrule
$e\in H_{1,[h]}$ & $+\Delta$ & $-\Delta$\\
$e\in H_{j,[n]}\setminus H_{1,[h]}$ & $-a_j\cdot\Delta$ & $+a_j\cdot\Delta$\\
$e\notin H_{j,[n]}\cup H_{1,[h]}$ & $-b\cdot\Delta$ & $+b\cdot\Delta$\\
\bottomrule
\end{tabular}
\endgroup
\end{center}

\paragraph{Choice of Parameters $a_j$ and $b$.}
Recall that for an easy agent $j$, the strengthened bound of \Cref{lem:sorted-slot-rounding} ensures that the rounded cost is at most $\Gamma_j \leq 1.95$.
However, the strengthened bound is applicable only when $\sum_{e\in H_{j,[n]}} x_{j e} = 1$.
Therefore, we need to ensure that the exchange of fractions between agents $1$ and $j$ preserves this key property.
This can be ensured by setting
\begin{equation*}
    a_j:=\frac{\left|H_{1,[h]}\cap H_{j,[n]}\right|}{n-\left|H_{1,[h]}\cap H_{j,[n]}\right|} = \frac{\left|H_{1,[h]}\cap H_{j,[n]}\right|}{\left| H_{j,[n]} \setminus H_{1,[h]} \right|}.
\end{equation*}
Then the total change of agent $j$'s fraction on $H_{j,[n]}$ is zero in both intermediate allocations.
%

Next, we select a uniform value of $b$ such that the combined fraction exchanges reduce agent $1$'s fractional cost in the first allocation by at least $0.03$.
Specifically, we set
\begin{equation*}
b:=\frac{\sum_{j\ne1}\left(c_1\!\left(H_{1,[h]}\right)-a_j\cdot c_1\!\left(H_{j,[n]}\setminus H_{1,[h]}\right)+0.03\cdot n\right)}{\sum_{j\ne1}c_1\!\left(M\setminus\left(H_{j,[n]}\cup H_{1,[h]}\right)\right)}.
\end{equation*}
\paragraph{Validity of the Construction.}
By \Cref{eqn:hard-Ei-average-pi}, we have
\begin{equation*}
\excess{1}<0.05\cdot n,\qquad \frac{c_1\!\left(M\right)}{n}>0.95+\frac{\excess{1}}{n},\qquad c_{1,(1)}>0.95.
\end{equation*}
In particular, $H_{1,[h]}$ is nonempty.
Every chore in $H_{1,[h]}$ contributes more than $0.94-0.5=0.44$ to $\excess{1}$.
Therefore, we have
\begin{equation*}
|H_{1,[h]}| < \frac{\excess{1}}{0.44} < \frac{5n}{44},\qquad c_1\!\left(H_{1,[h]}\right)<\frac{47}{440}\cdot n.
\end{equation*}
Consequently, $\left|H_{1,[h]}\cap H_{j,[n]}\right|<\frac{5n}{44}$ and $0\le a_j<\frac{5}{39}$.
By the definition of $\excess{1}$, we have
\begin{equation*}
c_1\!\left(H_{j,[n]}\cup H_{1,[h]}\right)\leq c_{1,(n+1)}\cdot \left|H_{1,[h]}\cup H_{j,[n]}\right| + E_1 
< 0.5\cdot \left( n+\frac{\excess{1}}{0.44} \right) + \excess{1}
= \frac{n}{2}+\frac{47\excess{1}}{22},
\end{equation*}
which implies that 
\begin{equation*}
\frac{c_1\!\left(M\setminus\left(H_{j,[n]}\cup H_{1,[h]}\right)\right)}{n}>0.45-\frac{25}{22}\cdot\frac{\excess{1}}{n}>\frac{173}{440}.
\end{equation*}
The denominator in the definition of $b$ is therefore positive.
Moreover, for every $j\ne1$, every chore in $H_{j,[n]}\setminus H_{1,[h]}$ costs at most $0.94$ to agent $1$, while every chore in $H_{1,[h]}$ costs more than $0.94$. Hence,
\begin{equation*}
a_j\cdot c_1\!\left(H_{j,[n]}\setminus H_{1,[h]}\right)
\le 0.94\cdot\left|H_{1,[h]}\cap H_{j,[n]}\right|
<c_1\!\left(H_{1,[h]}\right),
\end{equation*}
so the numerator is positive as well and $b>0$.
By definition of $b$ and the above inequalities, we obtain
\begin{equation*}
b < \frac{\sum_{j\ne1} \left( c_1\!\left(H_{1,[h]}\right) + 0.03\cdot n \right)}{\frac{173}{440}\cdot n\cdot (n-1)}
< \frac{47/440+0.03}{173/440}=\frac{301}{865}<0.35.
\end{equation*}

Therefore, all easy agents' fractions (even after being decreased by $b\cdot \Delta$) are positive.
On every chore in $H_{1,[h]}$, agent $1$'s two fractions are $2/n$ and $0$.
Outside $H_{1,[h]}$, her fraction in allocation $1$ is $1/n$ minus $n-1$ nonnegative transfers, each strictly smaller than $\Delta$, so it lies in $\left(0,1/n\right]$.
Her fraction in allocation $2$ is its complement to $2/n$, so it lies in $\left[1/n,2/n\right)$.
Thus both intermediate allocations are feasible.

It remains to bound the costs after rounding, starting with the hard agent. By the choice of $b$, the exchanges together change agent $1$'s fractional cost in allocation $1$ by $-0.03$.
Reversing the changes in allocation $2$ yields
\begin{equation*}
\sum_e c_1\!\left(e\right)\cdot x^{(1)}_{1e}=\frac{c_1\!\left(M\right)}{n}-0.03,\qquad \sum_e c_1\!\left(e\right)\cdot x^{(2)}_{1e}=\frac{c_1\!\left(M\right)}{n}+0.03.
\end{equation*}
Recall that the largest eligible chore costs at most $1$ in allocation $1$ and at most $0.94$ in allocation $2$.
Therefore, we have $c_1\!\left(X_1\right)\le 1.97$ after rounding in both intermediate allocations.

For each easy agent $j$, the total fraction of $H_{j,[n]}$ remains one, so the strengthened bound of \Cref{lem:sorted-slot-rounding} still applies, and the increase in the bound comes from the change in fractional cost.
In allocation $1$, only chores outside $H_{j,[n]}\cup H_{1,[h]}$ can increase the fractional cost, and their contribution is at most
\begin{equation*}
b\cdot\Delta\cdot c_j\left(M\setminus\left(H_{j,[n]}\cup H_{1,[h]}\right)\right)\le\frac{b}{n-1}<\frac{0.35}{n-1}\le\frac{0.35}{18}<0.02.
\end{equation*}
Here we use $n\ge19$. 
In allocation $2$, only chores in $H_{1,[h]}\setminus H_{j,[n]}$ can increase the fractional cost.
Each costs at most $0.5$ to agent $j$, so the increase is at most
\begin{equation*}
\Delta\cdot c_j\left(H_{1,[h]}\setminus H_{j,[n]}\right)\le\frac{|H_{1,[h]}|}{2n\cdot\left(n-1\right)}<\frac{5}{88\cdot\left(n-1\right)}\le\frac{5}{88\times 18}<0.02.
\end{equation*}
Thus every easy agent satisfies $c_j\left(X_j\right)<\costbound{j}+0.02\le1.95+0.02=1.97$.
This completes the one-hard-agent case.

\subsection{Complete Mechanism and Truthfulness}
\label{subsec:proof}

Finally, we combine the preceding constructions to obtain the complete mechanism.
Given the reported costs, we compute $\costbound{i}$ for each agent and identify the set $Q$ of hard agents.\footnote{If $|M|\le n$, we first add enough zero-cost dummy chores to ensure $|M|>n$.}
If there are no hard agents, we round the uniform fractional allocation directly.
Otherwise, we use the intermediate allocations from Section~\ref{subsec:hard-singleton} when $|Q|=1$, or those from Section~\ref{subsec:hard-triple} when $|Q|\ge2$.
We sample one of these allocations uniformly and round it using \Cref{lem:sorted-slot-rounding}.

The preceding analysis shows that every resulting allocation gives each agent a normalized cost of at most $1.97$.
Since $L_i\le\mu_i$, this proves the ex-post $1.97$-MMS guarantee.
Moreover, the intermediate allocations average to the uniform allocation, and rounding preserves their fractions in expectation.
Thus, each agent receives each chore with probability $1/n$ at every reported profile.
Her expected cost is therefore $c_i(M)/n$ regardless of her report, which establishes TIE and ex-ante EF.
Hence, together with the complexity analysis below, we obtain the following.

\begin{theorem}\label{thm:1.97}
For every $n\ge19$, there is a polynomial-time randomized equal-marginal mechanism for additive chores that is TIE, ex-ante EF and ex-post
$1.97$-MMS.
\end{theorem}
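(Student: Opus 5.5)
The plan is to assemble the theorem from the pieces already established in Sections~\ref{subsec:sorted-slots}--\ref{subsec:hard-singleton}, checking three things in order: the marginals and the ex-post bound, then truthfulness and ex-ante EF, then polynomial running time.

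\textbf{Step 1: setup and normalization.} Given the reports, pad $M$ with zero-cost dummy chores so that $|M|>n$; this changes no $\mu_i$. Compute $L_i=\max\{c_i(M)/n,\,c_{i,(1)},\,2c_{i,(n+1)}\}$ for each agent. Each term is a lower bound on $\mu_i$: the first two by Lemma~\ref{lem:basic-mms}, and $2c_{i,(n+1)}$ by pigeonhole, since two of the top $n+1$ chores share a bundle. Agents with $\mu_i=0$ are handled as in the preliminaries; otherwise $L_i>0$. Then compute $\costbound{i}$ and the hard set $Q$. The normalization only rescales each agent's own costs, so it does not affect the fractional allocations, which depend only on orderings and on cost ratios within one agent.

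\textbf{Step 2: marginals and the ex-post bound.} Split into the cases $|Q|=0$, $|Q|=1$ and $|Q|\ge2$.
\begin{itemize}
\item For $|Q|=0$, take $R=1$ with the uniform allocation. The strengthened bound of Lemma~\ref{lem:sorted-slot-rounding} gives $c_i(X_i)\le\costbound{i}\le1.95$.
\item For $|Q|\ge2$, use the pair/triple construction. Each group preserves its total fraction of every chore and the average over the $R$ equally likely choices is $1/n$. The bounds \eqref{eq:pair-selected}, \eqref{eq:nonselected-bound} and \eqref{eq:triple-selected} give at most $1.95$ for hard agents, and easy agents keep uniform fractions.
\item For $|Q|=1$, use the two allocations of Section~\ref{subsec:hard-singleton}. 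Feasibility follows from $b<0.35$. The hard agent gets at most $1.97$. Each easy agent keeps $\sum_{e\in H_{j,[n]}}x_{je}=1$ and her extra fractional cost is below $0.02$. The hypothesis $n\ge19$ enters only here.
\end{itemize}
In every case the chosen $x^{(s)}$ is sampled uniformly and rounded by Lemma~\ref{lem:sorted-slot-rounding}. Rounding preserves fractions in expectation, so $\Pr[e\in X_i]=\frac1R\sum_s x^{(s)}_{ie}=1/n$. Every support allocation satisfies $c_i(X_i)\le1.97L_i\le1.97\mu_i$ in original units.

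\textbf{Step 3: truthfulness and ex-ante EF.} Equal marginals give $\E[c_i(X_j)]=c_i(M)/n$ for all $i,j$ under every report profile. Agent $i$'s expected true cost therefore does not depend on her report, which gives TIE, and equality across $j$ gives ex-ante EF. Misreporting can change whether she is classified as hard, or change the $H$-sets, but never the marginals. This is the point to stress: the equal-marginal identity holds profile by profile.

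\textbf{Step 4: running time, the main obstacle.} Sorting, computing $L_i$, $\costbound{i}$ and $Q$, and building at most $6$ intermediate allocations are clearly polynomial; $a_j$ and $b$ are explicit rational expressions. The delicate part is making Lemma~\ref{lem:sorted-slot-rounding} algorithmic.
\begin{itemize}
\item Its proof (Appendix~\ref{subsec:rounding}) builds a bipartite graph between chores and agents' slots with fractional edge weights. Each slot has total weight $1$, except possibly the last, and each chore has total weight $1$.
\item I would realize the decomposition as a Birkhoff--von Neumann-type decomposition of this fractional bipartite $b$-matching. Pad with dummy slots or dummy chores to make it doubly stochastic; a dummy slot absorbs each agent's deficient last slot.
\item Such a decomposition has at most polynomially many matchings (Carath\'eodory, or the standard iterative peeling that zeroes one edge per step) and is computable by repeated perfect-matching computations on the support.
\item Sampling then picks matching $t$ with probability equal to its weight.
\end{itemize}
Rational arithmetic keeps the numbers polynomially sized, since all entries are built from $1/n$, $\Delta$, $a_j$ and $b$. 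This completes the plan.
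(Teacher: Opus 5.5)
Your proposal is correct and follows the paper's own proof essentially step for step: pad and normalize by $L_i$, split into the cases $|Q|=0$, $|Q|=1$ and $|Q|\ge2$ using the constructions of Sections~\ref{subsec:hard-pair}--\ref{subsec:hard-singleton}, sample an intermediate allocation uniformly and round it through a Birkhoff--von Neumann decomposition of the sorted-slot graph, and obtain TIE and ex-ante EF from the profile-by-profile equal-marginal identity. Two small points to tidy: the slot deficits are covered by dummy \emph{chores} (there are at least $|M|$ slots, so adding dummy slots would only worsen the imbalance), and deferring $\mu_i=0$ agents to the preliminaries breaks equal marginals, because that handling gives such an agent every chore (a slip the paper shares); instead, leave such agents unnormalized and treat them as easy, which keeps the mechanism equal-marginal since they pay $0$ for every bundle.
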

Combining this mechanism with the mechanisms in \Cref{sec:nomination_overview} for small $n$ yields a uniform approximation ratio strictly below $2$ for all $n$.

\paragraph{Computational Complexity.}
The mechanism can be implemented in time polynomial in the input bit size. Sorting the chores and computing $L_i$, $\costbound{i}$, the hard set, and all intermediate fractional allocations require polynomially many arithmetic operations. For any sampled intermediate allocation, the sorted-slot construction creates at most $m+n$ slots. The resulting fractional perfect matching can be decomposed in polynomial time by the standard algorithm for the Birkhoff--von Neumann decomposition, which repeatedly finds a perfect matching in the positive support and subtracts the minimum weight on that matching. Sampling a matching from the resulting decomposition then produces the final integral allocation.

\section{Lower Bounds}
\label{sec:lower_bounds}
In this section, we establish a lower bound on the MMS approximation ratio for TIE mechanisms. 
Our primary negative result is formalized below.

\begin{theorem}\label{thm:tie-lower-bounds}
For every integer $n\geq 2$, no TIE mechanism can guarantee an ex-post MMS approximation ratio strictly smaller than $\alpha_n$, where
\begin{equation*}
\alpha_n =
\begin{cases}
\dfrac{4}{3},&n=2,\\[4pt] 
\max\left\{1+\dfrac{1}{3\cdot n-2}, \frac{13}{12} \right\},&n\geq 3\text{ odd},\\[4pt] 
\max\left\{ 1+\dfrac{1}{3\cdot n-1}, \frac{13}{12} \right\},&n\geq 4\text{ even}.
\end{cases}
\end{equation*}
\end{theorem}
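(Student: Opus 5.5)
The plan is to prove each bound by a two-profile argument. We exhibit a small instance at which every allocation meeting the claimed ratio has a rigid structure. That rigidity forces a misreporting agent to strictly lower her true expected cost, contradicting TIE.

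The template is as follows. Fix an agent $i$ with true cost $c_i$ and an alternative report $\widehat c_i$, with the other reports $\bc_{-i}$ fixed. Under a mechanism that is ex-post $\alpha$-MMS with $\alpha<\alpha_n$, every allocation in the support at profile $(c_i,\bc_{-i})$ must keep each agent $j$ within $\alpha\mu_j$ under her reported cost. This alone already forces $\E[c_i(X_i)]\ge A$ for some $A$. At $(\widehat c_i,\bc_{-i})$ the same constraints, now evaluated with $\widehat c_i$ and with the others' caps, force every supported allocation to give $i$ a bundle of true cost at most $B<A$. Then TIE (Definition~\ref{def:tie}) fails. Whenever a constraint is only partially tight, I would use the symmetry of the instance to average over a symmetrized mechanism. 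Symmetrization preserves TIE and the ex-post guarantee, so the probabilities may be assumed uniform over equivalent agents and chores. This turns the comparison into a finite linear inequality in a few probabilities.

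For $n=2$ I would take a handful of chores, e.g.\ four or five. Agent $2$'s report makes the only $\alpha$-feasible partitions (for $\alpha<4/3$) the ones that split the chores into two balanced bundles $\{P,\bar P\}$. Agent $1$'s true cost is chosen so that the partitions feasible for her force her to take the heavier part with some probability. Her misreport makes a different pair of partitions feasible, one of which gives her a strictly cheaper bundle. The numbers are tuned so that the expected-cost gap is positive exactly when $\alpha<4/3$. For example, costs could be in units where bundles of $3$ versus $4$ units are the borderline, since $4/3$ suggests MMS $3$ with cap $4$. For $n\ge3$ I would embed a similar gadget. To obtain $1+1/(3n-2)$ or $1+1/(3n-1)$, use about $3n$ unit-like chores plus one or two perturbed chores, so that MMS is $3$ and the cap is $3+3/(3n-\cdot)$. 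This forces every agent to receive exactly three chores, or two for one agent, with parity deciding which. A misreport by one agent then shifts which chores she is forced to take. The constant $13/12$ should come from a fixed-size gadget, e.g.\ three special agents with MMS $12$ and cap $13$, while the remaining $n-3$ agents are padded with private chores of cost $0$ to the gadget agents and forced onto their own dummy chores. This makes the bound independent of $n$.

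The main obstacle is the design and verification of the gadget instances. I must enumerate all allocations satisfying the $\alpha$-MMS caps at both profiles, which is a finite but delicate case check. I must also choose costs so that the forced structure at the truthful profile cannot be mixed in a way that lowers the truthful expected cost below the deviation value. I would first attempt the $n=2$ case by direct enumeration with unknown probabilities, writing TIE from both agents' perspectives, since using both agents' deviation constraints simultaneously is likely needed. I would then generalize the parity-dependent family and reuse that linear-programming-style argument.
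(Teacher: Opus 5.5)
Your proposal has a genuine gap, and it is in the template itself, not only in the unspecified instances. You claim that at $(c_i,\bc_{-i})$ the ex-post caps ``alone'' force $\E[c_i(X_i)]\ge A$, while at $(\widehat c_i,\bc_{-i})$ they force true cost at most $B<A$. This can never happen. Take any allocation $Y$ that is $\alpha$-MMS at the deviation profile. The other agents' reports, and hence their caps, are identical at the two profiles. So $Y$ is also $\alpha$-MMS at the truthful profile unless $c_i(Y_i)>\alpha\mu_i$. In the first case $A\le c_i(Y_i)\le B$. In the second case $A>B>\alpha\mu_i$, so the truthful profile would admit no $\alpha$-MMS allocation at all. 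That would make TIE irrelevant, and it is false for the instances at hand (e.g., MMS allocations always exist for $n=2$).

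So the lower bound at the truthful profile must come from other agents' TIE constraints together with the total-cost identity at a profile where everyone reports the same function. For $13/12$ one deviation suffices. At the all-$c_0$ profile each agent can switch to $c_1$, Lemma~\ref{lem:13-12-deviation} caps her $c_0$-cost at $11$ in every supported allocation, and summing contradicts $c_0(M)=11n+\varepsilon$. Your symmetrization can supply this averaging if you apply it at a symmetric profile, where each agent's expected cost is exactly $c(M)/n$. But then it is the source of $A$, not a patch for partially tight constraints.

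For $n=2$, even this fails with a single deviation from a symmetric profile $(c,c)$. Let $\{P,\bar P\}$ be an MMS partition for $\widehat c$ with $c(P)\ge c(\bar P)$. Then $(P,\bar P)$ is $\alpha$-MMS at $(\widehat c,c)$, since $c(\bar P)\le c(M)/2\le\mu_2$, and it gives agent $1$ true cost at least $c(M)/2$. This is why the paper needs three reports $L,A,H$ and a chain of three TIE comparisons:
an opponent with true type $H$ compares $(L,H,\ldots,H)$ with $(L,A,H,\ldots,H)$;
agent $i$ with true type $L$ compares $(L,H,\ldots,H)$ with $(A,H,\ldots,H)$;
agent $i$ with true type $A$ compares $(A,H,\ldots,H)$ with $(H,\ldots,H)$.
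Capacity counting decides who must receive the distinguished chore, and the chain ends in $\sum_i\E[c_A(X_i(H,\ldots,H))]\ge n(2s+\varepsilon)>c_A(M)$. You do anticipate needing TIE from both agents, but the plan as written would not find this chain.

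Beyond this, the proposal contains no actual instance, and the gadgets you sketch would need repair. Padding with chores of cost $0$ to the gadget agents destroys the gadget. MMS is taken over $n$ bundles, so those agents could spread the core chores over all $n$ bundles, and their MMS would shrink with $n$ instead of staying at $12$. The paper instead adds $n-4$ chores costing $11$ under both reports. Since every core chore costs at least $2$, each of them is a singleton in every supported allocation. This reduces the problem to four bundles on eleven core chores while keeping all agents symmetric.

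For the $1+\Theta(1/n)$ bounds, the ratio arises in the paper as an integrality gap at MMS scale $2s+1$ with unit-cost small chores. Here $s=\lceil 3(n-1)/2\rceil$ is forced by requiring the $L$-report's cap (at most $s-n$ small chores beside a large chore) to be compatible with $\alpha<1+1/(2s+1)$. This is exactly where $3n-2$ versus $3n-1$ comes from, and your gadget with MMS $3$ and unit-like chores offers no explanation of this parity dependence. As it stands, none of the three bounds is established.
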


We prove the theorem by presenting two sets of hard instances.
The first one establishes a lower bound of $1+\Theta(1/n)$ on the approximation ratio of TIE mechanisms for every $n\geq 2$.
While the ratio approaches $1$ when $n$ is large, it is meaningful when $n$ is small.
For example, when $n = 2$, the lower bound $4/3$ matches the upper bound given by our TIE mechanism for two agents (see Appendix~\ref{sec:nomination-two}), and is therefore tight.
The second one establishes a lower bound of $13/12$ for every $n\geq 4$ using a two-report construction and a single-agent deviation argument.

\subsection{TIE Mechanisms: Two Agents}

We first present the lower bound of $4/3$ for two agents as a warm-up.
Similar proof ideas will be reused in the later analysis for $n\geq 3$ agents.

\begin{theorem}
\label{thm:two}
For two agents, no complete TIE mechanism can guarantee ex-post $\alpha$-MMS for every additive chore instance when $\alpha<4/3$.
\end{theorem}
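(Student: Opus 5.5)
The plan is to build a small family of two-agent instances and apply a single-agent deviation argument. Agent $2$'s report is fixed throughout, and agent $1$ has a few possible true cost functions. For each resulting profile, the hypothesis $\alpha<4/3$ is used to cut the set of complete allocations that may appear in the mechanism's support down to a very short list. The TIE inequalities of Definition~\ref{def:tie}, applied to agent $1$ in both directions between two of her types, then become linear constraints on the few probabilities left free. The aim is to show that these constraints cannot all hold.

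\textbf{Step 1 (the gadget).} I would use three chores $a,b,c$, or a $\varepsilon$-perturbation of them to break ties. Agent $2$ always reports $(2,1,1)$, so $\mu_2=2$, and a bound strictly below $8/3$ allows only $\emptyset$, singletons, and $\{b,c\}$. This forces agent $1$'s bundle to be a superset of one of $\{a,b,c\},\{a,c\},\{a,b\},\{a\}$. I would compare three types for agent $1$:
\begin{itemize}
\item type A with costs $(2,1,1)$, for which only $\{a\}$ and $\{b,c\}$ are admissible;
\item type B with costs $(1,1,1)$, so $\mu=2$ and at most two chores are admissible;
\item type C with costs $(1,1,0)$ or a similar asymmetric profile.
\end{itemize}
At type A the support is forced to the two allocations ($\{a\}$ vs.\ $\{b,c\}$). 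Only the mixing probability $p$ is free, and agent $1$'s expected cost is pinned to $2$.

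\textbf{Step 2 (TIE constraints).} I would write the two inequalities $\mathbb{E}_{\text{truth}}\le\mathbb{E}_{\text{deviate}}$ for each pair of types. Take the pair A/B as an example. A type-B agent who pretends to be A receives $\{a\}$ or $\{b,c\}$, so her true cost is $1$ with probability $p$ and $2$ with probability $1-p$. When truthful, she receives one of the admissible bundles ($\{a\},\{b,c\},\{a,b\},\{a,c\}$), each with its own cost. In the reverse direction, I would evaluate a type-A agent who reports B under her true costs. The asymmetric type C is there to pin $p$ from the other side.

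\textbf{Step 3 (contradiction).} I would combine the inequalities into a small LP over $p$ and the probabilities at the other profiles, and exhibit a nonnegative combination of the constraints that yields $0>0$. The threshold $4/3$ should enter exactly where $\alpha\cdot\mu$ stops excluding some bundle of cost $\frac{4}{3}\mu$. To see this, I would run the same argument with costs scaled so that the critical forbidden bundle has cost exactly $\frac43\mu$.

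The main obstacle is finding the right gadget. The naive three-chore choice may leave the LP feasible, since the A/B comparison above gives no gain to either deviation. If it fails, I would first try four chores with agent $2$ reporting $(3,1,1,1)$, so that $\mu_2=3$ and bundles of cost $4=\frac43\mu_2$ are the borderline. I would pair that with agent-$1$ types $(1,1,1,1)$ and $(3,1,1,1)$, since there the forbidden bundle has cost exactly $\frac43$ of the MMS. I would then check numerically that the resulting LP is infeasible for every $\alpha<4/3$.
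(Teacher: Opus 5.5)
\textbf{Gap: fixing agent~$2$'s report makes a contradiction impossible.} You keep agent $2$'s report fixed throughout and use only agent $1$'s TIE inequalities. That constraint system is always feasible, even with $\alpha=1$, whatever gadget you choose.

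To see this, fix any report $c_2$ and let $(P,Q)$ be an MMS partition of $M$ under $c_2$. Give agent $1$ whichever of $P,Q$ has the smaller reported $c_1$-cost, and give agent $2$ the other part.
\begin{itemize}
\item Agent $1$ can only ever obtain $P$ or $Q$, and truthful reporting gives her the cheaper one, so no deviation helps her.
\item Ex post, agent $1$ pays $\min\{c_1(P),c_1(Q)\}\le c_1(M)/2\le\mu_1$, and agent $2$ pays at most $\mu_2$.
\end{itemize}
Hence the LP in your Step~3 is feasible for every gadget. This includes your fallback with agent $2$ reporting $(3,1,1,1)$: cut-and-choose on $\{e_1\}$ versus $\{e_2,e_3,e_4\}$ is exactly MMS for both of your agent-$1$ types. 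Varying agent $2$'s report does not help if you still use only agent $1$'s incentives, because profiles with different $c_2$ are then never linked. No numerical search will find an infeasible instance of the system you describe.

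\textbf{What is missing.} You need a chain of incentive constraints of \emph{both} agents, across profiles that change both coordinates, closed by a counting contradiction. The paper uses four chores and the reports $L=(\varepsilon,2,1,1)$, $A=(2+\varepsilon,2,1,1)$, $H=(3,1,1,1)$, and argues as follows.
\begin{enumerate}
\item At $(L,A)$ the support restrictions keep agent $2$ away from $e_1$, so her $H$-cost there is at most $2$. Agent $2$'s TIE constraint for the deviation $H\to A$ then gives $\E[c_H(X_2(L,H))]\le 2$. This forces $\E[c_L(X_1(L,H))]\ge 1+\varepsilon$.
\item Agent $1$'s deviation $L\to A$ transfers this bound to $(A,H)$. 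There she must hold $e_1$, no $e_2$, and at most one unit chore, so her expected $c_A$-cost is at least $3+\varepsilon$.
\item Agent $1$'s deviation $A\to H$ carries this bound to $(H,H)$.
\item The symmetric argument gives the same bound for agent $2$ at $(H,H)$. The two expected $c_A$-costs therefore sum to at least $6+2\varepsilon>c_A(M)=6+\varepsilon$.
\end{enumerate}
Your fallback already contains the right threshold report $H$. Without the types $L$ and $A$ and this two-sided chain, however, the contradiction cannot arise.
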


There are four chores $e_1,e_2,e_3,e_4$, where $e_3$ and $e_4$ are the unit-cost chores.
Fix $\alpha<4/3$, and choose $\varepsilon>0$ sufficiently small that
\begin{equation*}
\alpha\cdot\left(3+\varepsilon\right)<4 \qquad\text{and}\qquad \alpha\cdot\left(2+\varepsilon\right)<3.
\end{equation*}
We use the three reports in Table~\ref{tab:warm-reports}. Here $L$ and $H$ indicate low and high cost for $e_1$, while $A$ denotes the auxiliary report.
We use $c_L$, $c_A$, and $c_H$ to denote the corresponding cost functions.

\begin{table}[ht]
\centering
\renewcommand{\arraystretch}{1.2}
\begin{tabular}{c c c c c c}
\toprule
Report & $e_1$ & $e_2$ & $e_3$ & $e_4$ & MMS \\
\midrule
$L$ & $\varepsilon$ & $2$ & $1$ & $1$ & $2+\varepsilon$ \\
$A$ & $2+\varepsilon$ & $2$ & $1$ & $1$ & $3+\varepsilon$ \\
$H$ & $3$ & $1$ & $1$ & $1$ & $3$ \\
\bottomrule
\end{tabular}
\caption{The three reports for the two-agent lower bound.}
\label{tab:warm-reports}
\end{table}

By the choice of $\varepsilon$, the ex-post guarantee implies that every supported $L$-bundle has reported cost strictly below $3$, while every supported $A$- or $H$-bundle has reported cost strictly below $4$.

These strict caps immediately give the bundle restrictions that drive the proof:

\begin{itemize}
\item an $H$-agent receiving $e_1$ receives nothing else;
\item an $L$-agent receiving $e_2$ receives neither unit-cost chore;
\item an $A$-agent receiving $e_1$ receives no $e_2$ and at most one unit-cost chore;
\item an $A$-agent receiving $e_2$ receives at most one unit-cost chore.
\end{itemize}

Fix agent $1$. We use the four profiles $(L,H)$, $(L,A)$, $(A,H)$, and $(H,H)$. Their roles are summarized below.
Figure~\ref{fig:warm-chain} shows the incentive chain, e.g., the arrow from $(L,H)$ indicates that agent $1$ with true cost $L$ should not have an incentive to deviate from $(L,H)$ to $(A,H)$. The auxiliary profile $(L,A)$ is used only in the first step.

\begin{figure}[ht]
\centering
\begin{tikzpicture}[
node distance=8mm and 30mm,
box/.style={draw,rounded corners,align=center,minimum width=15mm,minimum height=8mm},
arr/.style={-{Latex[length=2.4mm]},thick}
]
\node[box] (LH) {$(L,H)$};
\node[box, right=of LH] (AH) {$(A,H)$};
\node[box, right=of AH] (HH) {$(H,H)$};
\node[box, below=of LH] (LA) {$(L,A)$};
\draw[arr] (LH) -- node[above,align=center]{true type $L$} (AH);
\draw[arr] (AH) -- node[above,align=center]{true type $A$} (HH);
\draw[arr,dashed] (LH) -- node[right,align=left]{true type $H$} (LA);
\end{tikzpicture}
\caption{The incentive comparisons used in the two-agent warm-up.}
\label{fig:warm-chain}
\end{figure}

Our analysis consists of three main steps.
In the first step, we focus on the profile $(L,H)$.
We show that agent $1$ has a high cost at $(L,H)$, because agent $2$'s cost must be small, otherwise she has an incentive to misreport $A$.
Then we show in the second step that agent $1$ must have no smaller cost at $(A,H)$, because otherwise she has an incentive to misreport $A$ in $(L,H)$.
In the third step, we show that agent $1$'s expected $c_A$-cost at $(H,H)$ must be at least $3+\varepsilon$. Otherwise, with true cost $c_A$, she could reduce her expected cost by reporting $H$ instead of $A$ when agent $2$ reports $H$.
Repeating the argument with the agents' roles exchanged gives the same lower bound for agent $2$. Their total expected $c_A$-cost would therefore be at least $6+2\varepsilon$, contradicting $c_A(M)=6+\varepsilon$.

\paragraph{Step 1.}
Consider the auxiliary profile $(L,A)$. Agent $2$ cannot receive $e_1$ at this profile because otherwise agent $1$ would have to receive $e_2$ and one unit-cost chore, which leads to an approximation ratio larger than $\alpha$.
Given that agent $2$ does not receive $e_1$ at $(L,A)$, she can receive at most two chores in $\left\{e_2,e_3,e_4\right\}$, which implies that her bundle has $H$-cost at most $2$.

Now regard $H$ as agent $2$'s true cost. Truthful reporting produces $(L,H)$, while deviating to $A$ produces $(L,A)$. TIE gives $\E\left[c_H\left(X_2\left(L,H\right)\right)\right]\leq 2$.
Since the total $H$-cost of all four chores is $6$, we further have $\E\left[c_H\left(X_1\left(L,H\right)\right)\right]\geq 4$.
It can be verified that for every bundle $Y$, we have $c_L\left(Y\right)\geq c_H\left(Y\right)-\left(3-\varepsilon\right)$.
Taking expectations gives
\begin{equation*}
\E\left[c_L\left(X_1\left(L,H\right)\right)\right]\geq 1+\varepsilon.
\end{equation*}

\paragraph{Step 2.}
At $(A,H)$, chore $e_1$ must be assigned to agent $1$, because otherwise the approximation ratio will be strictly larger than $\alpha$.
Then agent $1$ cannot receive $e_2$, and can receive at most one unit-cost chore.
Regard $L$ as her true cost. Truthful reporting produces $(L,H)$, while reporting $A$ produces $(A,H)$. By TIE, we obtain
\begin{equation*}
\E\left[c_L\left(X_1\left(A,H\right)\right)\right] \geq \E\left[c_L\left(X_1\left(L,H\right)\right)\right] \geq 1+\varepsilon.
\end{equation*}
At $(A,H)$, agent $1$ receives $e_1$, no $e_2$, and at most one unit-cost chore. Hence the preceding lower bound forces her to receive one unit-cost chore, which implies $\E\left[c_A\left(X_1\left(A,H\right)\right)\right]\geq 3+\varepsilon$.

\paragraph{Step 3.}
Now regard $A$ as agent $1$'s true cost.
TIE gives
\begin{equation*}
\E\left[c_A\left(X_1\left(H,H\right)\right)\right] \geq \E\left[c_A\left(X_1\left(A,H\right)\right)\right] \geq 3+\varepsilon.
\end{equation*}
Repeating Steps 1 and 2 with the roles of agents $1$ and $2$ exchanged gives $\E\left[c_A\left(X_2\left(H,A\right)\right)\right]\geq 3+\varepsilon$.
Now regard $A$ as agent $2$'s true cost. Truthful reporting gives $(H,A)$, while deviating to $H$ gives $(H,H)$. Therefore,
\begin{equation*}
\E\left[c_A\left(X_2\left(H,H\right)\right)\right] \geq \E\left[c_A\left(X_2\left(H,A\right)\right)\right] \geq 3+\varepsilon.
\end{equation*}
However, we have $c_A\left(M\right)=6+\varepsilon$, which is a contradiction.

\subsection{Generalizing to Three or More Agents}

We now extend the same three-step argument to every $n\geq 3$.
Fix $s=\left\lceil 3\left(n-1\right)/2\right\rceil \geq n$.
There is one distinguished chore $e$, there are $n-1$ large chores, and there are $\left(n-1\right)\cdot s$ small chores.
Fix $\alpha<1+1/\left(2s+1\right)$, and choose $\varepsilon>0$ sufficiently small that
\begin{equation}
\alpha\cdot\left(2s+\varepsilon\right)<2 s+1 \qquad\text{and}\qquad \alpha\cdot\left(\left(n-1\right) s+\varepsilon\right)<\left(n-1\right) s+s-n+1. \label{eq:general-epsilon}
\end{equation}
Such a choice is possible from the definition of $s$ and the strict inequality $\alpha<1+1/\left(2 s+1\right)$.
We use the three reports in Table~\ref{tab:general-reports}, and write $c_L$, $c_A$, and $c_H$ for the corresponding cost functions.

\begin{table}[ht]
\centering
\small
\renewcommand{\arraystretch}{1.2}
\begin{tabular}{c >{\centering\arraybackslash}p{27mm} >{\centering\arraybackslash}p{33mm} >{\centering\arraybackslash}p{27mm} >{\centering\arraybackslash}p{34mm}}
\toprule
Report & $e$ & large chore & small chore & MMS \\
\midrule
$L$ & $\varepsilon$ & $\left(n-1\right) s$ & $1$ & $\left(n-1\right) s+\varepsilon$ \\
$A$ & $2 s+1-n+\varepsilon$ & $s+1$ & $1$ & $2 s+\varepsilon$ \\
$H$ & $2 s+1$ & $s+1$ & $1$ & $2 s+1$ \\
\bottomrule
\end{tabular}
\caption{The three reports for the lower bound with $n\geq 3$.}
\label{tab:general-reports}
\end{table}

Note that the MMS in the last column can be verified directly. 
To establish the upper bounds, we construct the bundles as follows. 
For $L$, we place each large chore alone and combine $e$ with all small chores. 
For $A$, we pair each large chore with $s-1$ small chores and group $e$ with the remaining $n-1$ small chores. 
Finally, for $H$, we place $e$ alone and bundle each large chore with $s$ small chores.
By construction, we have
\begin{equation}
c_H\left(M\right)=n\cdot\left(2s+1\right), \qquad c_A\left(M\right)=2 n\cdot s+\varepsilon. \label{eq:general-totals}
\end{equation}

By \Cref{eq:general-epsilon}, every supported $H$-bundle has cost strictly below $2 s+2$, every supported $A$-bundle has cost strictly below $2 s+1$, and every supported $L$-bundle has cost strictly below $\left(n-1\right) s+s-n+1$.
As in the two-agent warm-up, these caps immediately give the bundle restrictions needed below:
\begin{itemize}
\item no agent receives two large chores;
\item an $H$-agent receiving $e$ receives no other chore;
\item an $L$-, $A$-, or $H$-agent receiving a large chore but not $e$ receives at most $s-n$, $s-1$, or $s$ small chores, respectively; otherwise, the $\alpha$-MMS guarantee will be violated (see \Cref{eq:general-epsilon});
\item an $A$-agent receiving $e$ receives no large chore and at most $n-1$ small chores;
\item an $A$-agent not receiving $e$ has $H$-cost at most $2s$.
\end{itemize}

Fix an arbitrary agent $i$.
The proof follows the same three-step chain.

\begin{figure}[ht]
\centering
\begin{tikzpicture}[
node distance=10mm and 40mm,
box/.style={draw,rounded corners,align=center,minimum width=15mm,minimum height=8mm},
arr/.style={-{Latex[length=2.4mm]},thick}
]
\node[box] (Pi) {$(L,H,\ldots,H)$};
\node[box, right=of Pi] (Qi) {$(A,H,\ldots,H)$};
\node[box, right=of Qi] (R) {$(H,\ldots,H)$};
\node[box, below=of Pi] (Dij) {$(L,A,H,\ldots,H)$};

\draw[arr] (Pi) -- node[above,align=center]{true type $L$ of agent $i$} (Qi);
\draw[arr] (Qi) -- node[above,align=center]{true type $A$ of agent $i$} (R);
\draw[arr,dashed] (Pi) -- node[right,align=left]{true type $H$ of agent $j$} (Dij);
\end{tikzpicture}
\caption{The incentive comparisons used for the general construction.}
\label{fig:general-chain}
\end{figure}

\paragraph{Step 1.}
Let $P_i$ be the profile in which agent $i$ reports $L$ and every other agent reports $H$.
We claim that $\E\left[c_L\left(X_i\left(P_i\right)\right)\right]\geq n-1+\varepsilon$.
Fix an opponent $j\neq i$, and let $D_{ij}$ be obtained from $P_i$ by changing only agent $j$ from $H$ to $A$.
Agent $j$ cannot receive $e$ at $D_{ij}$.
Indeed, if she did, she could receive no large chore and at most $n-1$ small chores, so the $n-1$ large chores would have to go one each to the remaining agents.
Agent $i$ (the unique $L$-agent) could then take at most $s-n$ small chores with her large chore, while each of the other $n-2$ $H$-agents could take at most $s$ small chores.
Together with agent $j$, the total capacity would be only $\left(n-1\right)\cdot s-1$ small chores, one fewer than the instance contains.
Hence agent $j$ does not receive $e$ at $D_{ij}$, and her bundle has $H$-cost at most $2s$.
Now regard $H$ as agent $j$'s true cost.
Truthful reporting gives $P_i$, while reporting $A$ gives $D_{ij}$, so TIE implies $\E\left[c_H\left(X_j\left(P_i\right)\right)\right]\leq 2s$.
This holds for every $j\neq i$.
Using completeness and \Cref{eq:general-totals},
\begin{equation*}
\E\left[c_H\left(X_i\left(P_i\right)\right)\right]\geq n\cdot\left(2s+1\right)-\left(n-1\right)\cdot 2s=2s+n.
\end{equation*}
For every bundle $Y$, we have $c_L\left(Y\right)\geq c_H\left(Y\right)-\left(2s+1-\varepsilon\right)$, because only $e$ is cheaper under $L$ than under $H$.
Taking expectations proves the claim.

\paragraph{Step 2.}
Let $Q_i$ be the profile in which agent $i$ reports $A$ and every other agent reports $H$.
At $Q_i$, agent $i$ must receive $e$.
Otherwise, some $H$-agent receives $e$ and therefore receives nothing else.
The remaining $n-1$ agents must then receive the $n-1$ large chores, with one chore assigned to each agent. The $A$-agent can take at most $s-1$ small chores with her large chore, and each of the other $n-2$ agents of type $H$ can take at most $s$ small chores.
Their total capacity is again only $\left(n-1\right)\cdot s-1$, a contradiction.
Thus, agent $i$ receives $e$, no large chores, and a subset of small chores. 
For any such bundle $Y$, we have $c_A(Y) - c_L(Y) = 2s + 1 - n$. 
Assuming agent $i$'s true cost function is $L$, truthful reporting yields profile $P_i$, whereas reporting $A$ yields $Q_i$. By TIE, we have
\begin{equation*}
\E\left[c_A\left(X_i\left(Q_i\right)\right)\right] = \E\left[c_L\left(X_i\left(Q_i\right)\right)\right] + (2s+1-n) \geq \E\left[c_L\left(X_i\left(P_i\right)\right)\right] + (2s+1-n) \geq 2s+\varepsilon.
\end{equation*}

\paragraph{Step 3.}
Let $(H,\cdots,H)$ be the profile in which every agent reports $H$.
Fix agent $i$ and regard $A$ as her true cost.
Truthful reporting gives $Q_i$, while reporting $H$ gives $(H,\cdots,H)$.
By TIE, we obtain $\E\left[c_A\left(X_i\left(H,\cdots,H\right)\right)\right]\geq 2s+\varepsilon$.
The same argument applies to every agent, so
\begin{equation*}
\sum_{i=1}^{n}\E\left[c_A\left(X_i\left(H,\cdots,H\right)\right)\right]\geq n\cdot\left(2s+\varepsilon\right),
\end{equation*}
which leads to a contradiction as $c_A\left(M\right)=2\cdot n\cdot s+\varepsilon$.

\medskip

Notice that $2s+1=3\cdot n-2$ when $n$ is odd and $2s+1=3\cdot n-1$ when $n$ is even.
Therefore, the lower bound is $1+1/\left(3\cdot n-2\right)$ for odd $n$ and $1+1/\left(3\cdot n-1\right)$ for even $n$.

\subsection{TIE Mechanisms: Large \texorpdfstring{$n$}{}}

We next give a constant lower bound that is stronger when the number of agents is large.

\begin{theorem}
\label{thm:13-12}
For every $n\geq 4$, no complete TIE mechanism can guarantee an ex-post $\alpha$-MMS allocation for every additive chore instance when $\alpha<13/12$.
\end{theorem}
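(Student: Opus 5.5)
Following the abstract's description, I would use a ``two-report construction and a single-agent deviation argument'' in the style of the warm-up. I would fix $n\ge4$, $\alpha<13/12$ and a small $\varepsilon>0$, and build an instance with two reports, a baseline type $H$ and a deviation type $A$. All agents other than a fixed agent $i$ report $H$.

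\textbf{The instance.} The chores are $n-1$ medium chores of $H$-cost $6$, so that $H$'s MMS is $12$: pair medium chores or pad them with small chores to reach $12$. The rest are small chores of cost about $1$. Under report $A$, a few chores are shifted in cost so that $A$'s MMS equals $12+\varepsilon$ or so. Then the $\alpha$-cap of $13$ forbids certain combinations: a bundle of $H$-cost $13$ (two medium chores plus one small) is forbidden, and $A$'s caps forbid analogous ones. The target is a counting argument like Step~1 of the general construction. Under profile $(A,H,\dots,H)$, the feasibility caps force agent $i$'s bundle into a restricted family. Every bundle in that family has $A$-cost at least some threshold $t$.

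\textbf{The chain.}
\begin{enumerate}
\item Show via the capacity count that at $(A,H,\ldots,H)$ the $A$-agent must take a bundle with $c_A\ge t$.
\item Apply TIE for true type $A$ deviating to $H$. This gives $\E[c_A(X_i(H,\ldots,H))]\ge t$ for every $i$.
\item Sum over $i$ and compare with $c_A(M)$. The parameters should be tuned so that $nt>c_A(M)$, which is a contradiction exactly when $\alpha<13/12$.
\end{enumerate}
Since a single deviation suffices, $c_A$ should be designed so that at the all-$H$ profile any allocation has average $A$-cost strictly less than $t$. At the same time, the $A$-report must be strong enough that caps plus completeness force a heavy bundle onto the deviator. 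By symmetry of the all-$H$ profile, it suffices that the mechanism cannot give every agent $A$-cost at least $t$ in expectation.

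\textbf{Main obstacle.} The hard part is choosing the cost vectors, for example $H$: medium $=6$, small $=1$, and $A$: medium $=6+\delta$, small slightly less, with one special chore. The $13/12$ threshold must then emerge from integrality. A bundle within the $\alpha$-cap for $A$ at MMS $12$ can absorb at most $12$ units, while the total forces some agent past it, exactly as in the warm-up's capacity counting where total capacity fell one short. I would first try $2(n-1)$ medium chores of cost $6$ plus small chores, so that each $H$-agent's cap-feasible bundles are ``two mediums'' or ``one medium plus at most seven smalls''. I would then choose $A$ to value mediums at $7$, so that $A$'s MMS rises while the deviator is forced into a medium-plus-small bundle of $A$-cost $13$. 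From there I would verify that the averaged contradiction holds for all $n\ge4$, adjusting counts until the capacity argument is one short.
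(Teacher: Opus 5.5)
Your chain breaks at Step~1, and choosing better numbers will not fix it: no instance can make Step~1 hold. Step~3 needs $nt>c_A(M)$, so Step~1 must show that \emph{every} cap-feasible allocation at $(A,H,\ldots,H)$ gives the $A$-agent $A$-cost above $c_A(M)/n$. Here is an allocation that violates this. Fix an $n$-partition $(S_1,\ldots,S_n)$ of $M$ with $c_H(S_r)\le\mu_H$ for all $r$. Give the $A$-agent a part minimizing $c_A(S_r)$, and give the remaining $n-1$ parts to the $H$-agents. The $A$-agent then pays at most $c_A(M)/n\le\mu_A$, and each $H$-agent pays at most $\mu_H$. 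This allocation is exactly $1$-MMS, so any $\alpha$-MMS mechanism may use it.

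Caps plus completeness therefore can never push a lone odd reporter above her proportional share; any such lower bound has to come from further incentive constraints. This is exactly why, in the paper's $n\ge3$ construction, the caps at $Q_i$ only force $c_A(X_i)\ge 2s+1-n+\varepsilon$, which is below $c_A(M)/n=2s+\varepsilon/n$. There the missing amount is supplied by TIE links through the type $L$ and the profiles $D_{ij}$. Independently, your proposal never fixes an instance, and the tentative numbers are inconsistent. With $\mu_H=12$ and $\alpha\cdot 12<13$, a medium chore of cost $6$ can be accompanied by at most six unit chores, not seven.

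The paper runs the single-deviation argument in the opposite direction, where caps are useful because they give upper bounds. The odd report is the \emph{deviation}: an exaggeration $c_1$ of the true cost $c_0$ on $u_1,u_2,\ell_1,\ell_2$. Lemma~\ref{lem:13-12-deviation} shows that when one agent reports $c_1$ and the rest report $c_0$, the exaggerator's bundle $Y$ satisfies $c_0(Y)\le 11$ in every supported allocation. Its proof has two ingredients. First, the integral cap $c_1(Y)\le 12$ restricts the offending bundles to three forms. Second, completeness on the other side: the $n-4$ huge chores are forced into singletons, so $E\setminus Y$ must go to at most three $c_0$-agents, yet its three-bundle $c_0$-MMS is $12$. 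TIE for true type $c_0$ then gives $\E[c_0(X_i)]\le 11$ for every agent at the all-$c_0$ profile, and summing contradicts $c_0(M)=11n+\varepsilon$.

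To salvage your template, swap the roles in exactly this way. The truth should be the common report, and the deviation should be a report whose caps force the deviator into bundles that are cheap under her true cost, strictly below $c_0(M)/n$.
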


Fix $n\geq 4$ and $\alpha<13/12$.
Choose an arbitrarily small $\varepsilon>0$ such that $\alpha\cdot\left(11+\varepsilon\right)<12$.

There are eleven core chores: two small chores $s_1,s_2$, four medium chores $t_1,t_2,t_3,t_4$, two chores $u_1,u_2$, two large chores $\ell_1,\ell_2$, and one distinguished chore $e$.
Let $E$ denote the set of all eleven core chores.
In addition, there are $n-4$ huge chores.
The two additive reports are given in Table~\ref{tab:13-12-reports}.
The exaggerated report $c_1$ is identical to $c_0$ on most chores but reports a higher cost on chores $u_1,u_2,\ell_1,\ell_2$.

\begin{table}[ht]
\centering
\setlength{\tabcolsep}{6pt}
\renewcommand{\arraystretch}{1.2}
\begin{tabular}{c c c c c c c c c c c c c c}
\toprule
Report & $s_1$ & $s_2$ & $t_1$ & $t_2$ & $t_3$ & $t_4$ & $u_1$ & $u_2$ & $\ell_1$ & $\ell_2$ & $e$ & \shortstack{each huge chore} & MMS \\
\midrule
$c_0$ & $2$ & $2$ & $3$ & $3$ & $3$ & $3$ & $4+\varepsilon$ & $4$ & $6$ & $6$ & $8$ & $11$ & $11+\varepsilon$ \\
$c_1$ & $2$ & $2$ & $3$ & $3$ & $3$ & $3$ & $5$ & $5$ & $7$ & $7$ & $8$ & $11$ & $12$ \\
\bottomrule
\end{tabular}
\caption{The ordinary report $c_0$ and the exaggerated report $c_1$.}
\label{tab:13-12-reports}
\end{table}

The $n$-agent MMS values of $c_0$ and $c_1$ are exactly $11+\varepsilon$ and $12$, respectively.
For the upper bound for $c_0$, place each huge chore in a singleton bundle and partition the core as
\begin{equation*}
\left\{e,t_1\right\},\qquad \left\{\ell_1,s_1,t_2\right\},\qquad \left\{\ell_2,s_2,t_3\right\},\qquad \left\{u_1,u_2,t_4\right\}.
\end{equation*}
The four core bundles have costs $11,11,11,11+\varepsilon$ under $c_0$.
It can also be verified that for $c_1$, placing each huge chore in a singleton bundle and partitioning the core as
\begin{equation*}
\left\{t_1,t_2,t_3,t_4\right\},\qquad \left\{s_1,s_2,e\right\},\qquad \left\{\ell_1,u_1\right\},\qquad \left\{\ell_2,u_2\right\}
\end{equation*}
gives an MMS partition, where all four core bundles have cost $12$ under $c_1$.

Under the assumed approximation ratio, every supported $c_0$-bundle has cost strictly below $12$, while every supported $c_1$-bundle has integer cost strictly below $13$ and therefore cost at most $12$.
A huge chore cannot share a bundle with any other chore under either report, because every core chore costs at least $2$ and every huge chore costs $11$.
Thus every huge chore forms a singleton bundle in every supported allocation.

\begin{lemma}
\label{lem:13-12-deviation}
Suppose one agent reports $c_1$ and every other agent reports $c_0$.
Then every bundle $Y$ assigned to the $c_1$-reporter in the support of the mechanism satisfies $c_0\left(Y\right)\leq 11$.
\end{lemma}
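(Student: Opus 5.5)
The plan is a direct combinatorial case analysis on the support allocation. It uses only the strict caps already derived: every $c_0$-reporter's bundle has $c_0$-cost strictly below $12$, and the $c_1$-reporter's bundle has $c_1$-cost at most $12$. It also uses the fact that every huge chore forms a singleton bundle.

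\textbf{Reduction to the core.} Since there are $n-4$ huge chores and each forms a singleton, exactly $n-4$ agents each receive one huge chore and nothing else. If the $c_1$-reporter is among them, then $c_0(Y)=11$ and we are done. Otherwise the eleven core chores in $E$ are split among exactly four agents: the $c_1$-reporter with bundle $Y$, and three $c_0$-reporters with bundles $Z_1,Z_2,Z_3$. We have $c_0(E)=44+\varepsilon$. Every $c_0$-value of a core bundle has the form (integer) $+\,\varepsilon\cdot\mathbf{1}(u_1\in\cdot)$. Hence each $c_0(Z_k)<12$ forces $c_0(Z_k)\le 11$ when $u_1\notin Z_k$, and $c_0(Z_k)\le 11+\varepsilon$ otherwise. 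Suppose for contradiction that $c_0(Y)>11$, so $c_0(Y)\ge 11+\varepsilon$.

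\textbf{Case analysis via the gap between $c_1$ and $c_0$.} The identity $c_1(Y)=c_0(Y)+(1-\varepsilon)\mathbf{1}(u_1\in Y)+|Y\cap\{u_2,\ell_1,\ell_2\}|$ combined with $c_1(Y)\le 12$ leaves only a few possibilities.

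\begin{enumerate}
\item[(a)] $Y$ avoids $u_1,u_2,\ell_1,\ell_2$ and $c_0(Y)=12$. Then $Y\subseteq\{s_1,s_2,t_1,\dots,t_4,e\}$ has cost $12$, e.g.\ $\{t_1,\dots,t_4\}$, $\{s_1,s_2,e\}$, or $\{e,s_i\}$ plus a medium chore combination that sums to $12$.
\item[(b)] $Y$ contains exactly $u_1$ among $\{u_1,u_2,\ell_1,\ell_2\}$ and $c_0(Y)=11+\varepsilon$.
\item[(c)] $Y$ contains exactly one of $u_2,\ell_1,\ell_2$ and none of the others, with $c_0(Y)=11+\varepsilon$ or $c_0(Y)=11.\ldots$. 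This case is impossible: without $u_1$ the $c_0$-cost is an integer, and an integer in $(11,11]$ does not exist. So case (c) yields $c_0(Y)\le 11$ directly.
\end{enumerate}

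More than one of $u_1,u_2,\ell_1,\ell_2$ in $Y$ forces $c_0(Y)\le 10+\varepsilon$.

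\textbf{Capacity contradiction for the three $c_0$-bundles.} In each surviving case, the complement $E\setminus Y$ must be partitioned into three bundles, each with $c_0$-cost at most $11$, or at most $11+\varepsilon$ for the one containing $u_1$. The total is $32+\varepsilon$ in case (a) and $33$ in case (b). The total is therefore nearly or exactly at capacity, so almost every bundle must hit exactly $11$, or $11+\varepsilon$ for the $u_1$ bundle.

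I will rule this out using two observations.
\begin{itemize}
\item Parity: all core costs are even except the $t$'s and the $\varepsilon$. So a bundle of cost exactly $11$ must contain an odd number of medium chores.
\item The chore $e$ (cost $8$) can only be completed to a bundle of cost at most $11$ by $\{s_i\}$ or $\{t_j\}$, giving cost $10$ or $11$.
\end{itemize}

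In case (a) I will enumerate the possible $Y$ of $c_0$-cost $12$ inside $\{s_1,s_2,t_1,\dots,t_4,e\}$. For each, I will show that the leftover chores cannot be packed. For $Y=\{t_1,\dots,t_4\}$, the leftover chores have all-even costs apart from $\varepsilon$, so no bundle reaches $11$ and the capacity falls short. The other choices of $Y$ are handled similarly by tracking where $e$ and the $\ell$'s go. In case (b), all three bundles must be exactly $11$. I will count the medium chores available to supply odd parity, and check that $e$, $\ell_1$, $\ell_2$, $u_2$ cannot all be completed to exactly $11$.

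The main obstacle is making this finite enumeration clean and complete. My first attempt will organize it by which bundle receives $e$ and how many medium chores each bundle gets, using the parity constraint to prune.
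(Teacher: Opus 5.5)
Your plan is correct and follows essentially the same route as the paper: discard the huge singletons, then use $c_1(Y)-c_0(Y)=(1-\varepsilon)\mathbf{1}(u_1\in Y)+|Y\cap\{u_2,\ell_1,\ell_2\}|$ together with $c_1(Y)\le 12$ to isolate the bundles with $c_0(Y)>11$, and finally show that $E\setminus Y$ cannot be split among three $c_0$-agents below $12$; your parity and deficit counting is just a concrete way of checking the paper's claim $\operatorname{MMS}(E\setminus Y,3)=12$. When you write it out, note that the offending bundles are exactly $\{t_1,\dots,t_4\}$, $\{s_1,s_2,e\}$ and $\{s_1,s_2,u_1,t_j\}$ (your third example in case (a) does not exist, since $8+2+3=13$), and that for $Y=\{s_1,s_2,e\}$ parity alone leaves a slack of $1$, so you also need that $\ell_1,\ell_2$ lie in different bundles and that each $\ell$-bundle falls at least $1$ short of its cap.
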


We first use Lemma~\ref{lem:13-12-deviation} to prove Theorem~\ref{thm:13-12}.
Consider the all-$c_0$ profile and let $X_i$ be the random bundle assigned to agent $i$.
For every named agent $i$, let $Y_i$ denote her random bundle when she alone changes her report from $c_0$ to $c_1$ while every other agent continues to report $c_0$.
Lemma~\ref{lem:13-12-deviation} gives $c_0\left(Y_i\right)\leq 11$ in every supported allocation at the deviation profile.
If the true cost of agent $i$ is $c_0$, TIE therefore implies $\E\left[c_0\left(X_i\right)\right] \leq \E\left[c_0\left(Y_i\right)\right] \leq 11$.
This holds for every agent.
Summing the inequalities gives $c_0(M)\leq 11n$, which is a contradiction as $c_0(M) = 11n + \varepsilon$.
This proves Theorem~\ref{thm:13-12}.

\begin{proofof}{Lemma~\ref{lem:13-12-deviation}}
If the $c_1$-reporter receives a huge chore, then it forms a singleton bundle and has $c_0$-cost exactly $11$.
Suppose therefore that she receives only core chores, and let $Y\subseteq E$ be her bundle.
All $n-4$ huge chores are assigned to other agents, so the remaining core chores $E\setminus Y$ must be allocated among at most three $c_0$-agents.
Each such agent has $c_0$-cost strictly below $12$.
Also, the $c_1$-reporter satisfies $c_1(Y)\leq 12$.

We first identify all bundles with $c_1(Y)\leq 12$ but $c_0(Y)>11$.
Since $c_0\leq c_1$ pointwise, any bundle containing $u_2$, $\ell_1$, or $\ell_2$ satisfies $c_0\left(Y\right)\leq c_1\!\left(Y\right)-1\leq 11$.
Thus, the bundle $Y$ cannot contain any of these chores.
\begin{itemize}
    \item If $Y$ contains $u_1$, then $c_0\left(Y\right)=c_1\!\left(Y\right)-\left(1-\varepsilon\right)$. Because $c_1(Y)$ is an integer at most $12$ and $\varepsilon<1$, the inequality $c_0(Y)>11$ forces $c_1(Y)=12$. Besides $u_1$, the remaining chores in $Y$ must have total $c_1$-cost $7$, and the only way to obtain $7$ from the available costs $2$ and $3$ is $2+2+3$. Hence we have $Y=\left\{s_1,s_2,u_1,t_j\right\}$ for some $j\in\left\{1,2,3,4\right\}$.

    \item If $Y$ does not contain $u_1$, then the two reports agree on every chore in $Y$. Thus, $c_0(Y)>11$ and $c_1(Y)\leq 12$ imply $c_0(Y)=c_1(Y)=12$. If $Y$ contains $e$, the remaining cost is $4$, forcing $Y=\left\{s_1,s_2,e\right\}$. If $Y$ does not contain $e$, the only way to obtain total cost $12$ is to take all four medium chores, so $Y=\left\{t_1,t_2,t_3,t_4\right\}$.
\end{itemize}

Table~\ref{tab:13-12-offending} summarizes the three possible forms.
The fourth column lists the $c_0$-costs of the remaining core chores.
In the last column, we use $\operatorname{MMS}\left(E\setminus Y,3\right)$ to denote the MMS value when the remaining chores are partitioned into three bundles under $c_0$.

\begin{table}[ht]
\centering
\setlength{\tabcolsep}{6pt}
\renewcommand{\arraystretch}{1.25}
\begin{tabular}{>{\centering\arraybackslash}p{32mm} c c >{\centering\arraybackslash}p{46mm} c}
\toprule
$Y$ & $c_0(Y)$ & $c_1(Y)$ & $c_0$-costs in $E\setminus Y$ & $\operatorname{MMS}\left(E\setminus Y,3\right)$ \\
\midrule
$\left\{t_1,t_2,t_3,t_4\right\}$ & $12$ & $12$ & $8,6,6,4+\varepsilon,4,2,2$ & $12$ \\
$\left\{s_1,s_2,e\right\}$ & $12$ & $12$ & $6,6,4+\varepsilon,4,3,3,3,3$ & $12$ \\
$\left\{s_1,s_2,u_1,t_j\right\}$ & $11+\varepsilon$ & $12$ & $8,6,6,4,3,3,3$ & $12$ \\
\bottomrule
\end{tabular}
\caption{The only possible exaggerated bundles with ordinary cost larger than $11$. In every row, the remaining core chores have three-bundle MMS exactly $12$ under $c_0$.}
\label{tab:13-12-offending}
\end{table}

The equality $\operatorname{MMS}\left(E\setminus Y,3\right)=12$ under $c_0$ in Table~\ref{tab:13-12-offending} is easy to verify directly from the displayed costs. In particular, in the first and third rows, any three-partition with maximum cost strictly below $12$ would have to place the chores of costs $8,6,6$ in three different bundles; the second row is checked similarly. Therefore, in every case, $E\setminus Y$ cannot be allocated among at most three $c_0$-agents while keeping every bundle strictly below $12$, which leads to a contradiction.
Hence, we conclude that every supported bundle of the $c_1$-reporter satisfies $c_0(Y)\leq 11$.
\end{proofof}

\subsection{Equal-Marginal Mechanisms}
\label{sec:equal-marginal-lb}

Finally, we show that equal-marginal mechanisms cannot guarantee a ratio better than $3/2$.
For any number $n\ge 2$ of agents, consider $n+1$ chores $e_1,\ldots,e_{n+1}$.
The costs are shown in Table~\ref{tab:lower_bound_equal_marginal}.
It is easy to verify that every agent has MMS equal to $1$.

\begin{table}[ht]
    \centering
    \begin{tabular*}{0.82\textwidth}{@{\extracolsep{\fill}}lccccc@{}}
        \toprule
        & $e_1$ & $e_2$ & $\cdots$ & $e_n$ & $e_{n+1}$ \\
        \midrule
        Agent $1$ & $1$ & $1/2$ & $\cdots$ & $1/2$ & $1/2$ \\
        Agent $2$ & $1/2$ & $1$ & $\cdots$ & $1$ & $1/2$ \\
        $\vdots$ & $\vdots$ & $\vdots$ & & $\vdots$ & $\vdots$ \\
        Agent $n$ & $1/2$ & $1$ & $\cdots$ & $1$ & $1/2$ \\
        \bottomrule
    \end{tabular*}
    \caption{Hard instance for equal-marginal mechanisms}
    \label{tab:lower_bound_equal_marginal}
\end{table}

A key property of the instance is that every allocation that is strictly better than $3/2$-MMS cannot allocate chore $e_1$ to agent $1$, because otherwise some agent among $\{2,3,\ldots,n\}$ must receive a total cost of at least $3/2$.
Consequently, no equal-marginal mechanism can guarantee an approximation ratio strictly smaller than $3/2$, because such a mechanism allocates $e_1$ to agent $1$ with probability $1/n>0$.

\section{Conclusion and Open Problems}
\label{sec:conclusion}
In this paper, we show that constant ex-post MMS guarantees can be achieved together with TIE for additive chores. 
The $k$-nomination mechanism gives ratio $2-1/\left(n-k\right)$ for $k\le\left\lfloor\left(n-1\right)/2\right\rfloor$, while a separate two-agent mechanism achieves the tight ratio $4/3$. For large $n$, we present an equal-marginal mechanism with a ratio of $1.97$. 
Our lower bounds show that the incentive requirement remains nontrivial: the ratio is at least $4/3$ for two agents and at least $13/12$ for every $n \geq 3$.

Several questions remain open. 
Most importantly, the gap between $1.97$ and the lower bounds is large for every $n\ge3$, and even the tight ratio for three agents is unknown.
By refining the grouping of agents for probability redistribution beyond the groups of size two and three used in this paper, and allowing the number of large chores to vary depending on the report, we can further improve the upper bound to $15/8 = 1.875$. However, the refined analysis is substantially more complex, and the resulting bound remains far from matching the lower bound. We therefore decided to present only the current, simpler analysis. 
Finally, it would be useful to understand whether richer nomination rules or new decompositions of truthful fractional allocations can yield a substantially smaller constant, or even approach the existing best-known MMS ratios without incentive constraints.

\section*{Declaration for the Use of AI}
We used OpenAI's GPT-6 Astra to assist with mathematical analysis, language editing, notation consistency checks, and LaTeX formatting. All mathematical statements, proofs, calculations, and citations were reviewed and verified by the authors, who take full responsibility for the content of the paper.

\newpage
\bibliographystyle{alpha}
\bibliography{ref}

\newpage
\appendix

\section{Proof of the Sorted-Slot Rounding Lemma}
\label{subsec:rounding}

\begin{proof}[Proof of \Cref{lem:sorted-slot-rounding}]
We first construct the sorted slots as described and build a weighted bipartite graph with chores on the left side and all agents' slots on the right side. 
For every chore and slot, we add an edge between them whenever a positive fraction of the chore lies in the slot, setting the edge weight equal to this fraction. By construction, the incident edge weights sum to one at each chore vertex and to at most one at each slot vertex. 
Let $k$ denote the total number of slots, which satisfies $|M| \le k \le |M| + n$. 
To balance the bipartite graph, we introduce $k - |M|$ dummy chore vertices, each assigned a total incident weight of one to cover the slot deficits (whose total sum is $k - |M|$). Consequently, the resulting edge weights constitute a feasible fractional perfect matching.

By the Birkhoff--von Neumann theorem~\cite{journals/tucuman/Birkhoff46,books/pup/vonNeumann53}, this fractional matching is a convex combination of integral perfect matchings. 
By sampling a matching with probability proportional to its convex coefficient and discarding dummy chores, we assign each chore to the owner of its matched slot. 
This guarantees that every chore is assigned exactly once and no slot receives more than one chore. 
Since the decomposition preserves edge weights, summing over agent $i$'s slots for chore $e$ yields $\Pr[e \in X_i] = x_{ie}$. 
Fix an agent $i \in N$. The contribution of agent $i$'s first slot is bounded by $\max_{e: x_{ie} > 0} \{c_i(e)\}$. 
For each subsequent slot, its cost is upper-bounded by the fractional cost of the preceding slot, because every chore fraction in the preceding slot is at least as expensive as any chore in the later slot, and each preceding slot has unit total fraction. 
Summing over all subsequent slots bounds their total contribution by $\sum_{e \in M} x_{ie}  \cdot c_i(e)$, which proves the general bound\footnote{Note that an agent with no positive fractions receives no chores, with the maximum understood to be zero.}.

If $\sum_{e \in H_{i,[n]}} x_{ie} = 1$, these fractions fill the first slot, which yields a cost of at most $c_{i,(1)}$.
All remaining slots contain only chores outside $H_{i,[n]}$ with costs at most $c_{i,(n+1)}$. Applying the same bounding argument to these remaining slots yields a total contribution of at most
\begin{equation*}
c_{i,(n+1)} + \sum_{e \in M \setminus H_{i,[n]}} x_{ie} \cdot c_i(e).
\end{equation*}
This proves the strengthened bound.
\end{proof}

\section{An Optimal \texorpdfstring{$4/3$}{4/3}-MMS Mechanism for Two Agents}
\label{sec:nomination-two}

In this section, we give the construction and analysis of the two-agent mechanism
outlined in Section~\ref{sec:nomination_overview}.
In this setting, for each agent $i\in N = [2]$, let $e_i$ be the unique chore in $H_{i,[1]}$ nominated by agent $i$.
If both agents nominate the same chore, the mechanism constructs a suitable partition of $M$ and assigns the two parts in a uniformly random order. 
Otherwise, each agent receives the chore nominated by her opponent, and the remaining chores are partitioned into two parts and assigned in a uniformly random order. 
To formalize this construction, we first establish the required partition bounds, starting with a balancing lemma for two cost functions.

\begin{lemma}\label{lem:two-dimensional-balancing}
    For two additive cost functions $c_1,c_2$ on a finite set $S$, there
    is a polynomial-time algorithm that computes a partition $S=P\cup Q$ such that
    \begin{equation*}
|c_i(P)-c_i(Q)|\le\max_{e\in S}\left\{c_i\!\left(e\right)\right\} \qquad\text{for both }i\in N.
\end{equation*}
\end{lemma}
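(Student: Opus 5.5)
The plan is to use a "pairing plus flipping" argument, in the style of the classical two-dimensional discrepancy bound for vectors of bounded norm. Sort $S$ by non-increasing $c_1$-cost, breaking ties by the fixed public order, so that $S=\{g_1,\ldots,g_m\}$ with $c_1(g_1)\ge c_1(g_2)\ge\cdots$. If $m$ is odd, add a zero-cost dummy item at the end. Group the items into consecutive pairs $(g_{2t-1},g_{2t})$ for $t=1,\ldots,m/2$. Each pair will contribute one item to $P$ and the other to $Q$. Write $\sigma_t\in\{+1,-1\}$ for the choice of which item goes to $P$. Then
\[
c_1(P)-c_1(Q)=\sum_t \sigma_t\,\alpha_t,\qquad c_2(P)-c_2(Q)=\sum_t\sigma_t\,\beta_t,
\]
where $\alpha_t=c_1(g_{2t-1})-c_1(g_{2t})\ge 0$ and $\beta_t=c_2(g_{2t-1})-c_2(g_{2t})$ has $|\beta_t|\le\max_e c_2(e)$.

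For the first cost function, the sorted pairing gives a telescoping bound. Since the sequence $c_1(g_1)\ge c_1(g_2)\ge\cdots$ is non-increasing, we have
\[
\sum_t\alpha_t\le c_1(g_1)-c_1(g_m)\le\max_e c_1(e).
\]
Consequently, \emph{every} choice of signs keeps $|c_1(P)-c_1(Q)|\le\max_e c_1(e)$. So the first constraint holds automatically, and the signs are completely free to serve the second function.

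It then remains to choose the signs so that $|\sum_t\sigma_t\beta_t|\le\max_e c_2(e)$. This is the one-dimensional greedy step, and it is polynomial time. Process the pairs one at a time and keep the running sum $D=\sum_{t'<t}\sigma_{t'}\beta_{t'}$. At step $t$, choose $\sigma_t$ so that $\sigma_t\beta_t$ has the sign opposite to $D$, or either sign if $D=0$. Let $B=\max_e c_2(e)$. By induction, if $|D|\le B$ and $|\beta_t|\le B$, then adding a term of opposite sign gives $|D+\sigma_t\beta_t|\le\max\{|D|,|\beta_t|\}\le B$. Finally, remove the dummy item; since it has cost zero, neither difference changes.

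The only delicate point is the first step: the ordering must come from $c_1$, so that the $c_1$-differences of the pairs telescope. The greedy argument then needs only $|\beta_t|\le B$, which holds because both items of a pair have $c_2$-cost in $[0,B]$. I expect no real obstacle beyond checking these two facts. The algorithm consists of a sort followed by a linear pass, so it runs in polynomial time.
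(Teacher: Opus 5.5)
Your proof is correct, and it takes a genuinely different route from the paper's. The paper argues by induction on $|S|$. Whenever two chores $e,e'$ satisfy $c_1(e)\ge c_1(e')$ and $c_2(e)\ge c_2(e')$, it replaces them by a virtual chore with costs $c_i(e)-c_i(e')$, recurses, and then puts $e$ on the virtual chore's side and $e'$ on the opposite side. Once no such comparable pair remains, the two cost orders are exactly reversed. Alternating along the $c_1$-order then telescopes for $c_1$ and, read backwards, for $c_2$.

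Both proofs rely on the same telescoping fact for alternation along a sorted sequence, but you use it for $c_1$ only. Pairing consecutive items in the $c_1$-order makes \emph{every} orientation of the pairs $c_1$-balanced. This frees the orientations to be chosen by a one-dimensional greedy on the pair differences $\beta_t$, whose absolute values are at most $\max_{e} c_2(e)$ because both costs in a pair lie in $[0,\max_e c_2(e)]$.

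Your argument is shorter and needs no induction or repeated search for dominated pairs; it is a sort plus a single linear pass. It also gives the extra property $\bigl||P|-|Q|\bigr|\le 1$, which the paper's merging procedure need not produce. The paper's argument, in exchange, treats the two cost functions symmetrically and exposes the structural reduction to the oppositely sorted case. For the lemma as stated, the two are equally strong.
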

\begin{proof}
    We proceed by induction on $|S|$. The base case $|S| = 0$ holds trivially.
    Suppose first that there exist two distinct chores $e, e' \in S$ that satisfy $c_1(e) \ge c_1(e')$ and $c_2(e) \ge c_2(e')$. 
    Then we can replace $e$ and $e'$ with a single virtual chore $e^*$ having costs
    $c_i(e^*) = c_i(e) - c_i(e')$ for $i \in N$.
    These costs are non-negative, and the maximum single-chore cost does not increase for either agent. By applying the induction hypothesis to this reduced set, we obtain a partition $P' \cup Q'$. Replacing $e^*$ with $e$ in its corresponding part and placing $e'$ in the opposite part preserves the cost difference $c_i(P) - c_i(Q)$ for both agents, so the desired bound holds for $S$.

    Otherwise, the two agents order every pair of chores in opposite directions. We can therefore index $S = \{e_1, \dots, e_{|S|}\}$ such that
    \begin{equation*}
c_1(e_1) \ge \cdots \ge c_1(e_{|S|}) \qquad \text{and} \qquad c_2(e_1) \le \cdots \le c_2(e_{|S|}).
\end{equation*}
    By assigning chores alternately to $P$ and $Q$ according to this order, we obtain
    \begin{equation*}
|c_1(P) - c_1(Q)| = c_1(e_1) - (c_1(e_2) - c_1(e_3)) - (c_1(e_4) - c_1(e_5)) - \cdots \le c_1(e_1).
\end{equation*}
    Applying the symmetric argument to agent $2$ by reading the ordered sequence in reverse yields $|c_2(P) - c_2(Q)| \le c_2(e_{|S|})$.

    The proof is constructive. Each recursive step reduces the number of chores by one, and a comparable pair $e,e'$ can be found by checking all pairs of remaining chores. Therefore, there are at most $|S|-1$ recursive steps, each requiring polynomial time. The resulting partition can thus be computed in time polynomial in the input bit size.
\end{proof}
In the following, we establish the partition bounds for both cases, beginning with coincident nominations. 
When $e_1=e_2=e$, we have $c_i\!\left(e\right)=c_{i,(1)}$ for both agents. 
We thus seek a partition of $M$ such that both parts cost each agent at most $4/3$ of her MMS.
Lemma~\ref{lem:common-top-balancing} establishes the existence of such a partition and guarantees that each part costs agent $i$ at most $4\mu_i/3$.
On the other hand, if $e_1 \neq e_2$, each agent receives the chore nominated by the other, while the remaining chores are partitioned into two parts and assigned uniformly at random. Lemma~\ref{lem:distinct-top-balancing} ensures that, regardless of which part agent $i$ receives, her total cost (including the assigned nomination) is at most $5\mu_i/4$.

\begin{lemma}\label{lem:common-top-balancing}
    If $e_1=e_2=e$, there is a polynomial-time algorithm that computes $P\subseteq M-e$ such that
    \begin{equation*}
\max\left\{c_i\!\left(P+e\right),c_i\!\left(M\setminus\left(P+e\right)\right)\right\}\le\frac43\cdot \mu_i \qquad\text{for both }i\in N.
\end{equation*}
\end{lemma}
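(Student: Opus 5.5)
The plan is to reduce the lemma to the balancing tool of Lemma~\ref{lem:two-dimensional-balancing}, applied to a modified instance in which $e$ is replaced by a virtual chore, and then to use MMS structure to absorb the slack. Normalize so that $\mu_1=\mu_2=1$. By Lemma~\ref{lem:basic-mms}, $c_i(M)\le 2$ and $c_i(e)=c_{i,(1)}\le 1$.

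First, the per-agent requirement. Write $S=M-e$ and $Q=S\setminus P$. The lemma asks that $c_i(P)+c_i(e)\le 4/3$ and $c_i(Q)\le 4/3$. Since $c_i(P)+c_i(e)+c_i(Q)=c_i(M)\le 2$, it suffices that the signed gap $D_i:=c_i(Q)-\bigl(c_i(P)+c_i(e)\bigr)$ lies in $[-(8/3-c_i(M)),\,8/3-c_i(M)]$. This window has half-width at least $2/3$, so a gap of magnitude at most $2/3$ always suffices.

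Second, I would run the argument of Lemma~\ref{lem:two-dimensional-balancing} on $S\cup\{e^*\}$, where $c_i(e^*)=c_i(e)$ for both $i$, with $e^*$ forced onto the side of $P$. Because $e$ is the common top chore, $e^*$ dominates every other chore for both agents. The first cancellation step may therefore pair $e^*$ with any $e'\in S$: it replaces them by a residual chore of cost $c_i(e)-c_i(e')$ and puts $e'$ on the opposite side. I would choose $e'$ to be agent~$1$'s second chore, or a chore that is large for both agents. The alternating phase then yields $|D_i|\le$ the maximum residual cost, which is at most $\max\{c_i(e)-c_i(e'),\,c_{i,(2)}\}$.

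Third, I split into cases using the structure of MMS partitions.
\begin{itemize}
\item If $c_{i,(2)}\le 2/3$ and $c_i(e)-c_i(e')\le 2/3$ for both agents, the first step already gives $|D_i|\le 2/3$, and we are done.
\item If some agent has $c_{i,(2)}>2/3$, then in her MMS partition $e$ and her second chore lie in different bundles, since $c_i(e)+c_{i,(2)}>4/3>1$. Hence the rest of $S$ costs her at most $2-c_i(e)-c_{i,(2)}<2/3$. I would put her second chore in $Q$, which makes the requirement for her easy. Only the other agent then needs balancing, which the one-dimensional case of the lemma handles.
\item If $c_i(e)$ is large but $c_{i,(2)}$ is small, the remaining cost $2-c_i(e)$ is small. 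The window for $D_i$ becomes asymmetric, and I would exploit this by biasing small chores toward $Q$.
\end{itemize}

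The main obstacle is the mixed case, where the two agents have different second chores and each has $c_i(e)$ between $2/3$ and $1$. Then neither the cancellation choice nor a single greedy order controls both gaps simultaneously. There I would try a direct combinatorial construction instead: enumerate the few chores costing more than $1/3$ to either agent (there are at most a constant number), fix their sides by brute-force case analysis, and distribute the remaining small chores with Lemma~\ref{lem:two-dimensional-balancing}, whose error term is then below $1/3$. This keeps the construction polynomial-time.
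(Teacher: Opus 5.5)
\textbf{There is a genuine gap, and it sits in a case you did not identify.} Your bullet for an agent with $c_{i,(2)}>\frac23$ is fine: any split separating $e$ from her second chore works for her. But the case you single out as the main obstacle is easy. If both agents have $c_i(e)>\frac23$, then $P=\emptyset$ already works, since $c_i(e)\le 1$ and $c_i(M-e)\le 2-c_i(e)<\frac43$.

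The genuinely hard case is the mixed one: agent~$1$ has a large top chore and agent~$2$ a small one. Agent~$1$ needs $c_1(P)\le\frac43-c_1(e)$, which can be about $\frac13$. Agent~$2$ needs $c_2(P)\ge c_2(M)-c_2(e)-\frac43$, which can be about $\frac23$. So $P$ must be deliberately \emph{unbalanced}: light for one agent and moderately heavy for the other. Every tool you propose returns a nearly even split of the small chores. This holds both for cancelling $e^*$ against some $e'$ and then alternating, and for brute-forcing the sides of chores above $\frac13$ and splitting the rest with Lemma~\ref{lem:two-dimensional-balancing}.

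An even split provably fails. Take odd $K\ge 7$. Let agent~$1$ have $c_1(e)=1$ and cost $1/K$ on each of the other $K$ chores. Let agent~$2$ have cost $2/(K+1)$ on every chore, with $e$ first in the tie-breaking order. Then $\mu_1=\mu_2=1$, and only $e$ costs more than $\frac13$ to anyone. Your guarantee $|D_2|\le 2/(K+1)$ forces $|P+e|=|Q|=(K+1)/2$; in the fallback it forces $|P|\ge (K-1)/2$. Either way $c_1(P+e)\ge\frac{3K-1}{2K}>\frac43$. A valid $P$ does exist (any $|P|\in[\frac{K-2}{3},\frac{K}{3}]$), but nothing in your plan finds it. Your phrase ``biasing small chores toward $Q$'' names exactly the step that needs a concrete mechanism.

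\textbf{The missing idea is a way to produce a set that is heavy enough for agent~$2$ yet light for agent~$1$.} The paper splits cases by comparing $c_i(e)$ with $c_i(M)/3$, not with $\frac23\mu_i$.
If both $c_i(e)\le c_i(M)/3$, applying Lemma~\ref{lem:two-dimensional-balancing} to all of $M$ gives parts of cost at most $\frac{c_i(M)+c_i(e)}{2}\le\frac23 c_i(M)$.
If both $c_i(e)\ge c_i(M)/3$, then $P=\emptyset$ works.
In the mixed case $c_1(e)>c_1(M)/3$, $c_2(e)<c_2(M)/3$, proceed as follows:
\begin{itemize}
\item Set $\tau=c_2(M)/3-c_2(e)$.
\item Greedily carve three pairwise disjoint sets $S_1,S_2,S_3\subseteq M-e$ with $\tau\le c_2(S_r)<c_2(M)/3$. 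This is possible because removing any two of them leaves $c_2$-cost above $\tau$.
\item Let $P$ be the one of least $c_1$-cost. Averaging gives $c_1(P)\le c_1(M-e)/3$, hence $c_1(P+e)\le\frac{c_1(M)+2c_1(e)}{3}\le\frac43\mu_1$ and $c_1(M\setminus(P+e))<\frac23 c_1(M)$.
\item The two-sided window on $c_2(P)$ gives both of agent~$2$'s bounds.
\end{itemize}
This ``best of three disjoint candidates'' step trades agent~$2$'s lower requirement against agent~$1$'s upper requirement. No balancing-based procedure can replace it.
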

\begin{proof}
    In this case, $e$ is the unique chore in $H_{i,[1]}$ for both agents and $c_i\!\left(e\right)=c_{i,(1)}$.
    We distinguish three cases according to how $c_i(e)$ compares with $c_i(M)/3$ for the two agents.

    Suppose first that $c_i(e)\le c_i(M)/3$ holds for both agents.
    We apply Lemma~\ref{lem:two-dimensional-balancing} to $M$.
    Let $A\cup B=M$ be the resulting partition and assume without loss of generality that $e\in A$.
    Since $e$ is a largest chore for each agent, we have
    \begin{equation*}
    \max\left\{c_i(A),c_i(B)\right\}
    \le \frac{c_i(M)+c_i(e)}{2}
    \le \frac{2}{3}\cdot c_i(M)
    \le \frac{4}{3}\cdot\mu_i.
    \end{equation*}
    Thus, $P=A-e$ is the desired subset.

    Suppose next that $c_i(e)\ge c_i(M)/3$ holds for both agents.
    Choose $P=\emptyset$.
    The two parts are then $\{e\}$ and $M-e$.
    For both agents, we have
    \begin{equation*}
    c_i(e)\le\mu_i
    \qquad\text{and}\qquad
    c_i(M-e)\le\frac{2}{3}\cdot c_i(M)\le\frac{4}{3}\cdot\mu_i.
    \end{equation*}

    It remains to consider the mixed case.
    By symmetry, assume $c_1(e)>\frac{c_1(M)}{3}$ and $c_2(e)<\frac{c_2(M)}{3}$.
    Define $\tau:=\frac{c_2(M)}{3}-c_2(e)>0$.
    We construct three pairwise disjoint subsets $S_1,S_2,S_3\subseteq M-e$.
    For $r=1,2,3$, greedily add arbitrary unused chores to $S_r$ until $c_2(S_r)\ge\tau$.
    Since $e$ is a largest chore for agent $2$, every chore in $M-e$ has cost at most $c_2(e)$.
    By the minimality of the greedy construction, we have
    \begin{equation*}
    \tau\le c_2(S_r)<\tau+c_2(e)=\frac{c_2(M)}{3}.
    \end{equation*}
    The three sets can indeed be constructed.
    After removing any two of them, the remaining $c_2$-cost is strictly larger than
    \begin{equation*}
    c_2(M-e)-\frac{2}{3}\cdot c_2(M)
    =\frac{c_2(M)}{3}-c_2(e)
    =\tau.
    \end{equation*}

    Among $S_1,S_2,S_3$, choose $P$ with minimum $c_1$-cost.
    Since these sets are pairwise disjoint, we have $c_1(P)\le\frac{c_1(M-e)}{3}$.
    For agent $1$, we therefore have
    \begin{equation*}
    c_1(P+e)
    \le c_1(e)+\frac{c_1(M-e)}{3}
    =\frac{c_1(M)+2\cdot c_1(e)}{3}
    \le\frac{4}{3}\cdot\mu_1,
    \end{equation*}
    where the last inequality follows from $c_1(M)\le2\cdot\mu_1$ and $c_1(e)\le\mu_1$.
    Moreover,
    \begin{equation*}
    c_1\!\left(M\setminus\left(P+e\right)\right)
    \le c_1(M-e)
    <\frac{2}{3}\cdot c_1(M)
    \le\frac{4}{3}\cdot\mu_1.
    \end{equation*}

    For agent $2$, the lower bound $c_2(P)\ge\tau$ gives
    \begin{equation*}
    c_2\!\left(M\setminus\left(P+e\right)\right)
    =c_2(M-e)-c_2(P)
    \le\frac{2}{3}\cdot c_2(M)
    \le\frac{4}{3}\cdot\mu_2.
    \end{equation*}
    The upper bound $c_2(P)<c_2(M)/3$ gives
    \begin{equation*}
    c_2(P+e)
    <c_2(e)+\frac{c_2(M)}{3}
    <\frac{2}{3}\cdot c_2(M)
    \le\frac{4}{3}\cdot\mu_2.
    \end{equation*}
    Hence $P$ satisfies the desired bounds for both agents.

    The construction uses only cost comparisons, one call to Lemma~\ref{lem:two-dimensional-balancing}, and at most three greedy scans of the chores. It therefore runs in time polynomial in the input bit size.
\end{proof}

Next, we consider the case where $e_1 \neq e_2$.

\begin{lemma}\label{lem:distinct-top-balancing}
    If $e_1\ne e_2$, there is a polynomial-time algorithm that computes a partition
    $M\setminus\{e_1,e_2\}=P\cup Q$ such that
    \begin{equation*}
    \max\left\{c_i\!\left(P+e_{3-i}\right),c_i\!\left(Q+e_{3-i}\right)\right\}\le\frac54\cdot \mu_i \qquad\text{for }i\in N.
\end{equation*}
\end{lemma}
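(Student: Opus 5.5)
The plan is to apply Lemma~\ref{lem:two-dimensional-balancing} once to $S:=M\setminus\{e_1,e_2\}$ with the two reported cost functions. Let $P\cup Q$ be the resulting partition. For $i\in N$, write $m_i:=\max_{e\in S}c_i(e)$, with $m_i=0$ if $S=\emptyset$. The lemma gives $|c_i(P)-c_i(Q)|\le m_i$, so each part costs agent $i$ at most $\left(c_i(S)+m_i\right)/2$. Polynomial running time is inherited from Lemma~\ref{lem:two-dimensional-balancing}. By symmetry it then suffices to bound agent $1$, who receives $e_2$ together with one of the two parts.

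For agent $1$ the bound becomes
\begin{equation*}
c_1(e_2)+\frac{c_1(M)-c_1(e_1)-c_1(e_2)+m_1}{2}=\frac{c_1(M)-c_1(e_1)+c_1(e_2)+m_1}{2}.
\end{equation*}
Using $c_1(M)\le 2\mu_1$ (Lemma~\ref{lem:basic-mms} with $n=2$), it is enough to show
\begin{equation*}
c_1(e_2)+m_1\le c_1(e_1)+\frac{\mu_1}{2},
\end{equation*}
since then the displayed quantity is at most $\mu_1+\mu_1/4=5\mu_1/4$.

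The key step is to control $c_1(e_2)+m_1$ through the order statistics of $c_1$. Since $e_1$ is agent $1$'s top chore, $e_2$ is another chore, and $m_1$ is attained by a chore in $S$ distinct from both, the two quantities $c_1(e_2)$ and $m_1$ are costs of two distinct chores other than $e_1$. Hence one is at most $c_{1,(2)}$ and the other at most $c_{1,(3)}$, giving
\begin{equation*}
c_1(e_2)+m_1\le c_{1,(2)}+c_{1,(3)}.
\end{equation*}
It remains to bound the two order statistics. First, $c_{1,(2)}\le c_{1,(1)}=c_1(e_1)$. Second, in any $2$-partition two of the top three chores share a bundle, so $\mu_1\ge c_{1,(2)}+c_{1,(3)}\ge 2c_{1,(3)}$, i.e., $c_{1,(3)}\le\mu_1/2$. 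Together these give exactly the required inequality. If fewer than three chores exist, the missing order statistics are treated as $0$ and the argument simplifies.

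The main point needing care is this pairing argument: the guarantee of Lemma~\ref{lem:two-dimensional-balancing} is stated with the maximum over $S$, and it must be checked that this maximum and $c_1(e_2)$ are charged to different order statistics. Everything else is routine, and the same computation with the roles of the agents swapped handles agent $2$.
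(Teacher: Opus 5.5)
Your proposal is correct and follows essentially the same route as the paper: a single call to Lemma~\ref{lem:two-dimensional-balancing} on $M\setminus\{e_1,e_2\}$, then reducing to $c_i(e_{3-i})+\max_{e\in S}c_i(e)\le c_i(e_i)+\mu_i/2$ via the fact that two of any three chores share a bundle. The only difference is cosmetic: you charge the two costs to $c_{i,(2)}\le c_i(e_i)$ and $c_{i,(3)}\le\mu_i/2$, whereas the paper bounds the maximum by $\min\{c_i(e_i),\mu_i-c_i(e_{3-i})\}$ and then notes that a minimum of two quantities summing to at most $\mu_i$ is at most $\mu_i/2$.
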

\begin{proof}
    Let $R=M\setminus\{e_1,e_2\}$. If $R=\emptyset$, take
    $P=Q=\emptyset$.
    Otherwise, Lemma~\ref{lem:two-dimensional-balancing} gives a
    partition $R=P\cup Q$ that simultaneously satisfies, for both agents,
    \begin{equation*}
        \max\left\{c_i\!\left(P\right),c_i\!\left(Q\right)\right\} =\frac{c_i(R)+|c_i(P)-c_i(Q)|}{2} \le\frac{c_i(R)+\max_{e\in R}\left\{c_i\!\left(e\right)\right\}}{2}.
    \end{equation*}     
    We now bound the cost after adding the other agent's nominated
    chore $e_{3-i}$. Fix an agent $i$. Since $e_i$ is her
    largest chore, every $e\in R$ satisfies $c_i(e)\le c_i(e_i)$.
    We also have $c_i(e)+c_i(e_{3-i})\le\mu_i$ as at least two of $\{e_i,e_{3-i},e\}$ must belong to the same bundle under any two-partition of $M$.
    Combining the two bounds gives $\max_{e\in R}\left\{c_i\!\left(e\right)\right\} \le\min\left\{c_i\!\left(e_i\right),\mu_i-c_i\!\left(e_{3-i}\right)\right\}$.
    Moreover, $c_i(M)\le2\mu_i$ implies
    \begin{equation*}
        c_i(R)=c_i(M)-c_i(e_i)-c_i(e_{3-i}) \le2\mu_i-c_i(e_i)-c_i(e_{3-i}).
    \end{equation*}
    Substituting these bounds into the balanced-partition guarantee,
    we obtain
    \begin{equation*}
\begin{aligned}
        \max\left\{c_i\!\left(P+e_{3-i}\right),c_i\!\left(Q+e_{3-i}\right)\right\}
        &\le c_i(e_{3-i})+\frac{c_i(R)+\max_{e\in R}\left\{c_i\!\left(e\right)\right\}}2\\
        &\le\mu_i+\frac{c_i(e_{3-i})-c_i(e_i)
             +\min\left\{c_i\!\left(e_i\right),\mu_i-c_i\!\left(e_{3-i}\right)\right\}}2\\
        &=\mu_i+\frac12\cdot \min\left\{c_i\!\left(e_{3-i}\right),\mu_i-c_i\!\left(e_i\right)\right\}.
    \end{aligned}
\end{equation*}
    Finally, $c_i(e_{3-i})\le c_i(e_i)$, so the two quantities in
    the minimum sum to at most $\mu_i$. Their minimum is therefore
    at most $\mu_i/2$, which implies the desired bound
    $\mu_i+\frac12\cdot\frac{\mu_i}{2}=\frac54\mu_i$.

    The partition is obtained by a single call to Lemma~\ref{lem:two-dimensional-balancing}, and hence can be computed in time polynomial in the input bit size.
\end{proof}

\medskip

\begin{algorithm}[H]
    \caption{Two-agent truthful $4/3$-MMS mechanism}
    \label{alg:two-agent-nomination}
    \KwIn{Reported cost functions $(c_1,c_2)$}
    \KwOut{An allocation $\bX=\left(X_1,X_2\right)$}
    For each $i\in N$, let $e_i$ be the unique chore in $H_{i,[1]}$\;
    \eIf{$e_1=e_2$}{
        Let $e\gets e_1$ and compute $P\subseteq M-e$ by Lemma~\ref{lem:common-top-balancing}\;
        Set $A\gets P+e$ and $B\gets M\setminus A$\;
        Output $(A,B)$ or $(B,A)$, each with probability $1/2$\;
    }{
        Compute $P\cup Q=M\setminus\{e_1,e_2\}$ by Lemma~\ref{lem:distinct-top-balancing}\;
        Output $(P+e_2,Q+e_1)$ or $(Q+e_2,P+e_1)$, each with probability $1/2$\;
    }
\end{algorithm}

\medskip

\begin{theorem}\label{thm:two-agent-nomination}
    Mechanism~\ref{alg:two-agent-nomination} runs in polynomial time and is TIE, ex-ante EF,
    and ex-post $4/3$-MMS for two agents.
\end{theorem}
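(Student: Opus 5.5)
We prove the three claims of \Cref{thm:two-agent-nomination} separately: running time, the ex-post guarantee, and TIE together with ex-ante EF. Running time is immediate, because the mechanism makes one call to Lemma~\ref{lem:common-top-balancing} or to Lemma~\ref{lem:distinct-top-balancing} and then flips a fair coin.

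\emph{Ex-post $4/3$-MMS.} This follows from the two partition lemmas by reading off the supported allocations.
\begin{itemize}
\item If $e_1=e_2$, each agent receives $A=P+e$ or $B=M\setminus(P+e)$. Lemma~\ref{lem:common-top-balancing} bounds both bundles by $\frac43\mu_i$.
\item If $e_1\neq e_2$, agent $i$ receives $P+e_{3-i}$ or $Q+e_{3-i}$. Lemma~\ref{lem:distinct-top-balancing} bounds both bundles by $\frac54\mu_i\le\frac43\mu_i$.
\end{itemize}
Both lemmas are stated for the true costs. Since the guarantee is required at every reported profile, applying them to the reported costs suffices.

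\emph{TIE and ex-ante EF.} We compute marginals. Fix agent $2$'s report, and hence her nomination $e_2$. For every realization of agent $1$'s report, the mechanism outputs a partition $\{A,B\}$ of $M$ and assigns the two parts in uniformly random order.
\begin{itemize}
\item If $e_1=e_2$, then $e_2$ lies in one part. Each chore goes to agent $1$ with probability $1/2$.
\item If $e_1\ne e_2$, agent $1$ gets $e_2$ with probability $1$ and $e_1$ with probability $0$. Every other chore goes to her with probability $1/2$.
\end{itemize}
So in both cases $\Pr[e\in X_1]=\frac12+\frac12\mathbf 1(e=e_2)-\frac12\mathbf 1(e=e_1)$, and this formula also covers $e_1=e_2$. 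Hence
\[
\E[c_1(X_1)]=\frac{c_1(M)}2+\frac{c_1(e_2)-c_1(e_1)}2,
\]
where $c_1$ is agent $1$'s true cost. Her report affects this only through the nominated chore $e_1$. Truthful reporting nominates her most costly chore, which maximizes $c_1(e_1)$ and so minimizes her expected cost. This proves TIE, and the case of agent $2$ is symmetric.

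For ex-ante EF, the marginals give
\[
\E[c_1(X_2)]-\E[c_1(X_1)]=c_1(e_1)-c_1(e_2)\ge0,
\]
since $e_1$ is agent $1$'s largest chore under truthful reporting.

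The only delicate step is the uniform marginal formula: we must check that it holds whether or not the nominations coincide, and it does.
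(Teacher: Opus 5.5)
Your proof is correct and follows the paper's argument essentially step for step. You read the ex-post bound off Lemmas~\ref{lem:common-top-balancing} and~\ref{lem:distinct-top-balancing}, and you derive TIE and ex-ante EF from the same marginal formula $\Pr[e\in X_i]=\frac12+\frac12\mathbf{1}(e=e_{3-i})-\frac12\mathbf{1}(e=e_i)$, under which the report matters only through $-c_i(e_i)/2$. One wording slip: when $e_1\neq e_2$, the mechanism does not randomly order a fixed partition of $M$; it randomly orders the partition of $M\setminus\{e_1,e_2\}$ while always giving $e_2$ to agent $1$. Your case analysis handles this correctly, so the marginals are unaffected.
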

\begin{proof}
    The ex-post $4/3$-MMS guarantee follows from our analysis for the two cases depending on whether $e_1 = e_2$.
    To establish truthfulness, fix agent $i$'s true cost function $c_i$ and the other agent's reported nomination $e_{3-i}$. Under either execution branch, the marginal probability of agent $i$ receiving any specific chore $e \in M$ is given by
    \begin{equation*}
\Pr[e \in X_i] = \frac{1}{2} + \frac{1}{2} \cdot \mathbf{1}(e = e_{3-i}) - \frac{1}{2} \cdot \mathbf{1}(e = e_i).
\end{equation*}
    By linearity of expectation, agent $i$'s expected cost under either branch simplifies to
    \begin{equation*}
    \E[c_i(X_i)] = \frac{c_i(M)}{2} + \frac{c_i(e_{3-i})}{2} - \frac{c_i(e_i)}{2}.
    \end{equation*}
    Since agent $i$'s report influences her expected cost through the term $-c_i(e_i)/2$, reporting her true largest chore maximizes $c_i(e_i)$ and minimizes her expected total cost.
    Under truthful reporting, we have $\E[c_i(X_i)] \leq c_i(M)/2$, which implies ex-ante EF since $\E[c_i(X_{3-i})] \geq c_i(M)/2$.

    Finally, finding the two nominated chores and computing all required cost sums take polynomial time. Lemmas~\ref{lem:two-dimensional-balancing}, \ref{lem:common-top-balancing}, and~\ref{lem:distinct-top-balancing} provide polynomial-time algorithms for the required partitions. The remaining operations consist only of comparisons, set operations, and sampling a fair coin. Therefore, Mechanism~\ref{alg:two-agent-nomination} runs in time polynomial in the input bit size.
\end{proof}

\end{document}